 \newtheorem{thm}{Theorem}[section]
 \newtheorem{cor}[thm]{Corollary}
 \newtheorem{lemma}[thm]{Lemma}
 \newtheorem{prop}[thm]{Proposition}
 \theoremstyle{definition}
 \newtheorem{defn}[thm]{Definition}
 \theoremstyle{remark}
 \newtheorem{remark}[thm]{Remark}
 \numberwithin{equation}{section}
\numberwithin{figure}{section}
\newcommand{\norm}[1]{\left\Vert#1\right\Vert}
\newcommand{\set}[1]{\left\{ #1\right\} }
\newcommand{\C}{\mathbb{C}}
\newcommand{\R}{\mathbb{R}}
\newcommand{\Z}{\mathbb{Z}}
\newcommand{\SSS}{{\mathbb S}}
\newcommand{\N}{\mathbb{N}}
\newcommand{\M}{\mathbb{M}}
\renewcommand{\Re}{\operatorname{Re}}
\DeclareMathOperator{\supp}{supp}
\DeclareMathOperator{\spann}{span}
\DeclareMathOperator{\dist}{dist}
\def\idty{{\mathchoice {\mathrm{1\mskip-4mu l}} {\mathrm{1\mskip-4mu l}} %
{\mathrm{1\mskip-4.5mu l}} {\mathrm{1\mskip-5mu l}}}}
\newcommand\beq{\begin{equation}}
\newcommand\eeq{\end{equation}}
\newcommand{\Tr}{\operatorname{{\rm Tr}}}
\numberwithin{equation}{section}
\pgfplotsset{compat=1.15}
\begin{document}
\definecolor{qqqqff}{rgb}{0,0,1}

\title{Entanglement Entropy Bounds in the Higher Spin XXZ Chain}

\author[C. Fischbacher]{Christoph Fischbacher$^1$}
\address{$^1$ Department of Mathematics\\
	University of California, Irvine\\
	Irvine, CA, 92697, USA}
\email{fischbac@uci.edu}

\author[O. Ogunkoya]{Oluwadara Ogunkoya$^2$}
\address{$^2$ Department of Mathematics\\
	University of Alabama at Birmingham\\
	Birmingham, AL, 35294, USA}
\email{ogunkoya@uab.edu}

\date{\today}
%\vspace{.3truein}
%\centerline{\bf Abstract}\mdeskip
%
\begin{abstract}
We consider the Heisenberg XXZ spin-$J$ chain ($J\in\mathbb{N}/2$) with anisotropy parameter $\Delta$. Assuming that $\Delta>2J$, and introducing threshold energies $E_{K}:=K\left(1-\frac{2J}{\Delta}\right)$, we show that the bipartite entanglement entropy (EE) of states belonging to any spectral subspace with energy less than $E_{K+1}$ satisfy a logarithmically corrected area law with prefactor $(2\lfloor K/J\rfloor-2)$.

This generalizes previous results by Beaud and Warzel \cite{BW18} as well as Abdul-Rahman, Stolz and one of the authors \cite{ARFS}, who covered the spin-$1/2$ case.

\end{abstract}

\maketitle

%\tableofcontents

%\appendix
%\section{}
\section{Introduction}
In the following, we are going to prove upper bounds for the bipartite entanglement entropy of the spin--$J$ XXZ Heisenberg chain, where $J\in\{1/2,1,3/2,2\dots\}$. The spin--$1/2$ Heisenberg XXZ chain under the presence of a random magnetic field has recently attracted significant interest in the rigorous study of many--body--localization (MBL), where MBL phenomena such as exponential clustering  
 of correlations, zero--velocity Lieb--Robinson bounds and area laws for the entanglement entropy (EE) for states from the lowest energy regime (``droplet" regime) have been shown recently in \cite{BW17,BW18,EKS, EKS2017b}, for a recent survey of these results, cf.\ also \cite{StolzProc}. 

Recall that for one-dimensional models such as the chain, we say that the EE satisfies an area law with respect to the bipartition of the chain into a ``right" and a ``left" subchain if it is uniformly bounded in the size of the subchain. While an area law is generally considered as an indicator of MBL, many delocalized systems seem to exhibit a logarithmic correction, which means that the EE scales like the logarithm of the subchain's length. See the results in \cite{BW18, FSch, LeschkeSobolevSpitzer14,LeschkeSobolevSpitzer17,Mulleretal,mller2020stability,PfirschSobolev18,Wolf:2006ek}, where such logarithmic corrections to an area for the EE have been obtained. 

In what follows, we will adapt the ideas of Beaud and Warzel from \cite{BW18}, where a log--corrected area law for droplet states of the spin--$1/2$ Heisenberg XXZ chain for the generic case and a true area law under the presence of a disordered magnetic field was shown. With the help of technical refinements, this result was improved to show a log--corrected energy law also for higher--energy states in \cite{ARFS}.

While our results can certainly be viewed as a technical improvement of the spin-$1/2$ case, we nevertheless believe them to be of additional interest, since they demonstrate that given a discrete many--particle Schr\"odinger-type operator in one dimension, a logarithmically corrected area law for states in a given energy range follows if the following four criteria are met: (i) a suitable relative bound that controls the hopping operator in terms of the potential, (ii) the potential energetically favoring configurations with a lower number of ``building blocks", (iii) a sufficiently low dimension of the space of these building blocks and (iv) a generalized Pauli principle that limits the amount of particles that can occupy any site. 

We will proceed as follows:

In Section \ref{sec:2}, we will introduce the model and review previous results. We recall that the Spin-J XXZ model on the chain with anisotropy parameter $\Delta$ is unitarily equivalent to a direct sum of discrete $N$--particle Schr\"odinger operators of the form $-(2\Delta)^{-1}A_N+V_N$, where $A_N$ is a weighted  adjacency operator and $V_N$ is an interaction potential (Prop.\ \ref{prop:equivalent}). Compared to the previously studied spin-$1/2$ case, $V_N$ is much more complicated and we determine the set of all its minimizing configurations for sufficiently large $N$ (Prop.\ \ref{prop:minimizer}). 
Using that the kinetic term $A_N$ is controlled by the potential $V_N$ in the sense that $-4JV_N\leq A_N\leq 4JV_N$ (Prop. \ref{prop:relativebound}), we use a suitable Combes--Thomas estimate (Thm.\ \ref{prop:CT}) previously shown in \cite{ARFS}, which allows as to obtain decay estimates on spectral projections (Thm.\ \ref{thm:specestimate}). This together with some previous results on estimating the entanglement entropy (Lemma \ref{lemma:traceestimate}) reduces the problem to showing bounds for the sum of all possible $N$-particle configurations in the subsystem weighted by their distance to the nearest configuration with sufficiently low potential (Cor.\ \ref{cor:traceestimate2}).

Showing these bounds is the main goal of Section \ref{sec:3}: We firstly prove a suitable bound for single--cluster configurations in the spin-$1/2$ case (Lemma \ref{lemma:geom1d}), which we appropriately generalize to certain low--potential configurations for the spin-$J$ case (Lemma \ref{lemma:geom2J}). We then introduce the notion of ``building blocks" (Def.\ \ref{def:buildingblocks}) out of which any $N$--particle configuration can be composed. Exploiting that if an $N$-particle configuration has potential energy of $K$ or less, then it can be composed out of no more than $(\lfloor K/J\rfloor-1)$ building blocks, we obtain a suitable bound for configurations of arbitrary high potential energy (Lemma \ref{lemma:series}). Putting everything together, we prove our main result (Thm.\ \ref{thm:main}), where we show that states belonging to any spectral subspace with energy strictly less than $E_{K+1}=(K+1)
\left(1-\frac{2J}{\Delta}\right) $ satisfy a logarithmically corrected area law with prefactor $(2\lfloor K/J\rfloor-2)$.
\\\\
\noindent {\bf Acknowledgements:} C.F.~ is grateful to the Insitut Mittag-Leffler in Djursholm, Sweden, where some of this work was done as part of the program Spectral Methods in Mathematical Physics in Spring 2019. The authors would also like to thank G\"unter Stolz for many useful and insightful discussions and Shannon Starr for encouragement and support. It is also a pleasure to thank Abel Klein who suggested the much shorter proof of Theorem \ref{thm:specestimate} presented here.

 \section{Previous Results} \label{sec:2}
For any fixed $J\in\N/2$, we consider the chain of length $L\in\N\setminus\{1\}$, which is described by the Hamiltonian
\begin{equation} \label{eq:Hamiltonian}
H_L=\sum_{j=1}^{L-1}h_{j,j+1}+J(2J-S_1^3-S_L^3)
\end{equation}
acting on the Hilbert space $\mathcal{H}_L=\bigotimes_{j=1}^L\C^{2J+1}=\bigotimes_{j=1}^{L}\mathcal{H}_j$, with the interpretation of $\mathcal{H}_j=\C^{2J+1}$ as being the local Hilbert space describing a spin--$J$ particle located at site $j$. For later convenience, we also introduce the notation $\Lambda_L=\{1,2,\dots,L\}$. The two--site Hamiltonian $h_{j,j+1}$ is given by
\begin{equation}
h_{j,j+1}=J^2-S_{j}^3S_{j+1}^3-\frac{1}{\Delta}(S_j^1S_{j+1}^1+S_{j}^2S_{j+1}^2)=J^2-S_{j}^3S_{j+1}^3-\frac{1}{2\Delta}(S_j^+S_{j+1}^-+S_{j}^-S_{j+1}^+)
\end{equation}
and the additional term $J(2J-S_1^3-S_L^3)$ describes a boundary field; usually referred to as ``droplet boundary condition".
Here $S^1, S^2$ and $S^3$ are the spin--$J$ matrices and $S^\pm=(S^1\pm iS^2)/2$ are the spin raising and lowering operators, respectively (for the full definition of these matrices see e.g.\ \cite{F} or \cite{MNSS}). Labeling an operator $A\in\C^{(2J+1)\times(2J+1)}$ by a site $j$, like e.g. $A_j$, means that it acts as $A$ on the $j$-th factor of the tensor product and as the identity on all the other factors.
In what follows, we assume that the anisotropy parameter $\Delta$ satisfies $\Delta>2J$. While we emphasize that $\Delta>2J$ is not necessary for the operators $h_{j,j+1}$ to be non-negative (see, e.g.\ \cite[p.16]{Starr}), it is certainly sufficient (see Proposition \ref{prop:relativebound} below).

 \subsection{Equivalence to a direct sum of Schr\"odinger-type operators}
 In the following, we are going to discuss the equivalence of the spin-$J$ XXZ Hamiltonian to a direct sum of discrete many-particle Sch\"odinger-type operators. To this end, we will firstly review previous work \cite{F}, where we described configurations of particles via functions ${\bf m}:\Lambda_L\rightarrow\{0,1,\dots,2J\}$, where for each $i\in\Lambda_L$, the value ${\bf m}(i)$ represents number of particles that are located at site $i$ when in configuration $\bf{m}$. 
 
Equivalently, a configuration of $N$ particles distributed over $\Lambda_L$ can be described using ordered multisets $X$ with elements in $\Lambda_L$. Here, the value of the $i$-th element $x_i\in X$ represents the location of the $i$-th particle of a configuration of $N$ particles.

For our purposes, it will be advantageous to use both points of view as they have their respective advantages: the first description involving occupation numbers will be more useful when it comes to analyzing the interaction potential of the Schr\"odinger-type operators. On the other hand, once it has been shown how the entanglement entropy of a state can be estimated using the Combes-Thomas bound for spectral projections, the problem boils down to a combinatorial problem estimating exponentially weighted sums over a large set of many-particle configurations, for which the multiset point of view will be more convenient.

\subsubsection{Previous approach using occupation numbers}
In \cite[Prop.\ 2.1]{F}, it was shown that the Hamiltonian $H_L$ is unitarily equivalent to a direct sum of many--body Schr\"odinger operators. One firstly observes that $H_L$ preserves the total magnetization/particle number: to this end, we define the \emph{local particle number operator} $\mathcal{N}^{loc}:=(J-S^3)$ acting on $\C^{2J+1}$ which has spectrum $\sigma(\mathcal{N}^{loc})=\{0,1,\dots,2J-1,2J\}$. For any site $j\in\{1,\dots,L\}$, we interpret the eigenvalues of $\mathcal{N}^{loc}_j$ as the number of particles located at site $j$. We then define the \emph{total particle number operator} $\mathcal{N}_L$ as
\begin{equation}
\mathcal{N}_L:=\sum_{j=1}^L\mathcal{N}_j^{loc}\:,
\end{equation}
where the eigenvalues  $\sigma(\mathcal{N}_L)=\{0,1,2,\dots,2JL-1,2JL\}$ of $\mathcal{N}_L$ are consequently interpreted as the total number of particles. It can now be verified that $[H_L,\mathcal{N}_L]=0$, and thus we decompose $\mathcal{H}_L=\bigoplus_{N=0}^{2JL}\mathcal{H}_L^N$, where $\mathcal{H}_L^N$ denotes the eigenspace of $\mathcal{N}_L$ corresponding to the eigenvalue $N$ -- the space of all $N$-particle configurations in $\{1,2,\dots,L\}$ with the restriction that no site can be occupied by more than $2J$ particles. We also define $H_L^N:=H_L\upharpoonright_{\mathcal{H}_L^N}$. Now, let
\begin{equation}
{\bf{M}}^N_L:=\left\{{\bf{m}}:\{1,2,\dots,L\}\rightarrow \{0,1,\dots, 2J\}:\sum_{j=1}^L{\bf{m}}(j)=N \right\}\:,
\end{equation}
be the set of all functions from $\{1,2,\dots,L\}$ to $\{0,1,\dots,2J\}$ whose values add up to $N$. For convenience, we also define ${\bf M}_L:=\bigcup_{N=0}^{2JL}{\bf M}_L^N$ -- the set of \emph{all} functions from $\{1,2,\dots,L\}$ to $\{0,1,\dots,2J\}$. Let $\{e_k\}_{k=0}^{2J}$ denote a normalized eigenbasis of $\mathcal{N}^{loc}$, such that for any $k\in\{0,1,\dots,2J\}$, we have $\mathcal{N}^{loc}e_k=k\cdot e_k$. We then define for any ${\bf m}\in {\bf M}_L$
\begin{equation}
\psi_{\bf m}:=\bigotimes_{j=1}^L e_{{\bf m}(j)}\:.
\end{equation}
This means in particular that for any $j\in\{1,2,\dots,L\}$ we get $\mathcal{N}_j^{loc}\psi_{\bf m}={\bf m}(j)\psi_{\bf m}$. In other words, $\psi_{\bf m}$ describes a configuration of particles, where at each site $j\in\{1,2,\dots,L\}$, there are exactly ${\bf m}(j)$ particles. Since
\begin{equation}
\mathcal{N}_L\psi_{\bf m}=\left(\sum_{j=1}^L{\bf m}(j)\right)\psi_{\bf m}\:,
\end{equation}
it immediately follows that 
\begin{equation}
\mathcal{H}_L^N=\spann\left\{\psi_{\bf m}:{\bf m}\in {\bf{M}}^N_L\right\}\:.
\end{equation}
Now, consider the Hilbert space
$ \ell^2({\bf{M}}^N_L)=\left\{f:{\bf{M}}^N_L\rightarrow \C \right\}
$
equipped with inner product $\langle f,g\rangle=\sum_{{\bf m}\in {\bf{M}}^N_L}\overline{f({\bf m})}g({\bf m})$ and let $\{\phi_{\bf m}\}_{{\bf m}\in {\bf{M}}^N_L}$ denote the canonical basis of $\ell^2({\bf{M}}^N_L)$, i.e. 
\begin{equation} \label{eq:canbasis}
\phi_{\bf m}({\bf n})=\begin{cases} &1\quad \mbox{if} \quad {\bf m}={\bf n}\\ &0 \quad \mbox{else.}
\end{cases}
\end{equation}
The Hilbert spaces $\mathcal{H}_L^N$ and $\ell^2({\bf{M}}^N_L)$ are unitarily equivalent via 
\begin{equation} \label{eq:unitary}
U_L^N:\mathcal{H}_L^N\rightarrow\ell^2({\bf{M}}^N_L), \quad \psi_{\bf m}\mapsto \phi_{\bf m}\:.
\end{equation}
For any $f\in\ell^2({\bf{M}}^N_L)$, let us now define the adjacency operator $A_N$, given by
\begin{equation} \label{eq:adj}
(A_Nf)({\bf m})=\sum_{{\bf n}: {\bf n}\sim{\bf m}}w({\bf m},{\bf n})f({\bf n})\:,
\end{equation}
where for two configurations ${\bf m},{\bf n}\in{\bf{M}}^N_L$ to be adjacent (denoted by ${\bf m}\sim{\bf n}$) is defined as follows:
\begin{align}
{\bf m}\sim {\bf n}:\Leftrightarrow&\\ \exists j_0&\in \{1,2,\dots,L-1\}: {\bf m}(j_0)-{\bf n}(j_0)=\pm 1 \mbox{ and } {\bf m}(j_0+1)-{\bf n}(j_0+1)=\mp 1\notag\\
&\qquad\qquad\qquad\qquad\qquad\mbox{ and for any } j\in\{1,2,\dots,L\}\setminus\{j_0,j_0+1\}: {\bf m}(j)={\bf n}(j)\:.
\label{eq:adjacent}
\end{align}
This definition should be interpreted in the following way: two configurations ${\bf m,n}$ of $N$ particles distributed over $L$ sites (with the requirement that no site be occupied by more than $2J$ particles) are adjacent if one configuration can be obtained by moving a single particle from the other configuration to the right or left (cf.\ Figure \ref{fig:Z2disconnected}).
\begin{figure}[ht]
	\centering
	\resizebox{8cm}{6cm}
	{

\begin{tikzpicture}[line cap=round,line join=round,>=triangle 45,x=1cm,y=1cm]
\begin{axis}[
x=1cm,%y=1cm,
axis x line=middle,
axis y line=none,
%ymajorgrids=true,
xmajorgrids=true,
xmin=0.5,
xmax=8.5,
ymin=-0.5,
ymax=5,
xtick={0,1,...,8},
]
\clip(0,0) rectangle (11.32,8.7);
\node at (.7, 3.5) {${\bf m}$};
\node at (.7, 1.5) {${\bf n}$};
\draw [line width=2pt] (1,1)-- (8,1);
\draw [line width=2pt] (1,3)-- (8,3);
\begin{scriptsize}
\draw [fill=qqqqff] (1,3) circle (2.5pt);
\draw [fill=qqqqff] (1,1) circle (2.5pt);
\draw [fill=qqqqff] (2,1) circle (2.5pt);
\draw [fill=qqqqff] (2,3) circle (2.5pt);
\draw [fill=qqqqff] (2,3.4) circle (2.5pt);
\draw [fill=qqqqff] (3,3) circle (2.5pt);
\draw [fill=qqqqff] (2,3.8) circle (2.5pt);
\draw [fill=qqqqff] (5,3) circle (2.5pt);
\draw [fill=qqqqff] (7,3) circle (2.5pt);
\draw [fill=qqqqff] (7,3.4) circle (2.5pt);
\draw [fill=qqqqff] (8,3) circle (2.5pt);
\draw [fill=qqqqff] (8,3.4) circle (2.5pt);
\draw [fill=qqqqff] (2,1.4) circle (2.5pt);
\draw [fill=qqqqff] (2,4.2) circle (2.5pt);
\draw [fill=qqqqff] (2,1.8) circle (2.5pt);
\draw [fill=qqqqff] (3,1) circle (2.5pt);
\draw [fill=qqqqff] (3,1.4) circle (2.5pt);
\draw [fill=qqqqff] (5,1) circle (2.5pt);
\draw [fill=qqqqff] (7,1) circle (2.5pt);
\draw [fill=qqqqff] (8,1) circle (2.5pt);
\draw [fill=qqqqff] (7,1.4) circle (2.5pt);
\draw [fill=qqqqff] (8,1.4) circle (2.5pt);
\node (a) at (2,4.15) {};
\node (b) at (3, 3.2) {};
\draw[->] (a)  to [out=75,in=100, looseness=1.5] (b);
%\node (c) at (1.95,2) {};
%\node (d) at (3.02, 1.34) {};
%\draw[->] (d)  to [out=100,in=75, looseness=1.5] (c);
\end{scriptsize}
\end{axis}
\end{tikzpicture}  
}
\caption{An example of two adjacent configurations ${\bf m, n}\in {\bf M}_8^{11}$ (here: $J= 2$). The values of the functions at a site are represented by blue circles corresponding to particles occupying the respective sites, e.g. ${\bf m}(2)=4$. Since ${\bf m}(2)-{\bf n}(2)=4-3=1$ and ${\bf m}(3)-{\bf n}(3)=1-2=-1$, while ${\bf m}(j)={\bf n}(j)$ for any other site $j$, we have ${\bf m}\sim{\bf n}$. We interpret configuration   ${\bf n}$ as obtained from configuration ${\bf m}$ by a particle hopping to the right from site $2$ to site $3$. (Represented by an arrow emanating from the hopping particle in configuration ${\bf m}$.)}
\label{fig:Z2disconnected}
\end{figure}

For ${\bf m}\sim{\bf n}$, the weight function $w({\bf m,n})=w({\bf n,m})$ in \eqref{eq:adj} is given by
\begin{equation} \label{eq:weight}
w({\bf m,n})=\prod_{j:{\bf m}(j)\neq{\bf n}(j)}\left(J({\bf m}(j)+{\bf n}(j)+1)-{\bf m}(j){\bf n}(j)\right)^{1/2}\:,
\end{equation}
however in what follows, the explicit expression in \eqref{eq:weight} will not play a particularly important role. Moreover, for any ${\bf m, n}\in {\bf M}_L^N$, we define their distance $d^N({\bf m, n})$ to be the length of the shortest path connecting ${\bf m}$ and ${\bf n}$, which we will refer to as graph distance.
\begin{remark}
For more details concerning the construction of the spin-$J$ Heisenberg XXZ model on more general underlying graph (in lieu of just the chain), we refer to \cite{F}, where appropriate $N$-particle graphs were introduced. Similar approaches have been used for the spin-$1/2$ case in \cite{FS18, Ouyang}. Consider also \cite{ART,Carballosa,Rivera} and the references therein, where similar constructions have been studied from a graph-theoretic point of view.
\end{remark}

Next, we define the full interaction potential $V_N$, which is a multiplication operator on $\ell^2({\bf M}_L^N)$, to be given by
\begin{equation} \label{eq:potential}
(V_Nf)({\bf m})=V({{\bf m}})f({\bf m})=\left(\sum_{j=1}^{L-1} v({\bf m}(j),{\bf m}(j+1))\right)f({\bf m})+J({\bf{m}}(1)+{\bf{m}}(L))f(\bf{m})\:,
\end{equation}
where the two--site potential $v$ is given by
\begin{equation}
v({\bf m}(j),{\bf m}(j+1))=J({\bf m}(j)+{\bf m}(j+1))-{\bf m}(j){\bf m}(j+1)
\end{equation}
and we refer to the extra term ``$J({\bf m}(1)+{\bf m}(L))$" as the ``boundary field".

In what follows, the following two facts will be particularly important: firstly that the operator $H_L^N$ is unitarily equivalent to the Schr\"odinger-type operator $-(2\Delta)^{-1}A_N+V_N$ (Prop.\ \ref{prop:equivalent}) and moreover that that kinetic term can be controlled in terms of the potential (Prop.\ \ref{prop:relativebound}):
\begin{prop}[{\cite[Prop.\ 2.1]{F}}] We have the following unitary equivalence:
\begin{equation}
U_L^NH_L^N(U_L^N)^*=-\frac{1}{2\Delta}A_N+V_N=:H_N\equiv H_N(L)\:.
\end{equation}
\label{prop:equivalent}
\end{prop}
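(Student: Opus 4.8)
The plan is to verify the identity directly on the canonical basis $\{\psi_{\bf m}\}_{{\bf m}\in{\bf M}_L^N}$ of $\mathcal{H}_L^N$. Since $U_L^N$ is unitary and sends this orthonormal basis to the orthonormal basis $\{\phi_{\bf m}\}$, while by \eqref{eq:adj}, \eqref{eq:weight} and \eqref{eq:potential} the right-hand operator acts as
\[
\left(-\tfrac{1}{2\Delta}A_N+V_N\right)\phi_{\bf m}=-\frac{1}{2\Delta}\sum_{{\bf n}\sim{\bf m}}w({\bf m},{\bf n})\,\phi_{\bf n}+V({\bf m})\,\phi_{\bf m},
\]
it suffices to show that $H_L$ acts in exactly the same way on each $\psi_{\bf m}$; here one uses that $[H_L,\mathcal{N}_L]=0$, so $H_L\psi_{\bf m}\in\mathcal{H}_L^N$ and $H_L\psi_{\bf m}=H_L^N\psi_{\bf m}$. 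Split $H_L$ into its part diagonal in $\{\psi_{\bf m}\}$, namely $D_L:=\sum_{j=1}^{L-1}(J^2-S_j^3S_{j+1}^3)+J(2J-S_1^3-S_L^3)$, and its hopping part $O_L:=-\tfrac{1}{2\Delta}\sum_{j=1}^{L-1}(S_j^+S_{j+1}^-+S_j^-S_{j+1}^+)$. For $D_L$, recall that $\{e_k\}_{k=0}^{2J}$ is by construction an eigenbasis of $\mathcal{N}^{loc}=J-S^3$, so $S_j^3\psi_{\bf m}=(J-{\bf m}(j))\psi_{\bf m}$; expanding,
\[
J^2-(J-{\bf m}(j))(J-{\bf m}(j+1))=J({\bf m}(j)+{\bf m}(j+1))-{\bf m}(j){\bf m}(j+1)=v({\bf m}(j),{\bf m}(j+1)),
\]
together with $J(2J-(J-{\bf m}(1))-(J-{\bf m}(L)))=J({\bf m}(1)+{\bf m}(L))$, so $D_L\psi_{\bf m}=V({\bf m})\psi_{\bf m}$ by \eqref{eq:potential}.

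For $O_L$, the one computation required is the action of the spin ladder operators on the number eigenbasis. Writing everything in terms of the magnetic quantum number $m=J-k$ and using the standard spin-$J$ formulas, one obtains $S^+e_k=\sqrt{k(2J-k+1)}\,e_{k-1}$ and $S^-e_k=\sqrt{(k+1)(2J-k)}\,e_{k+1}$, with the conventions $e_{-1}:=0$ and $e_{2J+1}:=0$; note that the prefactors vanish precisely at the endpoints of the range, so $S^+$ annihilates the vacuum of $\mathcal{N}^{loc}$ and $S^-$ annihilates its top state, i.e. the constraint of at most $2J$ particles per site is enforced automatically. Hence $S_j^+S_{j+1}^-\psi_{\bf m}$ is a scalar multiple of $\psi_{\bf n}$, where ${\bf n}$ agrees with ${\bf m}$ off $\{j,j+1\}$, ${\bf n}(j)={\bf m}(j)-1$ and ${\bf n}(j+1)={\bf m}(j+1)+1$; this is nonzero exactly when such an ${\bf n}$ again lies in ${\bf M}_L^N$, i.e. ${\bf n}\sim{\bf m}$, and then the scalar equals $\sqrt{{\bf m}(j)(2J-{\bf m}(j)+1)}\cdot\sqrt{({\bf m}(j+1)+1)(2J-{\bf m}(j+1))}$, which a short algebraic manipulation identifies with $w({\bf m},{\bf n})$ of \eqref{eq:weight}. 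The term $S_j^-S_{j+1}^+$ contributes the hop in the opposite direction with the same weight, and since every ${\bf n}\sim{\bf m}$ arises from exactly one such nearest-neighbor hop, summing over $j$ gives $O_L\psi_{\bf m}=-\tfrac{1}{2\Delta}\sum_{{\bf n}\sim{\bf m}}w({\bf m},{\bf n})\psi_{\bf n}$. Adding $D_L\psi_{\bf m}$ and transporting by $U_L^N$ yields $U_L^NH_L^N(U_L^N)^*=-\tfrac{1}{2\Delta}A_N+V_N$.

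The argument is entirely bookkeeping, and the only points that need care are (i) rewriting $h_{j,j+1}$ via the relation between $S^1,S^2$ and $S^\pm$, so that $-\tfrac1\Delta(S^1S^1+S^2S^2)=-\tfrac{1}{2\Delta}(S^+S^-+S^-S^+)$; (ii) matching the product of ladder-operator coefficients with the symmetric weight $w$ in \eqref{eq:weight}; and (iii) checking that no configuration with an out-of-range occupation number ever appears. None of these is a genuine obstacle: the statement is simply the observation that the $XXZ$ generator, restricted to a fixed magnetization sector, is the weighted adjacency (hopping) operator $A_N$ scaled by $-\tfrac{1}{2\Delta}$ plus the diagonal interaction $V_N$ on the space of particle configurations.
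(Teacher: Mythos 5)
Your proof is correct and is a full direct verification on the canonical basis $\{\psi_{\bf m}\}$ --- essentially the same computation that \cite{F} carries out and that the paper simply cites, and your boundary-field calculation $J\bigl(2J-(J-{\bf m}(1))-(J-{\bf m}(L))\bigr)=J({\bf m}(1)+{\bf m}(L))$ is exactly the ``easy calculation'' the paper's remark alludes to as the only piece not done in \cite{F}. One small note: your ladder-operator formulas $S^\pm e_k$ and the identity in your point (i) are consistent with the standard convention $S^\pm=S^1\pm iS^2$, which is also what the paper's displayed form of $h_{j,j+1}$ requires; the paper's parenthetical definition ``$S^\pm=(S^1\pm iS^2)/2$'' appears to be a misprint and, had you used it literally, the coefficient in front of the hopping term would come out wrong.
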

\begin{proof}[Remark on the proof]
The only detail which is not discussed in \cite{F} is the unitary equivalence of the boundary field term ``$J({\bf{m}}(1)+{\bf{m}}(L))f(\bf{m})$" in \eqref{eq:potential} to the boundary field ``$J(2J-S_1^3-S_L^3)$" in \eqref{eq:Hamiltonian}, which can be verified by an easy calculation.
\end{proof}
\begin{remark}
For the special cases $N=0$ and $N=2JL$, note that $$\dim(\ell^2({\bf M}_L^0))=\dim(\ell^2({\bf M}_L^{2JL}))=1\:.$$ On these one-dimensional spaces, the operators $H_0$ and $H_{2JL}$ are just given by $H_0=0$ and $H_{2JL}=4J^2$.
\end{remark}
Let us now recall a useful relative bound of $A_N$ in terms of the potential $V_N$. It is because of this particular feature of the model that we do not have to worry about the explicit form of the weight function $w$ given in \eqref{eq:weight}.
\begin{prop}[{\cite[Lemma 2.9]{F}}]
The operators $A_N$ and $V_N$ satisfy the following relative bound:
\begin{equation}
-4JV_N\leq A_N\leq 4JV_N\:.
\end{equation}
\label{prop:relativebound}
\end{prop}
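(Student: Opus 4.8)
The plan is to prove the relative bound $-4JV_N\le A_N\le 4JV_N$ by a local argument, reducing everything to a single edge of the chain. Observe that both $A_N$ and $V_N$ decompose as sums of operators attached to the bonds $\{j,j+1\}$, plus (in the case of $V_N$) the boundary term $J({\bf m}(1)+{\bf m}(L))$, which is diagonal and non-negative. Concretely, write $A_N=\sum_{j=1}^{L-1}A_N^{(j)}$, where $A_N^{(j)}$ is the piece of the adjacency operator coming from a particle hopping across the bond $\{j,j+1\}$, and $V_N=\sum_{j=1}^{L-1}V_N^{(j)}+V_N^{\partial}$, where $V_N^{(j)}$ is multiplication by $v({\bf m}(j),{\bf m}(j+1))=J({\bf m}(j)+{\bf m}(j+1))-{\bf m}(j){\bf m}(j+1)$ and $V_N^\partial\ge 0$. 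Since $V_N^\partial\ge 0$, it suffices to establish the bondwise inequality $-4JV_N^{(j)}\le A_N^{(j)}\le 4JV_N^{(j)}$ for each $j$, and then sum over $j$.

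Next I would reduce the bondwise inequality to a two-site computation. Fixing a bond $\{j,j+1\}$, the occupation numbers at all other sites are conserved by $A_N^{(j)}$ and play no role in $V_N^{(j)}$; thus $\ell^2({\bf M}_L^N)$ decomposes into a direct sum (over the frozen configurations away from the bond, and the total number $p={\bf m}(j)+{\bf m}(j+1)$ of particles on the bond) of invariant subspaces, on each of which $A_N^{(j)}$ and $V_N^{(j)}$ act as explicit finite matrices indexed by $a={\bf m}(j)\in\{\max(0,p-2J),\dots,\min(p,2J)\}$. On such a block, $V_N^{(j)}$ is the diagonal matrix with entries $Jp-a(p-a)$, and $A_N^{(j)}$ is the (weighted) path-graph adjacency matrix with off-diagonal entries $w$ from \eqref{eq:weight}, i.e. between the states $a$ and $a-1$ the weight $\bigl(J(2a-1+?)-\dots\bigr)^{1/2}$ — more precisely, since moving one particle from site $j$ to site $j+1$ sends $({\bf m}(j),{\bf m}(j+1))=(a,p-a)$ to $(a-1,p-a+1)$, the weight squared is $\bigl(J(a+(a-1)+1)-a(a-1)\bigr)\bigl(J((p-a)+(p-a+1)+1)-(p-a)(p-a+1)\bigr)=\bigl(2Ja-a(a-1)\bigr)\bigl(2J(p-a+1)-(p-a)(p-a+1)\bigr)$. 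Recognizing these as the matrix elements of the spin-$J$ raising/lowering operators, the cleanest route is to note that on this two-site block the operators $A_N^{(j)}$, $V_N^{(j)}$ are exactly (a shift of) $S_j^+S_{j+1}^-+S_j^-S_{j+1}^+$ and $J^2-S_j^3S_{j+1}^3$ restricted to a fixed-magnetization sector — which is precisely the content invoked from \cite{F} — so the inequality $-4JV_N^{(j)}\le A_N^{(j)}\le 4JV_N^{(j)}$ is the statement that $4J(J^2-S_j^3S_{j+1}^3)\pm(S_j^+S_{j+1}^-+S_j^-S_{j+1}^+)\ge 0$ as two-site operators.

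Finally I would verify this two-site operator inequality directly. One clean way: complete the square. For any $\lambda>0$ one has $\lambda(S^1_jS^1_{j+1}+S^2_jS^2_{j+1}) + \tfrac{\lambda}{2}\bigl((S^1_j)^2+(S^1_{j+1})^2+(S^2_j)^2+(S^2_{j+1})^2\bigr)\ge -\tfrac{\lambda}{2}\bigl((S^1_j\mp S^1_{j+1})^2+\dots\bigr)$ type estimates, combined with $(S^1)^2+(S^2)^2=J(J+1)-(S^3)^2\le J(J+1)$ and $J^2-S^3_jS^3_{j+1}\ge 0$, reduce the claim to an elementary inequality among the eigenvalues $m_j,m_{j+1}\in\{-J,\dots,J\}$ of $S^3_j,S^3_{j+1}$; alternatively, since each two-site block is a small explicit symmetric matrix (tridiagonal, of size at most $2J+1$), one can check $4JV^{(j)}_N\pm A^{(j)}_N\ge 0$ by a direct diagonalization or by Gershgorin-type/Schur-complement estimates using that the diagonal entry $Jp-a(p-a)$ dominates the sum of the adjacent weights. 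The main obstacle is precisely this last bookkeeping step: one must confirm that the arithmetic-mean/geometric-mean-type comparison between $4J\bigl(Jp-a(p-a)\bigr)$ and the two neighboring weights $w$ holds for every admissible $p$ and $a$ and every half-integer $J$; this is the one place where the spin-$J$ case is genuinely more involved than spin-$1/2$, but it is a finite, self-contained computation. Since the statement is quoted from \cite[Lemma 2.9]{F}, in the paper itself this is simply cited; the sketch above is how one would reconstruct the proof.
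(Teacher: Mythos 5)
The paper itself does not actually prove this proposition: it is quoted from \cite[Lemma 2.9]{F}, and the only ``Remark on the proof'' is that the lower bound $-4JV_N\le A_N$ follows by the same argument used there for $A_N\le 4JV_N$. Your proposal is therefore a reconstruction of the argument in \cite{F}. The overall structure you propose is sound and is essentially what a direct proof does: decompose $A_N$ and the non-boundary part of $V_N$ bond by bond, drop the nonnegative boundary term, and reduce the bondwise inequality to positivity of a tridiagonal two-site block, which is then checked by diagonal dominance (a Schur test) using the explicit weights $w$.

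However, the ``complete the square'' sub-route you offer as an alternative does \emph{not} close, and it is worth seeing why. The operator inequality $\mp 2(S_j^1S_{j+1}^1+S_j^2S_{j+1}^2)\le (S_j^1)^2+(S_{j+1}^1)^2+(S_j^2)^2+(S_{j+1}^2)^2=2J(J+1)-(S_j^3)^2-(S_{j+1}^3)^2$ is correct, but the scalar comparison it would require, namely $2J(J+1)-m_j^2-m_{j+1}^2\le 4J(J^2-m_jm_{j+1})$ for all $m_j,m_{j+1}\in\{-J,\dots,J\}$, is false: at $m_j=m_{j+1}=J$ (i.e.\ the locally unoccupied configuration ${\bf m}(j)={\bf m}(j+1)=0$) the left side equals $2J>0$ while the right side vanishes. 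Both $A_N^{(j)}$ and $V_N^{(j)}$ annihilate such configurations, so any usable upper bound on the hopping term must also vanish there; the AM--GM bound above does not, because it sacrifices the factors of $w$ that encode which hops are actually admissible. The diagonal-dominance route is the one that works, and the step you defer -- showing $4J\bigl(Jp-a(p-a)\bigr)\ge w_{a,a-1}+w_{a,a+1}$ with $w_{a,a-1}=\sqrt{a(2J-a+1)(p-a+1)(2J-p+a)}$ and $w_{a,a+1}=\sqrt{(a+1)(2J-a)(p-a)(2J-p+a+1)}$ whenever these hops are allowed -- is exactly where the content lies; the AM--GM split must be chosen so that each summand carries the same ``boundary'' factors (like $\sqrt{a}$, $\sqrt{2J-b}$) that make the weights vanish on extremal configurations. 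As written, your sketch asserts this is ``a finite, self-contained computation'' without carrying it out, which is where the proposal is incomplete.
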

\begin{proof}[Remark on the proof] Strictly speaking, in {\cite[Lemma 2.9]{F}}, it was only shown that $A_N\leq 4JV_N$. However, the lower bound $-4JV_N\leq A_N$ follows from a completely analogous argument. 
	
\end{proof}

Let us now further analyze the interaction potential $V_N$ and determine all the configurations which minimize its value, the proof can be found in Appendix \ref{app:b1}
\begin{prop}\label{prop:minimizer}
Let $N\in\{4J,4J+1,\dots,2JL\}$ and define $V_{N,0}:=\min\{V({\bf m}):{\bf m}\in{\bf M}_L^N\}$. Then 
	\begin{equation}
	V_{N,0}:= 4J^2\:.
	\end{equation}
Moreover, -- up to overall translations -- the minimizers of $V_N$ are given by

\begin{equation} \label{eq:form1}
{\bf m}^N_j(x)= \begin{cases} j \quad&\mbox{if}\quad x=1\\
2J&\mbox{if}\quad x=2,\ldots,r\\
2J-j&\mbox{if}\quad x=r+1
\end{cases} 
\end{equation}
for  $N=2Jr,\;\;2\leq r\leq L-1$, $j=0,\ldots,2J-1$ and
\begin{equation} \label{eq:form2}
{\bf m}^N_j(x)= \begin{cases} j \quad&\mbox{if}\quad x=1\\
2J&\mbox{if}\quad x=2,\ldots,1+\lfloor\frac{N}{2J}\rfloor\\
N(\mbox{mod }2J)-j&\mbox{if}\quad x=2+\lfloor\frac{N}{2J}\rfloor
\end{cases} 
\end{equation}
if $N$ is not a multiple of $2J$, $j=0,\ldots, N(mod \;2J) -1$.
\end{prop}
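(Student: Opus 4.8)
The plan is to turn the minimization of $V_N$ into a purely combinatorial problem and solve that. The decisive first step is the identity
\[
V({\bf m})=2JN-\sum_{j=1}^{L-1}{\bf m}(j){\bf m}(j+1)\qquad\text{for every }{\bf m}\in{\bf M}_L^N,
\]
obtained by expanding $\sum_{j=1}^{L-1}v({\bf m}(j),{\bf m}(j+1))$ in \eqref{eq:potential}, using $\sum_{j=1}^{L-1}\pa{{\bf m}(j)+{\bf m}(j+1)}=2N-{\bf m}(1)-{\bf m}(L)$, and observing that the boundary field $J({\bf m}(1)+{\bf m}(L))$ cancels exactly the residual surface term. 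Hence minimizing $V_N$ over ${\bf M}_L^N$ is the same as \emph{maximizing} $P({\bf m}):=\sum_{j=1}^{L-1}{\bf m}(j){\bf m}(j+1)$ over all ${\bf m}:\Lambda_L\to\{0,\dots,2J\}$ with $\sum_j{\bf m}(j)=N$, and the assertion $V_{N,0}=4J^2$ becomes $\max_{\bf m}P({\bf m})=2JN-4J^2$.

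Second, I would prove $P({\bf m})\le 2JN-4J^2$ for all $N\ge 4J$. With ${\bf m}(0):={\bf m}(L+1):=0$ one has the two equal representations
\[
2JN-P({\bf m})=\sum_{j=1}^{L}{\bf m}(j)\pa{2J-{\bf m}(j+1)}=\sum_{j=1}^{L}{\bf m}(j)\pa{2J-{\bf m}(j-1)},
\]
sums of \emph{nonnegative} terms, so only the lower bound $\ge 4J^2$ is needed. I would argue by strong induction on $N$: if ${\bf m}$ has a full site $i$ (i.e.\ ${\bf m}(i)=2J$), then setting ${\bf m}(i):=0$ decreases the total by $2J$ and decreases $P$ by $2J\pa{{\bf m}(i-1)+{\bf m}(i+1)}\le 4J^2$, so the bound for $N-2J$ propagates whenever $N-2J\ge 4J$; if instead ${\bf m}(j)\le 2J-1$ for every $j$, then $P({\bf m})\le(2J-1)N\le 2JN-4J^2$ provided $N\ge 4J^2$. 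The finitely many remaining small values $4J\le N<\max\{4J^2,6J\}$ (finitely many configurations up to translation) are dispatched by a direct analysis of the resulting bounded quadratic; for example, when $\supp{\bf m}\subseteq\{p,p+1,p+2\}$ one has $2JN-P({\bf m})=2JN-{\bf m}(p+1)\pa{N-{\bf m}(p+1)}\ge 4J^2$, since $t\mapsto t(N-t)$ is increasing on $[0,2J]$ once $N\ge 4J$. The hypothesis $N\ge 4J$ is used essentially: already ${\bf m}\equiv(J,J)$ at $N=2J$ has $P=J^2>0=2JN-4J^2$.

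Third --- and this is the step I expect to be the main obstacle, since it must reproduce the explicit lists \eqref{eq:form1}--\eqref{eq:form2} uniformly in $J$ --- comes the equality analysis. Tracking when the estimates above are tight forces: (i) $\supp{\bf m}$ is an interval $[p,q]$, because a gap between occupied blocks contributes an extra strictly positive term ${\bf m}(j)\cdot 2J$; and (ii) every interior site is full, ${\bf m}(j)=2J$ for $p<j<q$. For (ii): if ${\bf m}(q)=2J$ --- the case in which the term $j=q$ already accounts for $4J^2$ --- then equality requires ${\bf m}(j)\pa{2J-{\bf m}(j+1)}=0$ for all $j<q$, and together with ${\bf m}(p)\ge 1$ this propagates, by a short induction on $j$ beginning at $p$, to ${\bf m}(p+1)=\dots={\bf m}(q)=2J$; the complementary case ${\bf m}(q)<2J$ is handled via the symmetric representation. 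Thus a minimizer is, up to translation, a full block flanked by partial endpoints $c_L:={\bf m}(p),\,c_R:={\bf m}(q)\in\set{1,\dots,2J}$ with $c_L+c_R=N-2J(q-p-1)$; the constraint $2\le c_L+c_R\le 4J$ pins $q-p$ to one of two values according to $N\bmod 2J$, and enumerating the admissible pairs $(c_L,c_R)$ --- absorbing a full endpoint into the block --- yields the families \eqref{eq:form1} when $2J\mid N$ and \eqref{eq:form2} otherwise. Together with $\min_{\bf m}V({\bf m})=2JN-\max_{\bf m}P({\bf m})=4J^2$ and the routine check that each listed ${\bf m}$ is a genuine element of ${\bf M}_L^N$ attaining this value --- which is where the range $4J\le N\le 2JL$, and the obvious degenerate modification for $N$ close to $2JL$, enter --- this completes the proof.
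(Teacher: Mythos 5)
Your first step --- the identity $V({\bf m}) = 2JN - \sum_{j=1}^{L-1}{\bf m}(j){\bf m}(j+1)$, reducing the problem to maximizing $P({\bf m}):=\sum_j{\bf m}(j){\bf m}(j+1)$ --- is exactly the paper's starting point, and the two nonnegative representations $2JN-P({\bf m})=\sum_j{\bf m}(j)(2J-{\bf m}(j\pm1))$ are a pleasant reformulation. After that you diverge from the paper's strategy: the paper shows directly, by a local-perturbation argument exhausting four cases, that any configuration not of the listed form admits a strict improvement, thereby getting the value $4J^2$ and the list of minimizers in one pass. Your two-stage route (prove the bound, then analyze equality) is legitimate in principle, but as written it has two genuine gaps.

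The first gap is in the proof of the bound $P\le 2JN-4J^2$. Your induction step (delete a full column) requires $N-2J\ge 4J$, and your fallback estimate $P\le(2J-1)N$ when no site is full requires $N\ge 4J^2$. What remains is the entire range $4J\le N<\max\{6J,4J^2\}$, which for $J>1$ is nonempty and for large $J$ contains on the order of $4J^2$ values of $N$, each with configurations whose support can have up to $N$ sites. You describe these as ``dispatched by a direct analysis of the resulting bounded quadratic,'' but the only analysis you actually provide is for $|\supp({\bf m})|\le 3$; for longer supports there is no single bounded quadratic to analyze, and the argument does not extend. These base cases therefore remain unproved, and they are not negligible: the induction cannot even get started without them.

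The second gap is in the equality analysis. Your argument that every interior site of a minimizer is full is structured around the dichotomy ``${\bf m}(q)=2J$ (so the single term $j=q$ already yields $4J^2$, forcing all other terms to vanish and then propagating from $p$)'' versus ``${\bf m}(q)<2J$, so symmetrize and look for ${\bf m}(p)=2J$.'' But for most of the minimizers you are trying to characterize, \emph{both} endpoints are strictly less than $2J$: form \eqref{eq:form1} with $1\le j\le 2J-1$ has ${\bf m}(p)=j<2J$ and ${\bf m}(q)=2J-j<2J$. For such configurations neither representation contains a single $4J^2$ term --- the sum $\sum_i{\bf m}(i)(2J-{\bf m}(i+1))$ splits as $2Jj$ at $i=q-1$ plus $2J(2J-j)$ at $i=q$ --- so the ``all other terms vanish, hence propagate'' step never starts. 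Closing this hole requires precisely the sort of case-by-case local-move argument the paper runs in Appendix \ref{app:b1}; as written, your proof of interior fullness only covers the special minimizers that have a $2J$-endpoint.
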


\begin{remark} The following figure provides an example of a minimizer of the form \eqref{eq:form1}, i.e. when $N\in\{4J,6J,8J,\dots,2JL\}$:
\end{remark}
\clearpage
\begin{figure}[ht]
	\centering
	\resizebox{8cm}{6cm}
	{\begin{tikzpicture}[line cap=round,line join=round,>=triangle 45,x=1cm,y=1cm]
			\begin{axis}[
				x=1cm,y=1cm,
				axis x line=bottom,
				axis y line=none,
				%ymajorgrids=true,
				xmajorgrids=true,
				xmin=0.2,
				xmax=8.5,
				ymin=-0.5,
				ymax=6.5,
				xtick={1,...,7},
				xticklabels={$k$,$k+1$, , \ldots, ,, $k+r$}
				]
				\clip(0,0) rectangle (11.32,8.7);
				\node at (1, 2.4) {$j$};
				\node at (2, 4.8) {$2J$};
				\node at (3, 4.8) {$2J$};
				\node at (4, 4.8) {$2J$};
				\node at (5, 4.8) {$2J$};
				\node at (6, 4.8) {$2J$};
				\node at (7, 3.8) {$2J-j$};
				\draw [line width=2pt] (0,1)-- (8,1);
				\begin{scriptsize}
					\draw [fill=qqqqff] (1,1) circle (2.5pt);
					\draw [fill=qqqqff] (1,1.4) circle (2.5pt);
					\draw [fill=qqqqff] (1,1.8) circle (2.5pt);
					\draw [fill=qqqqff] (2,1) circle (2.5pt);
					\draw [fill=qqqqff] (2,1.4) circle (2.5pt);
					\draw [fill=qqqqff] (2,1.8) circle (2.5pt);
					\draw [fill=qqqqff] (2,2.2) circle (2.5pt);
					\draw [fill=qqqqff] (2,2.6) circle (2.5pt);
					\draw [fill=qqqqff] (2,3) circle (2.5pt);
					\draw [fill=qqqqff] (2,3.4) circle (2.5pt);
					\draw [fill=qqqqff] (2,3.8) circle (2.5pt);
					\draw [fill=qqqqff] (2,4.2) circle (2.5pt);
					\draw [fill=qqqqff] (3,1) circle (2.5pt);
					\draw [fill=qqqqff] (3,1.4) circle (2.5pt);
					\draw [fill=qqqqff] (3,1.8) circle (2.5pt);
					\draw [fill=qqqqff] (3,2.2) circle (2.5pt);
					\draw [fill=qqqqff] (3,2.6) circle (2.5pt);
					\draw [fill=qqqqff] (3,3) circle (2.5pt);
					\draw [fill=qqqqff] (3,3.4) circle (2.5pt);
					\draw [fill=qqqqff] (3,3.8) circle (2.5pt);
					\draw [fill=qqqqff] (3,4.2) circle (2.5pt);
					\draw [fill=qqqqff] (4,1) circle (2.5pt);
					\draw [fill=qqqqff] (4,1.4) circle (2.5pt);
					\draw [fill=qqqqff] (4,1.8) circle (2.5pt);
					\draw [fill=qqqqff] (4,2.2) circle (2.5pt);
					\draw [fill=qqqqff] (4,2.6) circle (2.5pt);
					\draw [fill=qqqqff] (4,3) circle (2.5pt);
					\draw [fill=qqqqff] (4,3.4) circle (2.5pt);
					\draw [fill=qqqqff] (4,3.8) circle (2.5pt);
					\draw [fill=qqqqff] (4,4.2) circle (2.5pt);
					\draw [fill=qqqqff] (5,1) circle (2.5pt);
					\draw [fill=qqqqff] (5,1.4) circle (2.5pt);
					\draw [fill=qqqqff] (5,1.8) circle (2.5pt);
					\draw [fill=qqqqff] (5,2.2) circle (2.5pt);
					\draw [fill=qqqqff] (5,2.6) circle (2.5pt);
					\draw [fill=qqqqff] (5,3) circle (2.5pt);
					\draw [fill=qqqqff] (5,3.4) circle (2.5pt);
					\draw [fill=qqqqff] (5,3.8) circle (2.5pt);
					\draw [fill=qqqqff] (5,4.2) circle (2.5pt);
					\draw [fill=qqqqff] (6,1) circle (2.5pt);
					\draw [fill=qqqqff] (6,1.4) circle (2.5pt);
					\draw [fill=qqqqff] (6,1.8) circle (2.5pt);
					\draw [fill=qqqqff] (6,2.2) circle (2.5pt);
					\draw [fill=qqqqff] (6,2.6) circle (2.5pt);
					\draw [fill=qqqqff] (6,3) circle (2.5pt);
					\draw [fill=qqqqff] (6,3.4) circle (2.5pt);
					\draw [fill=qqqqff] (6,3.8) circle (2.5pt);
					\draw [fill=qqqqff] (6,4.2) circle (2.5pt);
					\draw [fill=qqqqff] (7,1) circle (2.5pt);
					\draw [fill=qqqqff] (7,1.4) circle (2.5pt);
					\draw [fill=qqqqff] (7,1.8) circle (2.5pt);
					\draw [fill=qqqqff] (7,2.2) circle (2.5pt);
					\draw [fill=qqqqff] (7,2.6) circle (2.5pt);
					\draw [fill=qqqqff] (7,3) circle (2.5pt);
					%\node (c) at (1.95,2) {};
					%\node (d) at (3.02, 1.34) {};
					%\draw[->] (d)  to [out=100,in=75, looseness=1.7] (c);
				\end{scriptsize}
			\end{axis}
		\end{tikzpicture}
	}
	\caption{An example of a minimizer ${\bf m}_j^N(\cdot -k)$ when $N=2Jr$ as in \eqref{eq:form1}. Here, $J=9/2$ and $j=3$.}
	\label{fig:minimizer2}
\end{figure}
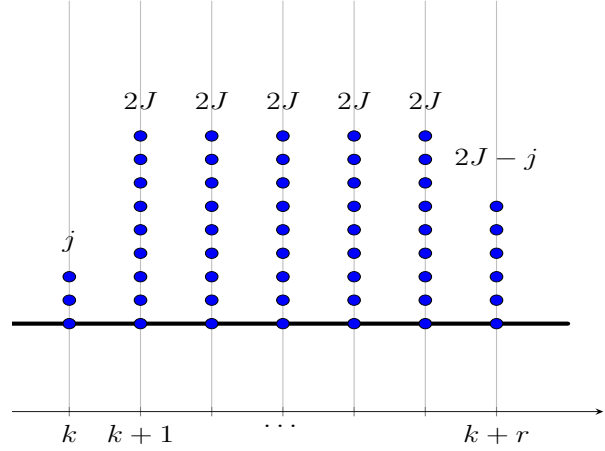
\begin{defn} For any $K\in\N$, let ${\bf M}_{L,K}^N=\{{\bf m}\in{\bf M}_L^N: V({\bf m})\leq K\}$, i.e.\ the set of occupation number functions for which the potential $V_N$ is bounded by $K$. \label{def:MLKN}
\end{defn}

\subsubsection{An equivalent description using multisets}

When it comes to tracking the positions of $N$ individual particles, it will be convenient to introduce ordered $N$-tuples that satisfy certain conditions. For brevity, in what follows, we will refer to these $N$-tuples as multisets. Thus, for any $L\in \N$ and any $N\in\{1,\dots,2JL\}$, let us define
\begin{equation} \label{eq:SSS}
\SSS_L^N=\left\{(x_1,x_2,\dots,x_N)\in\Lambda_L^N: x_1\leq x_2\leq\cdots\leq x_N\mbox{ and } \min_{k\in\{1,2,\dots, N-2J\}}(x_{k+2J}-x_k)\geq 1 \right\}\:,
\end{equation}
where for any $X=(x_1,x_2,\dots,x_N)\in\SSS_L^N$, the value of each individual $x_i$ represents the position of the $i$-th particle. The condition 
\begin{equation}
\min_{k\in\{1,2,\dots, N-2J\}}(x_{k+2J}-x_k)\geq 1
\end{equation} 
reflects the fact that no site can be occupied by more than $2J$ particles and is therefore automatically satisfied if $N\leq 2J$. For later convenience, we also introduce the convention $\SSS_L^0:=\{\emptyset\}$.

The correspondence between functions of occupation numbers ${\bf m}\in{\bf M}_L^N$ and $N$-particle configurations $X\in\SSS_L^N$ is of course straightforward. For a given $X=(x_1,x_2,\dots,x_N)\in\SSS_L^N$, the corresponding function ${\bf m}_X\in{\bf M}_L^N$ is defined as
\begin{equation} \label{eq:defm}
{\bf m}_X(j):=|\{k\in\{1,2,\dots,N\}:x_k=j\}|
\end{equation}
for any $j\in\Lambda_L$. It is not hard to see that the mapping $X\mapsto {\bf m}_X$ is a bijection from $\SSS_L^N$ to ${\bf M}_L^N$. We thus denote by $X_{\bf m}$ the image of the inverse of that mapping applied to an arbitrary ${\bf m}\in{\bf M}_L^N$.\footnote{As an example, consider the multiset corresponding to the occupation function ${\bf n}$ in Figure \ref{fig:Z2disconnected}, which would be given by
	\begin{equation}
	X_{\bf n}=(1,2,2,2,3,3,5,7,7,8,8)\:.
	\end{equation}} 
We will also use $\{\phi_X\}_{X\in\SSS_L^N}$ to denote the canonical basis of $\ell^2(\SSS_L^N)$, i.e. $\phi_X(Y)=1$ if $X=Y$ and $\phi_X(Y)=0$ if $X\neq Y$.
Then, the identification of the Hilbert spaces $\ell^2({\bf M}_L^N)$ with $\ell^2(\mathbb{S}_L^N)$ via $\phi_{\bf m}\mapsto \phi_{X_{\bf{m}}}$ is straightforward. So for any $X,Y\in\mathbb{S}_L^N$, we define $X\sim Y:\Leftrightarrow {\bf m}_X\sim {\bf m}_Y$ as well as $w(X,Y):=w({\bf m}_X,{\bf m}_Y)$  and $V(X):=V({\bf m}_X)$. By a slight abuse of notation, we will use the same symbols $H_N, A_N$ and $V_N$ to denote the unitarily equivalent operators on $\ell^2(\mathbb{S}_L^N)$, i.e.\
\begin{equation}
(H_Nf)(X)=-\frac{1}{2\Delta}(A_Nf)(X)+(V_Nf)(X)=-\frac{1}{2\Delta}\sum_{Y:X\sim Y}w(X,Y)f(Y)+V(X)f(X)
\end{equation}
for any $f\in\ell^2(\mathbb{S}_L^N)$.

For any $X,Y\in\SSS_L^N$, let $d^N(X,Y):=d^N({\bf m}_X,{\bf m}_Y)$. The merit of the multiset-point-of-view will now be made more apparent by the following lemma; its proof can be found in Appendix \ref{app:a}.
\begin{lemma}
	For any two configurations $X,Y\in\SSS_L^N$, where $X=(x_1,x_2,\dots,x_N)$ and $Y=(y_1,y_2,\dots,y_N)$, we have
	\begin{equation} \label{eq:distformula}
	d^N(X,Y)=\sum_{i=1}^N|x_i-y_i|\:.
	\end{equation}
\end{lemma}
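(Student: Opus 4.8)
The plan is to prove $d^N(X,Y)\le\sum_{i=1}^N|x_i-y_i|$ and $d^N(X,Y)\ge\sum_{i=1}^N|x_i-y_i|$ separately, phrasing everything in terms of the \emph{counting functions} of the configurations rather than the particle coordinates. For $X\in\SSS_L^N$ set $F_X(j):=\#\{k:x_k\le j\}=\sum_{i=1}^{j}{\bf m}_X(i)$ for $j\in\{0,1,\dots,L\}$; then $F_X$ is non-decreasing, $F_X(0)=0$, $F_X(L)=N$, and the constraint defining $\SSS_L^N$ in \eqref{eq:SSS} is exactly that $0\le F_X(j)-F_X(j-1)\le 2J$ for every $j$. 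Two elementary observations drive the proof. First, the ``area between the staircases'' identity
\[
\sum_{i=1}^N|x_i-y_i|\ =\ \sum_{j=0}^{L}|F_X(j)-F_Y(j)|\ =:\ D(X,Y),
\]
valid because both sides count, with multiplicity, the unit cells between the two monotone graphs. Second, two configurations are adjacent in the sense of \eqref{eq:adjacent} if and only if $F_X$ and $F_Y$ differ at exactly one point $j_0\in\{1,\dots,L-1\}$ and there by exactly $1$: moving a particle across the bond $\{j_0,j_0+1\}$ alters only $F(j_0)$, by $\pm1$, and conversely such an $F$-change, read back as occupation numbers, is precisely \eqref{eq:adjacent}.

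The lower bound is then immediate: $D$ is a metric, so along any path $X=Z_0\sim Z_1\sim\dots\sim Z_m=Y$ one gets $D(X,Y)\le\sum_{\ell=1}^m D(Z_{\ell-1},Z_\ell)=m$ by the adjacency characterization, and taking the shortest path yields $D(X,Y)\le d^N(X,Y)$.

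The upper bound is proved by induction on $D(X,Y)$. The case $D(X,Y)=0$ is trivial. If $D(X,Y)\ge 1$, it suffices -- by the adjacency characterization, the induction hypothesis, and the symmetry of $d^N$ and $D$ -- to produce $X'\in\SSS_L^N$ with $X'\sim X$ and $D(X',Y)=D(X,Y)-1$, for then $d^N(X,Y)\le 1+d^N(X',Y)=1+D(X',Y)=D(X,Y)$. Using symmetry we may assume $P:=\{j:F_X(j)<F_Y(j)\}\neq\emptyset$; since $F_X,F_Y$ agree at $0$ and $L$ we have $0,L\notin P$, so $j_1:=\max P\in\{1,\dots,L-1\}$ and $F_X(j_1+1)\ge F_Y(j_1+1)\ge F_Y(j_1)>F_X(j_1)$. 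If the site $j_1$ is unsaturated in $X$, i.e.\ $F_X(j_1)-F_X(j_1-1)<2J$, then increasing $F_X$ at $j_1$ by $1$ (and nowhere else) gives the counting function of a configuration $X'\in\SSS_L^N$ with $X'\sim X$, and $D(X',Y)=D(X,Y)-1$ because $F_X(j_1)<F_Y(j_1)$. If $j_1$ is saturated, then $F_Y(j_1-1)\ge F_Y(j_1)-2J>F_X(j_1)-2J=F_X(j_1-1)$, so $j_1-1\in P$; repeating, we descend through a maximal run of saturated sites and stop at the first $j'$ with $j'\in P$ and $j'$ unsaturated -- such $j'$ exists because a saturated site sitting just above the bottom would force $F_X(0)<F_Y(0)$, which is false -- and there we increase $F_X$ by $1$ exactly as in the unsaturated case (the needed strict inequality $F_X(j'+1)>F_X(j')$ holds either by maximality of $j_1$ when $j'=j_1$, or because the saturated site $j'+1$ lies directly above $j'$).

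The genuinely delicate point is this last step: the obvious recipe ``move the particle at the first disagreement one step toward its target'' can overfill a neighbouring site, and what makes the argument go through is that the leftward chase through saturated sites is forced to halt at a legitimate unsaturated site -- a consequence of $F_X$ and $F_Y$ having all increments at most $2J$ and agreeing at both endpoints, i.e.\ of $X$ and $Y$ containing the same number of particles. Combining the two inequalities gives $d^N(X,Y)=D(X,Y)=\sum_{i=1}^N|x_i-y_i|$, which is \eqref{eq:distformula}.
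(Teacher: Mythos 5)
Your proof is correct, but it takes a genuinely different route from the paper's. The paper works directly on the sorted coordinates: it first shows by case analysis that $\sum_i|x_i-y_i|=1$ exactly when $X\sim Y$, gets the lower bound from the triangle inequality, and then exhibits a path of length $\sum_i|x_i-y_i|$ by locating the first index $i_0$ with $x_{i_0}\neq y_{i_0}$ and sliding the $i_0$-th coordinate of $Y$ one site at a time toward $x_{i_0}$, recursing on the next mismatch. You instead pass to the cumulative occupation function $F_X(j)=\sum_{i\le j}{\bf m}_X(i)$, under which the membership constraint for $\SSS_L^N$ becomes $0\le F(j)-F(j-1)\le 2J$, adjacency becomes a unit change of $F$ at a single interior site, and $\sum_i|x_i-y_i|$ is re-expressed as $\sum_j|F_X(j)-F_Y(j)|$; the lower bound is then an immediate metric inequality, and the upper bound is an induction that picks $j_1=\max P$ and, when that site is saturated, descends through a forced run of saturated sites in $P$ until a legal unit move appears (the descent cannot reach $0$ because the endpoint values of $F_X$ and $F_Y$ agree). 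The payoff of your formulation is that it makes the validity of every intermediate step transparent and isolates precisely where saturation could obstruct a greedy move; the paper's slide of the first mismatched coordinate is in fact automatically safe (since $x$ is nondecreasing, no site strictly above $x_{i_0}$ can be occupied by indices $<i_0$), but the paper does not spell out that verification, whereas your argument carries it in the structure. Both are complete proofs.
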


 \subsection{Combes--Thomas estimate and bounds on spectral projections}

One of the main ingredients of the proof of the bound for the entanglement entropy will be a bound on spectral projections. To be more specific, let $\mathcal{A}\subset \mathbb{S}_L$ be a set of configurations. For any such $\mathcal{A}$, we define $P_{\mathcal{A}}$ to be the orthogonal projection onto the subspace of functions that are supported on $\mathcal{A}$, i.e.
\begin{equation}
(P_{\mathcal{A}}f)(X)=\begin{cases} f(X)\quad&\mbox{if}\quad X\in\mathcal{A}\\ 0 \quad&\mbox{if}\quad X\in\mathbb{S}_L\setminus\mathcal{A}\end{cases}\:.
\end{equation}
For later convenience, we also define $\overline{P}_{\mathcal{A}}:=\idty-P_{\mathcal{A}}$.
\begin{thm} \label{prop:CT}

For any $N\in\{1,2,\dots,2JL\}$, let $Y_N$ be an arbitrary multiplication operator on $\ell^2(\SSS_L^N)$ and $z\notin\sigma(H_N+Y_N)$ such that there exists $\kappa_z>0$ for which
	\begin{equation} \label{initbound}
	\left\|V_N^{1/2}(H_N+Y_N-z)^{-1}V_N^{1/2}\right\|\leq \frac{1}{\kappa_z}<\infty\:.
	\end{equation}
	Then for all subsets $\mathcal{A,B}\subseteq \SSS_L^N$, we have
	\begin{equation}
	\left\|P_\mathcal{A}\left(H_N+Y_N-z\right)^{-1}P_\mathcal{B}\right\| \leq \frac{1}{V_{N,0}} \left\|P_\mathcal{A}V_N^{1/2}\left(H_N+Y_N-z\right)^{-1}V_N^{1/2}P_\mathcal{B}\right\|\leq \frac{2}{V_{N,0}\kappa_z}\,e^{-\eta_z d^N(\mathcal{A,B})}\:,
	\end{equation}
	where
	\begin{equation} \label{etaz}
	\eta_z= \log\left(1+\frac{\Delta\kappa_z}{4J}\right).
	\end{equation}
\end{thm}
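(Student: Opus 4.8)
The plan is to prove this as a Combes--Thomas estimate: conjugate the resolvent of $G:=H_N+Y_N$ by the exponential of a Lipschitz weight and control the resulting commutator through the relative bound of Proposition \ref{prop:relativebound}, after sandwiching everything by $V_N^{\pm1/2}$ so that hypothesis \eqref{initbound} can be used. Write $R:=(G-z)^{-1}$, which exists by assumption, and $d:=d^N(\mathcal A,\mathcal B)$. We may assume $\mathcal A,\mathcal B\neq\emptyset$, and since $H_N$, $V_N$ and $Y_N$ all respect the decomposition of $\ell^2(\SSS_L^N)$ into connected components of the $N$-particle graph, we may also assume $\mathcal A$ and $\mathcal B$ lie in a common component, so that $d<\infty$ (otherwise both sides vanish). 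The first inequality in the statement is routine: $V_N$ is diagonal with $V_N\geq V_{N,0}\,\idty$ and $V_{N,0}>0$ (by Proposition \ref{prop:minimizer} for $N\geq 4J$, and by an elementary direct check $V(\mathbf m)\geq J$ for every nonzero $\mathbf m$ otherwise), hence $V_N^{-1/2}$ is bounded, commutes with $P_{\mathcal A}$ and $P_{\mathcal B}$, and satisfies $\norm{P_{\mathcal A}V_N^{-1/2}}\leq V_{N,0}^{-1/2}$ and likewise for $\mathcal B$; inserting $\idty=V_N^{-1/2}V_N^{1/2}$ on both sides of $R$ gives $P_{\mathcal A}RP_{\mathcal B}=(V_N^{-1/2}P_{\mathcal A})\pa{P_{\mathcal A}V_N^{1/2}RV_N^{1/2}P_{\mathcal B}}(P_{\mathcal B}V_N^{-1/2})$, whence $\norm{P_{\mathcal A}RP_{\mathcal B}}\leq V_{N,0}^{-1}\norm{P_{\mathcal A}V_N^{1/2}RV_N^{1/2}P_{\mathcal B}}$.

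For the main bound, set $\rho(X):=d^N(X,\mathcal B)$, which is $1$-Lipschitz for $d^N$, and let $F:=e^{\eta\rho}$ for a parameter $\eta>0$; since $\SSS_L^N$ is finite, $F$ is a bounded, boundedly invertible, positive diagonal operator. As $V_N$, $Y_N$ and $z$ commute with $F$, conjugation affects only the hopping term:
\[
F(G-z)F^{-1}=(G-z)-\tfrac{1}{2\Delta}W_\eta ,\qquad W_\eta:=FA_NF^{-1}-A_N ,
\]
where $W_\eta$ has matrix elements $\pa{e^{\eta(\rho(X)-\rho(Y))}-1}w(X,Y)$, which vanish unless $X\sim Y$. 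On this support $d^N(X,Y)=1$, so $|\rho(X)-\rho(Y)|\leq1$ and hence $|e^{\eta(\rho(X)-\rho(Y))}-1|\leq e^\eta-1$; since $w\geq0$ this yields the entrywise bound $|W_\eta(X,Y)|\leq(e^\eta-1)\,w(X,Y)$, valid for all $X,Y$.

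The next step is to turn this entrywise bound into an operator-norm bound usable in a Neumann series. Put $a_N:=V_N^{-1/2}A_NV_N^{-1/2}$ and $\omega_\eta:=V_N^{-1/2}W_\eta V_N^{-1/2}$; the estimate above becomes $|\omega_\eta(X,Y)|\leq(e^\eta-1)\,a_N(X,Y)$ with $a_N(X,Y)\geq0$. Because replacing the components of any two test vectors by their absolute values leaves their $\ell^2$ norms unchanged, entrywise domination by an entrywise-nonnegative kernel transfers to operator norms, so $\norm{\omega_\eta}\leq(e^\eta-1)\norm{a_N}$, and Proposition \ref{prop:relativebound} gives $\norm{a_N}\leq4J$, hence $\norm{\omega_\eta}\leq4J(e^\eta-1)$. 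Now factor $F(G-z)F^{-1}=(G-z)\pa{\idty-\tfrac{1}{2\Delta}RW_\eta}$, substitute $W_\eta=V_N^{1/2}\omega_\eta V_N^{1/2}$, and move $V_N^{1/2}$ across the resolvent factor using $V_N^{1/2}(\idty-M)^{-1}V_N^{-1/2}=(\idty-V_N^{1/2}MV_N^{-1/2})^{-1}$; writing $T:=V_N^{1/2}RV_N^{1/2}$ this gives
\[
V_N^{1/2}\pa{F(G-z)F^{-1}}^{-1}V_N^{1/2}=\pa{\idty-\tfrac{1}{2\Delta}T\omega_\eta}^{-1}T ,
\]
provided $\tfrac{1}{2\Delta}\norm{T}\norm{\omega_\eta}<1$. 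By \eqref{initbound}, $\norm{T}\leq\kappa_z^{-1}$, and choosing $\eta=\eta_z=\log\pa{1+\tfrac{\Delta\kappa_z}{4J}}$, so that $e^{\eta_z}-1=\tfrac{\Delta\kappa_z}{4J}$, makes $\tfrac{1}{2\Delta}\norm{T}\norm{\omega_{\eta_z}}\leq\tfrac12$; hence $\norm{\pa{\idty-\tfrac{1}{2\Delta}T\omega_{\eta_z}}^{-1}}\leq2$ and $\norm{V_N^{1/2}\pa{F(G-z)F^{-1}}^{-1}V_N^{1/2}}\leq2\kappa_z^{-1}$. Finally, from $P_{\mathcal A}V_N^{1/2}RV_N^{1/2}P_{\mathcal B}=(P_{\mathcal A}F^{-1})\pa{V_N^{1/2}\pa{F(G-z)F^{-1}}^{-1}V_N^{1/2}}(FP_{\mathcal B})$ together with $\norm{FP_{\mathcal B}}\leq1$ (as $\rho\equiv0$ on $\mathcal B$) and $\norm{P_{\mathcal A}F^{-1}}\leq e^{-\eta_z d}$ (as $\rho\geq d$ on $\mathcal A$) we obtain $\norm{P_{\mathcal A}V_N^{1/2}RV_N^{1/2}P_{\mathcal B}}\leq2\kappa_z^{-1}e^{-\eta_z d}$, which combined with the first inequality is the claim.

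The only genuinely delicate point is the transfer from the entrywise bound on $W_\eta$ to the norm bound on $\omega_\eta$: because $\omega_\eta$ is not self-adjoint, the relative bound cannot be applied to it directly, and the resolution is that $\omega_\eta$ is dominated entrywise by the entrywise-nonnegative operator $(e^\eta-1)a_N$, with entrywise domination passing to operator norm even though $a_N$ itself is sign-indefinite. The rest is standard resolvent manipulation; the one thing to watch is the bookkeeping of constants leading to the precise decay rate $\eta_z$ in \eqref{etaz} (any $\eta<\log(1+\Delta\kappa_z/(2J))$ would work, but $\eta_z$ produces the clean factor $2$ from the Neumann series).
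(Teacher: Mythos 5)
Your proof is correct and proceeds by the standard Combes--Thomas mechanism for Schr\"odinger-type operators with kinetic term relatively bounded by the potential: a $1$-Lipschitz exponential weight $e^{\eta d^N(\cdot,\mathcal B)}$, the entrywise bound on the conjugation commutator, transfer of that entrywise bound to an operator-norm bound after sandwiching by $V_N^{\pm 1/2}$ via Proposition~\ref{prop:relativebound}, and a Neumann series -- which is precisely the content of the abstract result \cite[Prop.\ 3.1]{ARFS} that the paper cites as its proof. Your argument simply carries out in full the proof the paper delegates by reference, and the bookkeeping of constants (choosing $\eta_z$ so that $e^{\eta_z}-1=\Delta\kappa_z/(4J)$ yields the Neumann-series factor $2$) matches \eqref{etaz} exactly.
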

\begin{proof} An abstract result of this form for Schr\"odinger-type operators whose kinetic term is controlled by the potential such as in Proposition \ref{prop:relativebound} was shown in \cite[Prop.\ 3.1]{ARFS}. The theorem now follows from plugging in the constants particular to this model.
\end{proof}
Now, let $K\in\N$ and for any $\delta\in(0,1)$, define $E_{K,\delta}:=\left(1-\frac{2J}{\Delta}\right)(K+1-\delta)$. Moreover, for any non-negative multiplication operator $W_N$ on $\ell^2(\SSS_L^N)$, let $Q_{K,\delta}^N\equiv Q^N_{K,\delta}(L,W_N):=\idty_{[0,E_{K,\delta}]}(H_N(L)+W_N)$ be the spectral projection of $H_N+W_N$ associated to the energy interval $[0,E_{K,\delta}]$. Lastly, for any $K\in\N$, we introduce the set $\mathbb{S}_{L,K}^N:=\{X_{\bf m}: {\bf m}\in {\bf M}_{L,K}^N\}$, where the sets ${\bf M}_{L,K}^N$ were introduced in Definition \ref{def:MLKN}. 
\begin{thm}  Let $\mathcal{A}\subset \SSS_L^N$ be a set of configurations. We then get the following estimate:
	\begin{align} \label{eq:CTCT}
	\norm{ P_\mathcal{A}Q_{K,\delta}^N}=\norm{ Q_{K,\delta}^NP_\mathcal{A}}\le  C_{N,K} e^{-\mu_{K} d^N(\mathcal{A},\SSS_{L,K}^N)},
	\end{align}
	where
	\begin{align}\label{eq:fullCT2'2}
	C_{N,K} &=C_{N,K}(\Delta,\delta,J)=\max\left\{1,\frac{8K(K+1)}{V_{N,0}\delta^2}\right\}\\\quad\mbox{and}\quad \mu_{K} &= \mu_{K}(\Delta,\delta,J)=\log\left(1+\frac{\delta (\Delta-2J)}{16J(K+1)}\right)  .
	\end{align}
	\label{thm:specestimate}
\end{thm}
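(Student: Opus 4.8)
The plan is to derive Theorem \ref{thm:specestimate} from the Combes--Thomas estimate of Theorem \ref{prop:CT} by a contour integration argument. First I would set $Y_N := W_N$, which is a non-negative multiplication operator, so that $H_N + Y_N = H_N + W_N$ is the operator whose spectral projection $Q^N_{K,\delta}$ we wish to control. The key point is that the spectral projection onto $[0,E_{K,\delta}]$ can be written via the Riesz formula $Q^N_{K,\delta} = \frac{1}{2\pi i}\oint_{\Gamma}(z - H_N - W_N)^{-1}\,dz$ for a suitable contour $\Gamma$ encircling the interval $[0,E_{K,\delta}]$ and staying away from the rest of the spectrum. Since $H_N \geq 0$ (as $-\tfrac{1}{2\Delta}A_N + V_N \geq (1-\tfrac{2J}{\Delta})V_N \geq 0$ by Proposition \ref{prop:relativebound}) and $W_N \geq 0$, the spectrum of $H_N + W_N$ below $E_{K,\delta}$ is separated from $(E_{K,\delta},\infty)$; the natural choice is a rectangle (or circle) of ``radius'' comparable to $E_{K,\delta}$, and I would pick $\Gamma$ so that $\dist(\Gamma, \sigma(H_N+W_N)) \gtrsim (1-\tfrac{2J}{\Delta})\,\delta$, using that the gap from $E_{K,\delta}=(1-\tfrac{2J}{\Delta})(K+1-\delta)$ to the next relevant threshold $E_{K+1}=(K+1)(1-\tfrac{2J}{\Delta})$ has width $(1-\tfrac{2J}{\Delta})\delta$. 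Then I would sandwich with $P_\mathcal{A}$ and $P_{\SSS^N_{L,K}}$, apply Theorem \ref{prop:CT} pointwise in $z$ on $\Gamma$, and integrate.

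The crucial verification is the hypothesis \eqref{initbound} of Theorem \ref{prop:CT}: for $z \in \Gamma$ I need $\|V_N^{1/2}(H_N + W_N - z)^{-1}V_N^{1/2}\| \leq 1/\kappa_z$ with an explicit $\kappa_z$, and then I need to feed the resulting $\eta_z = \log(1 + \tfrac{\Delta\kappa_z}{4J})$ into the contour bound. To get $\kappa_z$, write $V_N^{1/2}(H_N + W_N - z)^{-1}V_N^{1/2}$ and use the relative bound: from $-4JV_N \leq A_N \leq 4JV_N$ one has $H_N + W_N = -\tfrac{1}{2\Delta}A_N + V_N + W_N \geq (1 - \tfrac{2J}{\Delta})V_N$, so on the range where $V_N$ is invertible the operator $V_N^{-1/2}(H_N+W_N)V_N^{-1/2} \geq (1-\tfrac{2J}{\Delta})\idty$; combined with $\dist(z, \sigma(H_N+W_N)) \gtrsim (1-\tfrac{2J}{\Delta})\delta$ and an estimate of the form $\|V_N^{1/2}(H_N+W_N-z)^{-1}V_N^{1/2}\| \leq \|V_N (H_N+W_N-z)^{-1}\| \cdot (\text{stuff})$ — more carefully, using that for $A \geq cV \geq 0$ and $z$ at distance $\rho$ from $\sigma(A)$ one controls $\|V^{1/2}(A-z)^{-1}V^{1/2}\|$ by $c^{-1}$ near the spectrum and by $\rho^{-1}\cdot\|V^{1/2}(A-z)^{-1/2}\|^2$-type manipulations away from it — one should land on $\kappa_z$ of size $\gtrsim \tfrac{(\Delta-2J)\delta}{\Delta}$ up to constants, which after plugging into \eqref{etaz} and optimizing yields precisely $\mu_K = \log(1 + \tfrac{\delta(\Delta-2J)}{16J(K+1)})$, with the $(K+1)$ in the denominator entering because the perimeter of $\Gamma$ scales like $E_{K,\delta} \sim K$ and a point on the far side of $\Gamma$ has $\kappa_z$ degraded by a factor $\sim 1/K$.

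After the pointwise bound $\|P_\mathcal{A}(H_N+W_N-z)^{-1}P_{\SSS^N_{L,K}}\| \leq \tfrac{2}{V_{N,0}\kappa_z} e^{-\eta_z d^N(\mathcal{A},\SSS^N_{L,K})}$, I would bound $\|P_\mathcal{A}Q^N_{K,\delta}\| \leq \tfrac{1}{2\pi}\oint_\Gamma \|P_\mathcal{A}(H_N+W_N-z)^{-1}P_{\SSS^N_{L,K}}\|\,|dz|$, using $Q^N_{K,\delta} = P_{\SSS^N_{L,K}}Q^N_{K,\delta}$-type reasoning only insofar as the projection is inserted for free because we are free to choose $\mathcal{B} = \SSS^N_{L,K}$ in Theorem \ref{prop:CT} and the exponential weight is what we want; the length of $\Gamma$ is $\mathcal{O}(E_{K,\delta}) = \mathcal{O}(K)$ and $\inf_{z\in\Gamma}\kappa_z \gtrsim \tfrac{(\Delta-2J)\delta}{\Delta}$, while $\eta_z \geq \mu_K$ uniformly along the part of $\Gamma$ that dominates, so the integral produces the prefactor $C_{N,K} = \max\{1, \tfrac{8K(K+1)}{V_{N,0}\delta^2}\}$ (the $K(K+1)$ being perimeter times $1/\kappa_z$, the $\delta^2$ from $\delta$ in the perimeter-to-gap ratio and $\delta$ in $\kappa_z$, and the $\max$ with $1$ since a norm of a difference of projections restricted this way is trivially $\leq 1$). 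The norm equality $\|P_\mathcal{A}Q^N_{K,\delta}\| = \|Q^N_{K,\delta}P_\mathcal{A}\|$ is automatic since $Q^N_{K,\delta}$ and $P_\mathcal{A}$ are self-adjoint. The main obstacle I anticipate is the careful bookkeeping of the constants: getting exactly $16J(K+1)$ inside the logarithm and exactly $8K(K+1)/(V_{N,0}\delta^2)$ out front requires choosing the contour $\Gamma$ judiciously (its horizontal extent, vertical extent, and distance to the spectrum) and tracking how $\kappa_z$ varies along it, rather than any deep conceptual difficulty — the structural input (positivity, relative bound, Combes--Thomas) is already in hand.
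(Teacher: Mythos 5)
Your proposal has a genuine gap at its structural core. You propose to take $Y_N := W_N$ and apply the Combes--Thomas bound (Theorem \ref{prop:CT}) directly to the resolvent of $H_N + W_N$ along a contour $\Gamma$ enclosing $[0,E_{K,\delta}]$. This fails for two related reasons. First, there is no spectral gap of $H_N + W_N$ near $E_{K,\delta}$: the thresholds $E_K$ are not spectral gaps of $H_N + W_N$ (its eigenvalues can densely fill $[0,\infty)$ as $L\to\infty$), so you cannot choose $\Gamma$ with $\dist(\Gamma,\sigma(H_N+W_N))\gtrsim (1-\tfrac{2J}{\Delta})\delta$, and in particular the Combes--Thomas hypothesis \eqref{initbound} is \emph{not} verifiable for $H_N+W_N-z$ with $z$ near $E_{K,\delta}$. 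Second, and more decisively, even granting a usable contour, the Riesz formula gives $P_\mathcal{A}Q^N_{K,\delta}=\frac{1}{2\pi i}\oint_\Gamma P_\mathcal{A}(z-H_N-W_N)^{-1}\,dz$, which has no $P_{\SSS^N_{L,K}}$ factor in it. You cannot ``insert it for free'': the freedom to choose $\mathcal{B}$ in Theorem \ref{prop:CT} applies to a quantity of the form $P_\mathcal{A}(\cdot)^{-1}P_\mathcal{B}$, not to the un-sandwiched resolvent appearing in the Riesz integral, and the identity $Q^N_{K,\delta}=P_{\SSS^N_{L,K}}Q^N_{K,\delta}$ is simply false (that $Q^N_{K,\delta}$ is only approximately localized near $\SSS^N_{L,K}$ is precisely the content of the theorem, not a free input).

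The paper's proof resolves both issues simultaneously by one device: it perturbs by $\gamma P_{\SSS^N_{L,K}}$ with $\gamma=(1-\tfrac{2J}{\Delta})K$ and takes $Y_N := W_N + \gamma P_{\SSS^N_{L,K}}$. For this perturbed operator the lower bound $H_N + W_N + \gamma P_{\SSS^N_{L,K}} \geq (1-\tfrac{2J}{\Delta})(K+1)$ does hold (this is where the ``threshold'' enters), so its spectrum genuinely clears the contour and the Combes--Thomas hypothesis \eqref{initbound} can be verified uniformly on $\Gamma$, and in fact the contour integral of the perturbed resolvent vanishes. The spectral projection of $H_N+W_N$ is then written via the second resolvent identity as
\begin{equation*}
Q^N_{K,\delta}=\frac{i\gamma}{2\pi}\,Q^N_{K,\delta}\oint_\Gamma(H_N+W_N-z)^{-1}\,P_{\SSS^N_{L,K}}\,(H_N+W_N+\gamma P_{\SSS^N_{L,K}}-z)^{-1}\,dz\,,
\end{equation*}
which is exactly what produces the middle factor $P_{\SSS^N_{L,K}}$ your argument needs but could not legitimately introduce, and it is to the \emph{perturbed} resolvent that Theorem \ref{prop:CT} is applied. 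The factor $\|Q^N_{K,\delta}(H_N+W_N-z)^{-1}\|\leq(\delta')^{-1}$ on the left uses only $\dist(\Gamma,[0,E_{K,\delta}])\geq\delta'$, which is fine for the unperturbed operator restricted to the range of $Q^N_{K,\delta}$. Your discussion of how the constants $16J(K+1)$ and $8K(K+1)/(V_{N,0}\delta^2)$ arise is then built on the incorrect $\kappa_z$ estimate; the correct source of the $(K+1)$ factors is the lower bound $V_N^{-1/2}(H_N+Y_N-E)V_N^{-1/2}\geq(1-\tfrac{2J}{\Delta})\tfrac{\delta}{2(K+1)}$ for the \emph{perturbed} operator, not any resolvent-distance estimate for $H_N+W_N$ itself.
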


\begin{proof}  We follow ideas from \cite[Proof of Lemma 8.2]{EKS} and \cite{Abel}. If $\mathcal{A}\cap\SSS_{L,K}^N\neq\emptyset$, which implies $d^N(\mathcal{A},\SSS_{L,K}^N)=0$, then \eqref{eq:CTCT} immediately follows from $C_{N,K}\geq 1$. Hence, assume ${\mathcal{A}}\subset \SSS_L^N\setminus\SSS_{L,K}^N$ from now on.  Then, let us choose $Y_N=W_N+\gamma P_{\SSS_{L,K}^N}$ with $\gamma=\left(1-\frac{2J}{\Delta}\right)K$ and show that the operator
$H_N+W_N+\gamma P_{\SSS_{L,K}^N}$ satisfies the assumptions of Theorem \ref{prop:CT}. Observe that for any $E\in[0,E_{K,\delta/2}]$ one has
\begin{align}
	&V_N^{-1/2}(H_N+Y_N-E)V_N^{-1/2}\geq -\frac{1}{2\Delta} V_N^{-1/2}A_NV_N^{-1/2}+\idty+\gamma P_{\SSS_{L,K}^N}V_N^{-1}-EV_N^{-1}\\\geq& \left(1-\frac{2J}{\Delta}\right)+\left(\left(1-\frac{2J}{\Delta}\right)K-E_{K,\delta/2}\right)V_{N}^{-1}P_{\SSS_{L,K}^N}-E_{K,\delta/2}\overline{P}_{\SSS_{L,K}^N}V_N^{-1}\\=& \left(1-\frac{2J}{\Delta}\right)\left(\idty-(1-\delta/2) V_N^{-1}P_{\SSS_{L,K}^N}-(K+1-\delta/2)V_N^{-1}\overline{P}_{\SSS_{L,K}^N}\right)\:.
	\label{eq:estimate}
\end{align}
Now, note that 
\begin{equation}
-(1-\delta/2)V^{-1}_NP_{\SSS_{L,K}^N}\geq -(1-\delta/2)P_{\SSS_{L,K}^N} \quad\mbox{as well as}\quad -V^{-1}_N\overline{P}_{\SSS_{L,K}^N}\geq -\frac{1}{K+1}\overline{P}_{\SSS_{L,K}^N}\:, 
\end{equation}
which we use to further estimate \eqref{eq:estimate}:
\begin{equation}
\eqref{eq:estimate}\geq \left(1-\frac{2J}{\Delta}\right)\left(\frac{\delta}{2} P_{\SSS_{L,K}^N}+\frac{\delta/2}{K+1}\overline{P}_{\SSS_{L,K}^N}\right)\geq \left(1-\frac{2J}{\Delta}\right)\frac{\delta}{2(K+1)}\:.
\end{equation}
From this, it can be concluded that for any $E\in[0,E_{K,\delta}]$, we have $E\notin\sigma(H_N+W_N+\gamma P_{\SSS_{L,K}^N})$ and moreover that
\begin{equation}
\|V_N^{1/2}(H_N+W_N+\gamma P_{\SSS_{L,K}^N}-E)^{-1}V_N^{1/2}\|\leq \frac{2(K+1)}{\delta\left(1-\frac{2J}{\Delta}\right)}
\end{equation}
and, by a slight modification (see \cite[Lemma 4.3]{EKS}), one gets for any $\varepsilon\in\R$:
\begin{equation}
\|V_N^{1/2}(H_N+W_N+\gamma P_{\SSS_{L,K}^N}-E+i\varepsilon)^{-1}V_N^{1/2}\|\leq \frac{4(K+1)}{\delta\left(1-\frac{2J}{\Delta}\right)}\:.
\end{equation}
From Theorem \ref{prop:CT}, we therefore get for any $\mathcal{A}\in\SSS_L^N\setminus\SSS_{L,K}^N$
\begin{equation} \label{eq:CTspinJ}
\|P_{\SSS_{L,K}^N}(H_N+W_N+\gamma P_{\SSS_{L,K}^N}-E+i\varepsilon)^{-1}P_{\mathcal{A}}\|\leq \frac{8(K+1)}{V_{N,0}\cdot\delta\left(1-\frac{2J}{\Delta}\right)}e^{-\mu_{K} d^N(\mathcal{A},\SSS_{L,K}^N)}\:,
\end{equation}
for any $E\in[0,E_{K,\delta/2}]$ and any $\varepsilon\in\R$. The constant $\mu_{K}=\mu_{K}(\Delta,\delta,J)$ is given by
\begin{equation}
\mu_{K}=\log\left(1+\frac{\delta (\Delta-2J)}{16J(K+1)}\right)\:.
\end{equation}
Now, let $\Gamma$ be the circle centered at $\frac{1}{2}E_{K,\delta}$ with radius $R:=\frac{1}{2}\left(1-\frac{2J}{\Delta}\right)(K+1)$. Note that this implies $\dist(\Gamma, [0,E_{K,\delta}])=\frac{\delta}{2}\left(1-\frac{2J}{\Delta}\right)=:\delta'$. Moreover, it follows from Proposition \ref{prop:relativebound} that
\begin{align}
&H_N+W_N+\gamma P_{\SSS_{L,K}^N}\geq-\frac{1}{2\Delta}A_N+V_N+\gamma P_{\SSS_{L,K}^N}\geq \left(1-\frac{2J}{\Delta}\right)V_N+\gamma P_{\SSS_{L,K}^N}\\&\geq \left(1-\frac{2J}{\Delta}\right)(V_N+K)P_{\SSS_{L,K}^N}+\left(1-\frac{2J}{\Delta}\right)V_N\overline{P}_{\SSS_{L,K}^N}\\&\geq \left(1-\frac{2J}{\Delta}\right)(K+1)\:,
\end{align}
which means that there is no spectrum of $H_N+W_N+\gamma P_{\SSS_{L,K}^N}$ inside the circle $\Gamma$, which implies
\begin{equation}
\oint_\Gamma (H_N+W_N+\gamma P_{\SSS_{L,K}^N}-z)^{-1}\text{d}z=0\:.
\end{equation}
%Moreover, note that
%\begin{equation}
%\dist\left(\sigma\left(H_N+\gamma P_{\SSS_{L,K}^N}\right),\Gamma\right)\geq \delta'\:,
%\end{equation}
We therefore get
\begin{align}
Q_{K,\delta}^N=&(Q_{K,\delta}^N)^2=\frac{i}{2\pi }Q_{K,\delta}^N\oint_\Gamma (H_N+W_N-z)^{-1}\text{d}z\\=&\frac{i}{2\pi }Q_{K,\delta}^N\oint_\Gamma [(H_N+W_N-z)^{-1}-(H_N+W_N+\gamma P_{\SSS_{L,K}^N}-z)^{-1}]\text{d}z\\=&\frac{i\gamma}{2\pi}Q_{K,\delta}^N\oint_\Gamma (H_N+W_N-z)^{-1}P_{\SSS_{L,K}^N}(H_N+W_N+\gamma P_{\SSS_{L,K}^N}-z)^{-1}\text{d}z
\end{align}
We then proceed to estimate 
\begin{align}
	\|Q_{K,\delta}^NP_\mathcal{A}\|&\leq \gamma R\max_{z\in\Gamma}\left[\|Q_{K,\delta}^N(H_N+W_N-z)^{-1}\|\|P_{\SSS_{L,K}^N}(H_N+W_N+\gamma P_{\SSS_{L,K}^N}-z)^{-1}P_\mathcal{A}\|\right]\\
	&\leq \frac{\gamma R}{\delta'}\max_{z\in\Gamma}|\|P_{\SSS_{L,K}^N}(H_N+W_N+\gamma P_{\SSS_{L,K}^N}-z)^{-1}P_\mathcal{A}\|\:,
	\label{eq:specCT}
	\end{align}
where we have used $\|Q_{K,\delta}^N(H_N+W_N-z)^{-1}\|\leq (\delta')^{-1}$. Note that for any $z\in\Gamma$, we have 
\begin{equation}
\Re z\leq\left(1-\frac{2J}{\Delta}\right)\left(K+1-\frac{\delta}{2}\right)=E_{K,\delta/2}\:,
\end{equation}
which means that we can apply \eqref{eq:CTspinJ} to estimate $\|P_{\SSS_{L,K}^N}(H_N+W_N+\gamma P_{\SSS_{L,K}^N}-z)^{-1}P_\mathcal{A}\|$ uniformly in $z\in\Gamma$. We then get
\begin{equation}
\eqref{eq:specCT}\leq \frac{8\gamma R(K+1)}{V_{N,0}\delta'\delta(1-\frac{2J}{\Delta})}e^{-\mu_{K}d^N(\mathcal{A},\SSS_{L,K}^N)}=\frac{8K(K+1)}{V_{N,0}\delta^2} e^{-\mu_{K}d^N(\mathcal{A},\SSS_{L,K}^N)}\:,
\end{equation}
which is the desired result.
\end{proof} 
\begin{remark}
This result applies in particular to non-negative multiplication operators $W_N$, that are of the form
\begin{equation}
W_N(X)=\sum_{i=1}^N \nu(x_i)
\end{equation} 
for any $X=(x_1,x_2,\dots,x_N)\in\SSS_L^N$, where $\nu:\Lambda_L\rightarrow \R^+_0$ is an arbitrary non-negative function with domain $\Lambda_L$. In this case, $W_N$ corresponds to a background magnetic field in $3$-direction, whose value at each site $j$ is given by $\nu(j)$. To be more precise, we have
\begin{equation}
W_N=U_L^N \left(\sum_{j=1}^L\mathcal{N}_j^{loc}\nu(j)\upharpoonright_{\mathcal{H}_L^N}\right)(U_L^N)^*\:,
\end{equation}
where the unitary operator $U_L^N$ was given in \eqref{eq:unitary}.
See also \cite[Remark 2.7]{F}. 
\end{remark}
 \subsection{Entanglement Entropy}
 In what follows, we will mainly use ideas and previous results from \cite[Sec.\ 5]{ARFS}. Let $\SSS_L:=\bigcup_{N=0}^{2JL}\SSS_L^N$, which allows us to identify 
 \begin{equation} \label{eq:hilspace}
 \ell^2(\SSS_L)=\bigoplus_{N=0}^{2JL}\ell^2(\SSS_L^N)\:.
 \end{equation} 
 Next, let $\psi\in\ell^2(\SSS_L)$ be normalized.
We denote the associated density matrix by $\rho_\psi=|\psi\rangle\langle \psi|$.\footnote{Using physicist's notation, for any $\alpha, \beta$ in a Hilbert space $\mathcal{H}$, the symbol ``$|\alpha\rangle\langle\beta|$" denotes the rank-one operator $\psi\mapsto \alpha\langle \beta,\psi\rangle$ for any $\psi\in\mathcal{H}$.}
 Analogous to \eqref{eq:SSS}, for any subset $\Gamma\subset\Lambda_L$, and any $N\in\{1,\dots,2J|\Gamma|\}$, we define
\begin{equation}
\SSS_{\Gamma}^N:=\left\{(x_1,x_2,\dots,x_N)\in\Gamma^N: x_1\leq x_2\leq\cdots\leq x_N\mbox{ and } \min_{k\in\{1,2,\dots, N-2J\}}(x_{k+2J}-x_k)\geq 1 \right\}
\end{equation}
and -- as before -- we set $\SSS_\Gamma^0:=\{\emptyset\}$.
Moreover, $\ell^2(\SSS_\Gamma)$ is defined analogously to \eqref{eq:hilspace}. 
Now, let $\ell\in\N$ such that $1<\ell<L$. Then, $\Lambda_L=\Lambda_\ell\cup\Lambda_\ell^c$ is a spatial bipartition of $\Lambda_L$ into two disjoint discrete subintervals, with the corresponding decomposition of the Hilbert space $\ell^2(\SSS_L)=\ell^2(\SSS_{\Lambda_\ell})\otimes\ell^2(\SSS_{\Lambda_\ell^c})$. Moreover, for any $X\in\SSS_{\Lambda_\ell}$ and any $Y\in\SSS_{\Lambda_\ell^c}$, which implies that $X\cup Y\in\SSS_L$, one naturally identifies $\phi_{X\cup Y}=\phi_X\otimes \phi_Y$.\footnote{For two multisets $X=(x_1,x_2,\dots,x_j)\in\SSS_{\Lambda_\ell}$ and $Y=(y_1,y_2,\dots,y_K)\in\SSS_{\Lambda_\ell^c}$, the notation ``$X\cup Y$" denotes the multiset $(x_1,x_2,\dots,x_j,y_1,y_2,\dots,y_k)\in\SSS_L$. This means in particular that ${\bf m}_{X\cup Y}={\bf m}_X+{\bf m}_Y$.}

The reduced state $\rho_1: \ell^2(\SSS_{\Lambda_\ell})\rightarrow \ell^2(\SSS_{\Lambda_\ell})$ is the linear operator given by
\begin{equation} \label{eq:reducedstate}
\rho_1(\psi,\Lambda_\ell)\equiv\rho_1=\mbox{Tr}_{\Lambda_\ell^c}(\rho_\psi)=\sum_{X_1,X_2\in\SSS_{\Lambda_\ell}}\sum_{Y\in\SSS_{\Lambda_\ell^c}}\psi(X_1\cup Y)\overline{\psi(X_2\cup Y)}|\phi_{X_1}\rangle\langle \phi_{X_2}|
\end{equation}
where $\mbox{Tr}_{\Lambda_\ell^c}(\cdot)$ denotes the partial trace over the subsystem $\Lambda_\ell^c$. 
Then, the entanglement entropy of $\rho_\psi$, which we denote by $\mathcal{E}(\rho_\psi)$ is given by 
\begin{equation}
\mathcal{E}(\rho_\psi)=-\mbox{Tr}(\rho_1\log\rho_1)=:\mathcal{S}(\rho_1)\:,
\end{equation}
where $\mathcal{S}(\rho_1)$ denotes the von Neumann entropy of the reduced state $\rho_1$.\footnote{We adapt the convention $0\log 0=0$.} As in \cite{ ARFS, BW18}, we will actually show estimates for the $\alpha$-R\'enyi entropies $\mathcal{S}_\alpha$ of $\rho_1$, which are given by
\begin{equation}
\mathcal{S}_\alpha(\rho_1):=\frac{1}{1-\alpha}\log\mbox{Tr}[(\rho_1)^\alpha]\:,
\end{equation}
where $\alpha\in(0,1)$. Since for every $\alpha\in(0,1)$ one has $\mathcal{S}(\rho_1)\leq \mathcal{S}_\alpha(\rho_1)$, showing suitable bounds for $\mathcal{S}_\alpha(\rho_1)$ will then readily imply the desired result for the entanglement entropy $\mathcal{E}(\rho_\psi)$. Let us now further analyze \eqref{eq:reducedstate}:
\begin{equation} \label{eq:reducedstatesplit}
\rho_1=\sum_{X_1,X_2\in\SSS_{\Lambda_\ell}}\psi(X_1)\overline{\psi(X_2)}|\phi_{X_1}\rangle\langle\phi_{X_2}|+\sum_{X_1,X_2\in\SSS_{\Lambda_\ell}}\sum_{Y\in\SSS_{\Lambda_\ell^c}\setminus\{\emptyset\}}\psi(X_1\cup Y)\overline{\psi(X_2\cup Y)}|\phi_{X_1}\rangle\langle\phi_{X_2}|\:,
\end{equation}
where the first sum simply corresponds to the contributions in \eqref{eq:reducedstate}, for which $Y=\emptyset$. Now, let the vector $\Psi\in\ell^2(\SSS_{\Lambda_\ell})$ be given by $\Psi:=\sum_{X_3\in\SSS_{\Lambda_\ell}}\psi(X_3)\phi_{X_3}$, which clearly satisfies $\|\Psi\|_{\ell^2(\SSS_{\Lambda_\ell})}\leq\|\psi\|_{\ell^2(\SSS_L)}$ and  observe that
\begin{equation}
\sum_{X_1,X_2\in\SSS_{\Lambda_\ell}}\psi(X_1)\overline{\psi(X_2)}|\phi_{X_1}\rangle\langle\phi_{X_2}|=|\Psi\rangle\langle \Psi|\:.
\end{equation}
So, we write $\rho_1=|\Psi\rangle\langle\Psi|+\hat{\rho}_1$ with $\hat{\rho}_1$ being equal to the second sum in \eqref{eq:reducedstatesplit}.
Let us now focus on $\mbox{Tr}[(\rho_1)^\alpha]$:
\begin{align}
\mbox{Tr}[(\rho_1)^\alpha]&=\mbox{Tr}[(|\Psi\rangle\langle\Psi|+\hat{\rho}_1)^\alpha]\leq 2\mbox{Tr}[(|\Psi\rangle\langle\Psi|)^\alpha]+2\mbox{Tr}[(\hat{\rho}_1)^\alpha]\leq 2+2\mbox{Tr}[(\hat{\rho}_1)^\alpha]\:,
\end{align}
where we have used the quasi-norm property of $\mbox{Tr}|\cdot|^\alpha$, cf.\ \cite[Satz 3.21]{Weidmann} and the fact that $|\Psi\rangle\langle \Psi|$ is a non-negative rank-one operator with norm less than or equal to one.
Let us now further estimate 
\begin{align}\label{eq:teileins}
\mbox{Tr}[(\hat{\rho}_1)^\alpha]&=\sum_{X\in\SSS_{\Lambda_\ell}}\langle \phi_X,(\hat{\rho}_1)^\alpha\phi_X\rangle\leq \sum_{X\in\SSS_{\Lambda_\ell}}\langle \phi_X,\hat{\rho}_1\phi_X\rangle^\alpha=\sum_{j=0}^{2J\ell}\sum_{X\in\SSS_{\Lambda_\ell}^j}\langle\phi_X,\hat{\rho}_1\phi_X\rangle^\alpha
\\&\leq \left|\bigcup_{j=0}^{4J-1}\SSS_{\Lambda_\ell}^j\right|^{1-\alpha}+\sum_{j=4J}^{2J\ell}\sum_{X\in\SSS_{\Lambda_\ell}^j}\langle\phi_X,\hat{\rho}_1\phi_X\rangle^\alpha \label{eq:teilzwei}
\end{align}
where we have used Jensen's inequality for the first estimate in \eqref{eq:teileins} and the second estimate \eqref{eq:teilzwei} just follows from maximizing $\sum\sum \langle \phi_X,\hat{\rho}_1\phi_X\rangle^\alpha$ under the constraint $\sum\sum\langle\phi_X,\hat{\rho}_1\phi_X\rangle\leq 1$. We estimate further by noting $|\SSS_{\Lambda_\ell}^j|\leq \ell^j$ and thus -- assuming $\ell\geq 4J$ --
\begin{equation}
\eqref{eq:teilzwei}\leq \ell^{4J(1-\alpha)}+\sum_{j=4J}^{2J\ell}\sum_{X\in\SSS_{\Lambda_\ell}^j}\langle\phi_X,\hat{\rho}_1\phi_X\rangle^\alpha\:.
\end{equation}
Now, for each $X\in\SSS_{\Lambda_\ell}$ with $|X|=j$, we get
\begin{align}
\langle \phi_X,\hat{\rho}_1\phi_X\rangle&=\sum_{Y\in\SSS_{\Lambda_\ell^c}\setminus\{\emptyset\}}|\psi(X\cup Y)|^2=\|P_{\mathcal{A}_X}\psi\|^2\:,
\end{align}
where given any $X\in\SSS_{\Lambda_\ell}$, we have defined $\mathcal{A}_{X}:=\{X\cup Y: Y\in\SSS_{\Lambda_\ell^c}\setminus\{\emptyset\}\}$.
Altogether, these considerations show the following 
\begin{lemma} Let $\ell\geq 4J$ and  $\psi\in\ell^2(\SSS_L)$ be normalized. Then, for any $\alpha\in(0,1)$ we get
	\begin{equation}
		\mbox{\emph{Tr}}[({\rho}_1)^\alpha]\leq  2+2\ell^{4J(1-\alpha)}+2\sum_{j=4J}^{2J\ell}\sum_{X\in\SSS_{\Lambda_\ell}^j}\|P_{\mathcal{A}_{X}}\psi\|^{2\alpha}\:.
	\end{equation}
	\label{lemma:traceestimate}
\end{lemma}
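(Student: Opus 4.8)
The plan is simply to collect the chain of estimates carried out in the paragraphs immediately preceding the statement and record the conclusion; no new idea is needed beyond organizing them.

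First I would start from the splitting $\rho_1=|\Psi\rangle\langle\Psi|+\hat\rho_1$ read off from \eqref{eq:reducedstate}--\eqref{eq:reducedstatesplit}, where $\Psi:=\sum_{X_3\in\SSS_{\Lambda_\ell}}\psi(X_3)\phi_{X_3}$ satisfies $\|\Psi\|_{\ell^2(\SSS_{\Lambda_\ell})}\leq\|\psi\|_{\ell^2(\SSS_L)}=1$ and $\hat\rho_1\geq 0$ is the partial-trace contribution with $Y\neq\emptyset$. Invoking the quasi-subadditivity of $\sigma\mapsto\Tr|\sigma|^\alpha$ for $\alpha\in(0,1)$ (cf.\ \cite[Satz 3.21]{Weidmann}) and noting that $|\Psi\rangle\langle\Psi|$ is a non-negative rank-one operator whose only nonzero eigenvalue is $\|\Psi\|^2\leq 1$, so that $\Tr[(|\Psi\rangle\langle\Psi|)^\alpha]=\|\Psi\|^{2\alpha}\leq 1$, one obtains $\Tr[(\rho_1)^\alpha]\leq 2\Tr[(|\Psi\rangle\langle\Psi|)^\alpha]+2\Tr[(\hat\rho_1)^\alpha]\leq 2+2\Tr[(\hat\rho_1)^\alpha]$.

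Next I would bound $\Tr[(\hat\rho_1)^\alpha]=\sum_{X\in\SSS_{\Lambda_\ell}}\langle\phi_X,(\hat\rho_1)^\alpha\phi_X\rangle$. Since $t\mapsto t^\alpha$ is concave on $[0,\infty)$, Jensen's inequality applied to the spectral measure of $\hat\rho_1$ in the unit vector $\phi_X$ gives $\langle\phi_X,(\hat\rho_1)^\alpha\phi_X\rangle\leq\langle\phi_X,\hat\rho_1\phi_X\rangle^\alpha=:a_X^\alpha$ (here it matters to argue via the spectral theorem, not entrywise, since $\hat\rho_1$ need not be diagonal in $\{\phi_X\}$). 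Recording $\sum_X a_X=\Tr\hat\rho_1\leq\Tr\rho_1=1$, I would split the sum according to the particle number $j=|X|$. For $0\leq j\leq 4J-1$, concavity of $t\mapsto t^\alpha$ (equivalently H\"older) together with $\sum_X a_X\leq 1$ yields $\sum_{|X|\leq 4J-1}a_X^\alpha\leq\bigl|\bigcup_{j=0}^{4J-1}\SSS_{\Lambda_\ell}^j\bigr|^{1-\alpha}$; since the sets $\SSS_{\Lambda_\ell}^j$ are pairwise disjoint with $|\SSS_{\Lambda_\ell}^j|\leq\ell^j$ and $\ell\geq 4J$, this cardinality is at most $\sum_{j=0}^{4J-1}\ell^j\leq 4J\,\ell^{4J-1}\leq\ell^{4J}$, so this block contributes at most $\ell^{4J(1-\alpha)}$.

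Finally, for $4J\leq j\leq 2J\ell$ I would unpack $\hat\rho_1$ to identify $a_X=\langle\phi_X,\hat\rho_1\phi_X\rangle=\sum_{Y\in\SSS_{\Lambda_\ell^c}\setminus\{\emptyset\}}|\psi(X\cup Y)|^2=\|P_{\mathcal{A}_X}\psi\|^2$, where $\mathcal{A}_X=\{X\cup Y:Y\in\SSS_{\Lambda_\ell^c}\setminus\{\emptyset\}\}$ and the last equality uses that the multisets $X\cup Y$ are distinct as $Y$ varies. Hence this block contributes $\sum_{j=4J}^{2J\ell}\sum_{X\in\SSS_{\Lambda_\ell}^j}\|P_{\mathcal{A}_X}\psi\|^{2\alpha}$, and combining the two blocks with the factor $2$ from the first step gives the claimed bound. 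The argument is entirely routine; the only points requiring a little care are the operator-Jensen step $\langle\phi_X,(\hat\rho_1)^\alpha\phi_X\rangle\leq\langle\phi_X,\hat\rho_1\phi_X\rangle^\alpha$ and the bookkeeping that the crude estimate $|\SSS_{\Lambda_\ell}^j|\leq\ell^j$ together with $\ell\geq 4J$ collapses the low-particle-number block into the single term $\ell^{4J(1-\alpha)}$.
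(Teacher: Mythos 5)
Your argument is correct and is essentially identical to the paper's proof: the same splitting $\rho_1=|\Psi\rangle\langle\Psi|+\hat\rho_1$, the same quasi-norm and Jensen steps, the same separation of the sum at $j=4J$, and the same crude bound $|\SSS_{\Lambda_\ell}^j|\leq\ell^j$ combined with $\ell\geq 4J$ to collapse the low-particle block into $\ell^{4J(1-\alpha)}$. Your parenthetical remark that the Jensen step must be carried out via the spectral theorem (since $\hat\rho_1$ is not diagonal in $\{\phi_X\}$) is a correct and worthwhile clarification that the paper leaves implicit.
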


\section{Proof of the logarithmically corrected entanglement bound} \label{sec:3}

\subsection{Local distance estimates}
The following lemma, whose proof can be found in Appendix \ref{app:b2}, provides us with an estimate for the distance of a configuration in $\SSS_{L}$ to the nearest configuration in $\SSS_{L,K}$. 

\begin{lemma} \label{lemma:localminimizer}
Let $X=(x_1,x_2,\dots,x_j)\in\SSS_{\Lambda_\ell}^j$ with $j\geq 4J$ and $Y=(y_1,y_2,\dots,y_k)\in\SSS_{\Lambda_{\ell}^c}^k$, where $k\in\N$. Then, for any $K\geq 4J^2$, we have the following estimate:
\begin{equation}d^j(X,\SSS_{\Lambda_\ell,K}^j)\leq d^{j+k}(X\cup Y,\SSS_{L,K}^{j+k})\:. \label{eq:distest1}
\end{equation}
\end{lemma}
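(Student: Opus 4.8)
The statement compares the distance from $X$ (living in the left block $\Lambda_\ell$) to the set of low-potential configurations in the left block, with the distance from the full configuration $X \cup Y$ to the low-potential configurations in the whole chain. Since by the distance formula \eqref{eq:distformula} the graph distance splits as a sum $\sum_i |x_i - y_i|$ over matched particles, the natural strategy is: take a configuration $Z \in \SSS_{L,K}^{j+k}$ realizing the minimal distance $d^{j+k}(X\cup Y, \SSS_{L,K}^{j+k})$, and extract from it a configuration supported on $\Lambda_\ell$ which still has potential $\le K$ and costs no more than $d^{j+k}(X\cup Y, Z)$ to reach from $X$. The key point is that $Z$, being close to $X\cup Y$ and $X\cup Y$ having a block of at least $j \ge 4J$ particles in $\Lambda_\ell$, must itself place enough particles in (or near the right end of) $\Lambda_\ell$; one then \emph{truncates} $Z$ to the portion lying in $\Lambda_\ell$, possibly after sliding particles that $Z$ pushed across the cut back to the boundary site $\ell$.

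\textbf{Steps.} First, I would write $X \cup Y$ in sorted order and note that its first $j$ entries are exactly the entries of $X$ (all $\le \ell$), while the remaining $k$ entries are those of $Y$ (all $> \ell$). Let $Z = (z_1 \le \dots \le z_{j+k})$ be a minimizer of the right-hand distance; by \eqref{eq:distformula}, $d^{j+k}(X\cup Y, Z) = \sum_{i=1}^{j} |x_i - z_i| + \sum_{i=1}^{k}|y_i - z_{j+i}|$. Second, define $\widetilde Z$ to be the configuration obtained from the first $j$ particles of $Z$ by replacing any $z_i > \ell$ (for $i \le j$) with $\ell$; this produces a $j$-particle configuration supported on $\Lambda_\ell$. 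The cost $d^j(X, \widetilde Z) = \sum_{i=1}^j |x_i - \widetilde z_i| \le \sum_{i=1}^j |x_i - z_i|$ because each $x_i \le \ell$, so projecting $z_i$ onto $[1,\ell]$ at the endpoint $\ell$ only decreases the distance to $x_i$ (a one-line convexity/monotonicity observation). Hence $d^j(X,\widetilde Z) \le d^{j+k}(X\cup Y, Z)$. Third — and this is where $K \ge 4J^2$ and $j \ge 4J$ enter — I must check $\widetilde Z \in \SSS_{\Lambda_\ell, K}^j$, i.e. that $V(\widetilde Z) \le K$. This requires two sub-claims: (a) $\widetilde Z$ is a legal configuration (no site occupied by more than $2J$ particles), and (b) $V(\widetilde Z) \le V(Z) \le K$. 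For (a), since $Z$ is legal and truncation only stacks particles at site $\ell$, one needs the count of original $z_i$'s with $z_i \ge \ell$ (among the first $j$) together with those already equal to $\ell$ to stay $\le 2J$; here I'd argue that if too many of $X$'s particles sat near the cut, $Z$ couldn't be legal either while staying close to $X$, using $j \ge 4J$ to guarantee enough "room" — alternatively, one absorbs the excess by sliding interior particles left, which does not increase the distance to $X$ since $X$'s entries are sorted and $\le \ell$. For (b), I'd use the explicit two-site form $v({\bf m}(x),{\bf m}(x+1)) = J({\bf m}(x)+{\bf m}(x+1)) - {\bf m}(x){\bf m}(x+1)$ plus the boundary term: truncating the right tail of $Z$ into site $\ell$ removes all potential contributions from bonds strictly right of $\ell$ and from $Z$'s right boundary, while at site $\ell$ the new occupation is at most $2J$, so the worst added bond/boundary cost is controlled by $4J^2 = V_{N,0}$, which is $\le K$ by hypothesis; here the Appendix B machinery on minimizers (Prop.\ \ref{prop:minimizer}) and the fact $v \ge 0$ whenever occupations are $\le 2J$ do the bookkeeping.

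\textbf{Main obstacle.} The routine part is the distance inequality $d^j(X,\widetilde Z) \le d^{j+k}(X\cup Y, Z)$, which falls out of \eqref{eq:distformula} and the endpoint-projection observation. The genuinely delicate step is controlling the potential of the truncated configuration: one must ensure that collapsing the part of the minimizer $Z$ that spills over into $\Lambda_\ell^c$ back into $\Lambda_\ell$ neither violates the $\le 2J$ occupancy constraint nor raises the potential above $K$. I expect the clean way to handle this is to show one can choose the minimizer $Z$ so that it already keeps its first $j$ particles inside $\Lambda_\ell$ (if not, a rearrangement argument that pushes offending particles to site $\ell$ strictly decreases, or at least does not increase, both the distance to $X\cup Y$ and the potential), thereby reducing (a) and (b) to statements purely about the restriction of a legal low-potential configuration to a sub-interval — and $V$ of a restriction to a sub-interval, with the induced boundary term $J({\bf m}(1) + {\bf m}(\ell))$, is bounded by $V$ of the whole (since all two-site terms $v$ are non-negative and the only bond possibly lost at the right boundary is replaced by the boundary field $J\,{\bf m}(\ell) \le 2J^2 \le \tfrac12 K$). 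Assembling these bounds gives $V(\widetilde Z) \le K$, completing the proof; the whole argument is deferred to Appendix \ref{app:b2} as indicated.
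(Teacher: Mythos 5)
Your overall strategy---start from a distance-minimizing $Z\in\SSS_{L,K}^{j+k}$, and extract from it a $j$-particle configuration in $\Lambda_\ell$ with potential $\le K$ at no extra distance cost---is the same as the paper's, and your identification of the delicate step (legality and potential of the truncated configuration) is correct. However, the concrete construction you propose does not close the argument, and the proposed fixes have real gaps.

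First, the clamping map $\widetilde Z$ (replace every $z_i>\ell$, $i\le j$, by $\ell$) can overload site $\ell$: nothing prevents many of $z_1,\dots,z_j$ from lying past $\ell$, so $\widetilde Z$ need not satisfy the occupancy constraint ${\bf m}_{\widetilde Z}(\ell)\le 2J$, hence need not lie in $\SSS_{\Lambda_\ell}^j$ at all. Second, your fallback suggestion that one may ``choose the minimizer $Z$ so that it already keeps its first $j$ particles inside $\Lambda_\ell$'' is false. Take $J=1/2$, $\ell=10$, $X=(9,10)$, $Y=(100,101,\dots,199)$, $K=4J^2=1$; the configurations of potential $\le 1$ are exactly the translates of the full droplet $(c,c+1,\dots,c+101)$, and minimizing $\sum_i|z_i-(X\cup Y)_i|$ forces $c=98$, so every minimizer has $z_2=99>\ell$. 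Third, the repair by ``sliding interior particles left'' does not obviously preserve $d^j(X,\cdot)\le d^{j+k}(X\cup Y,Z)$: moving a particle from $\ell$ to some site $\ell-m$ can increase its distance to an $x_i$ with $\ell-m<x_i\le\ell$, so the claimed monotonicity fails in general and must be argued site by site. Fourth, your potential estimate is not valid: you can't control $V(\widetilde Z)$ by ``$V$ of the kept part $+\ 4J^2\le K$'', because $V$ of the kept part can already be arbitrarily close to $K$, and the hypothesis is only $V(Z)\le K$, not $V(Z)\le K-4J^2$.

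The paper's proof handles exactly these points. It splits $Z=Z'\cup Z''$ by the first $j$ \emph{particles} (not by sites), and \eqref{eq:cutoffpotential} shows $V({\bf m}_{Z'})\le V({\bf m}_{Z'}+{\bf m}_{Z''})=V(Z)\le K$ because the two supports overlap in at most one site $z_j$ with ${\bf m}_{Z'}(z_j)+{\bf m}_{Z''}(z_j)\le 2J$ and the cross-terms one drops are non-negative. If $z_j\le\ell$ this finishes the proof immediately. If $z_j>\ell$, the paper cannot simply clamp; it splits into the cases $z_1>\ell$ (where it inserts a right-anchored minimizer ${\bf g}$ of the form \eqref{eq:form1}/\eqref{eq:form2} and uses $x_i\le g_i<z_i$), and $z_1\le\ell<z_j$, which is further subdivided according to whether there is enough room near the cut to pack the spillover particles (Case 1) or not (Case 2); each sub-case requires an explicit construction (${\bf c}$, ${\bf e}$, ${\bf f}$) with separate, nontrivial verifications of both the distance inequality and the bound $V\le K$, and the distance bookkeeping in Case 2 is done term by term. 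None of this is routine; it is precisely the content you flagged as ``the genuinely delicate step'' but did not supply.
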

So, the main observation made in Lemma \ref{lemma:localminimizer} is that there always exists a configuration in $\SSS^j_{\Lambda_\ell,K}$ which is at least as close to $X$ as any configuration in $\SSS^{j+k}_{L,K}$ might be to $X\cup Y$. 
We now combine Lemmas \ref{lemma:traceestimate}, \ref{lemma:localminimizer} and Theorem \ref{thm:specestimate}. To this end, for any non-negative function $W: \SSS_L\rightarrow \R_0^+$, let us define $Q_{K,\delta}\equiv Q_{K,\delta}(L,W):=\bigoplus_{N=0}^{2JL}Q_{K,\delta}^N(L,W_N)$ -- the spectral projection of the full Hamiltonian onto the energy interval $[0,E_{K,\delta}]$ with background potential $W$. Here, $W_N:\ell^2(\SSS_L^N)\rightarrow\ell^2(\SSS_L^N)$, denotes the multiplication operator induced by $W$, i.e. $(W_Nf)(X):=W(X)f(X)$, for any $f\in\ell^2(\SSS_L^N)$ and any $X\in\SSS_L^N$.
\begin{cor} \label{cor:traceestimate2}
Fix $K\in\N$ and let $\ell\geq 4J$. Let $\psi\in \mbox{\emph{ran}}(Q_{K,\delta})$ be normalized. Then, for any $\alpha\in(0,1)$, we have the following estimate
	\begin{equation} \label{eq:trace2}
	\mbox{\emph {Tr}}[(\rho_1)^\alpha]\leq 2+2\ell^{4J(1-\alpha)}+2C_K'^{2\alpha}\sum_{j=4J}^{2J\ell}\sum_{X\in\SSS_{\Lambda_\ell}^j}e^{-2\alpha\mu_Kd^j(X,\SSS^j_{\Lambda_\ell,K})}\:,
	\end{equation}
	where
	\begin{equation}
	C_K'=\max\left\{1,\frac{2K(K+1)}{J^2\delta^2}\right\}\:.
	\end{equation}
\end{cor}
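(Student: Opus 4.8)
The plan is to combine the three ingredients listed just before the statement: the trace estimate from Lemma \ref{lemma:traceestimate}, the spectral projection decay bound from Theorem \ref{thm:specestimate}, and the local distance estimate from Lemma \ref{lemma:localminimizer}. First I would start from the conclusion of Lemma \ref{lemma:traceestimate}, which already gives
\[
\mbox{Tr}[(\rho_1)^\alpha]\leq 2+2\ell^{4J(1-\alpha)}+2\sum_{j=4J}^{2J\ell}\sum_{X\in\SSS_{\Lambda_\ell}^j}\|P_{\mathcal{A}_X}\psi\|^{2\alpha}\:,
\]
so the only thing left is to bound each term $\|P_{\mathcal{A}_X}\psi\|^{2\alpha}$. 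Since $\psi\in\mbox{ran}(Q_{K,\delta})$, we have $\psi=Q_{K,\delta}\psi$, and because $Q_{K,\delta}=\bigoplus_N Q_{K,\delta}^N$ while $\mathcal{A}_X\subset\SSS_L^{j+k}$ for various $k$, we can write $P_{\mathcal{A}_X}\psi=P_{\mathcal{A}_X}Q_{K,\delta}\psi$ and estimate $\|P_{\mathcal{A}_X}\psi\|\leq \|P_{\mathcal{A}_X}Q_{K,\delta}\|$ applied sector by sector. Here I need to be a little careful: $\mathcal{A}_X$ contains configurations of different total particle number $N=j+k$ (for all $k\geq 1$ with $\mathcal{A}_X\cap\SSS_L^N\neq\emptyset$), so I would split $\mathcal{A}_X=\bigcup_N (\mathcal{A}_X\cap\SSS_L^N)$ and use $\|P_{\mathcal{A}_X}\psi\|^2=\sum_N \|P_{\mathcal{A}_X\cap\SSS_L^N}Q_{K,\delta}^N\psi_N\|^2$, where $\psi_N$ is the $N$-sector component; each term is bounded by $\|P_{\mathcal{A}_X\cap\SSS_L^N}Q_{K,\delta}^N\|^2\|\psi_N\|^2$.

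Next I would apply Theorem \ref{thm:specestimate} to each sector: $\|P_{\mathcal{A}_X\cap\SSS_L^N}Q_{K,\delta}^N\|\leq C_{N,K}e^{-\mu_K d^N(\mathcal{A}_X\cap\SSS_L^N,\SSS_{L,K}^N)}$. Two simplifications are needed here. First, the constant: $C_{N,K}=\max\{1,8K(K+1)/(V_{N,0}\delta^2)\}$, and since $V_{N,0}=4J^2$ for all $N\geq 4J$ (Prop.\ \ref{prop:minimizer}, noting $j+k\geq 4J$ because $j\geq 4J$), we get $C_{N,K}\leq \max\{1,2K(K+1)/(J^2\delta^2)\}=:C_K'$ uniformly in $N$. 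Second, the exponent: for any configuration $X\cup Y\in\mathcal{A}_X\cap\SSS_L^N$ we have $d^N(X\cup Y,\SSS_{L,K}^N)\geq d^j(X,\SSS_{\Lambda_\ell,K}^j)$ by Lemma \ref{lemma:localminimizer} (whose hypothesis $K\geq 4J^2$ we may need to assume, or rather the statement is for $K\in\N$ — I should check whether the corollary tacitly restricts to $K\geq 4J^2$; in any case $\mu_K>0$). Taking the minimum over $X\cup Y\in\mathcal{A}_X\cap\SSS_L^N$ shows $d^N(\mathcal{A}_X\cap\SSS_L^N,\SSS_{L,K}^N)\geq d^j(X,\SSS_{\Lambda_\ell,K}^j)$, so $\|P_{\mathcal{A}_X\cap\SSS_L^N}Q_{K,\delta}^N\|\leq C_K' e^{-\mu_K d^j(X,\SSS_{\Lambda_\ell,K}^j)}$ independently of $N$.

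Putting this together, $\|P_{\mathcal{A}_X}\psi\|^2\leq (C_K')^2 e^{-2\mu_K d^j(X,\SSS_{\Lambda_\ell,K}^j)}\sum_N\|\psi_N\|^2\leq (C_K')^2 e^{-2\mu_K d^j(X,\SSS_{\Lambda_\ell,K}^j)}$ since $\|\psi\|=1$, hence $\|P_{\mathcal{A}_X}\psi\|^{2\alpha}\leq (C_K')^{2\alpha} e^{-2\alpha\mu_K d^j(X,\SSS_{\Lambda_\ell,K}^j)}$. Substituting into the bound from Lemma \ref{lemma:traceestimate} and pulling the constant out of the double sum gives exactly \eqref{eq:trace2}. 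I expect the main obstacle — really the only non-bookkeeping point — to be the careful handling of the fact that $\mathcal{A}_X$ spans several particle-number sectors, so that one must decompose $\psi$ across sectors, apply the sector-wise estimate of Theorem \ref{thm:specestimate} with the \emph{uniform} constant $C_K'$ (which uses $V_{N,0}=4J^2$), and then recombine using orthogonality of the sectors and $\sum_N\|\psi_N\|^2=1$; everything else is a direct chaining of the cited lemmas.
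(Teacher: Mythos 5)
Your proposal is correct and follows essentially the same route as the paper's own proof: bound each $\|P_{\mathcal{A}_X}\psi\|$ by decomposing over particle-number sectors, apply Theorem \ref{thm:specestimate} sector by sector with the uniform constant $C_K'$ (since $V_{j+k,0}=4J^2$ once $j+k\geq 4J$), and then pull the exponent down to $d^j(X,\SSS_{\Lambda_\ell,K}^j)$ via Lemma \ref{lemma:localminimizer}. The only cosmetic difference is that the paper aggregates the sector-wise bounds by using $\|P_{\mathcal{A}_X}Q_{K,\delta}\|=\max_k\|P_{\mathcal{A}_{X,k}}Q_{K,\delta}^{j+k}\|$ (operator norm of a direct sum), whereas you sum $\|P_{\mathcal{A}_X\cap\SSS_L^N}Q^N_{K,\delta}\psi_N\|^2$ against the weights $\|\psi_N\|^2$ with $\sum_N\|\psi_N\|^2=1$; both yield the identical bound. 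Your side remark about whether $K\geq 4J^2$ is tacitly assumed is a fair observation — the corollary as stated only says $K\in\N$, but Lemma \ref{lemma:localminimizer} requires $K\geq 4J^2$, and indeed the main theorem later restricts to that range.
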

\begin{proof}
	Let $X\in \SSS_{\Lambda_\ell}^j$, where $j\geq 4J$. Defining $\mathcal{A}_{X,k}=\{X\cup Y: Y\in\SSS^k_{\Lambda_\ell^c}\}$ and using that
	\begin{equation}
	P_{\mathcal{A}_X}Q_{K,\delta}=\bigoplus_{k=1}^{2J(L-\ell)}P_{\mathcal{A}_{X,k}}Q_{K,\delta}^{j+k}
	\end{equation}
we get 
	\begin{align}
	\|P_{\mathcal{A}_X}\psi\|&=\|P_{\mathcal{A}_X}Q_{K,\delta}\psi\|\leq \|P_{\mathcal{A}_X}Q_{K,\delta}\|= \max_{k\in\{1,\dots, 2J(L-\ell)\}}\|P_{\mathcal{A}_{X,k}}Q_{K,\delta}^{j+k}\|\\&\leq \max_{k\in\{1,\dots,2J(L-\ell)\}} \left(C_{j+k,K}\cdot e^{-\mu_K d^{j+k}(\mathcal{A}_{X,k},\SSS_{L,K}^{j+k})}\right)
	\label{eq:findmax}
	\end{align}
	where we have used Theorem \ref{thm:specestimate} for the last inequality and the constants $C_{j+k,K}$ and $\mu_K$ were given in \eqref{eq:fullCT2'2}. Now, since $j+k\geq 4J$ by assumption, it follows from Proposition \ref{prop:minimizer} that $V_{j+k,0}=4J^2$ and thus 
	\begin{equation}
	C_{j+k,K}=\max\left\{1,\frac{2K(K+1)}{J^2\delta^2}\right\}=C'_K\:,
	\end{equation}
	independently of $j$. Now, for any $k\in\{1,\dots, 2J(L-\ell)\}$, observe that by Lemma \ref{lemma:localminimizer}, we have
	\begin{equation}
	d^{j+k}(\mathcal{A}_{X,k},\SSS_{L,K}^{j+k})=\min_{Y\in\SSS_{\Lambda_\ell^c}^k} d^{j+k}(X\cup Y, \SSS_{L,K}^{j+k})\geq d^j(X,\SSS_{\Lambda_\ell,K}^j)
	\end{equation}
	and thus we may proceed to estimate \eqref{eq:findmax} to get
	\begin{equation}
	\|P_{\mathcal{A}_X}\psi\|\leq \eqref{eq:findmax}\leq C'_K\cdot e^{-\mu_K d^j(X,\SSS^j_{\Lambda_\ell,K})}\:.
	\end{equation}
	Using Lemma \ref{lemma:traceestimate}, we consequently find
	\begin{equation}
	\mbox{Tr}[(\rho_1)^\alpha]\leq 2+2\ell^{4J(1-\alpha)}+2C_K'^{2\alpha}\sum_{j=4J}^{2J\ell}\sum_{X\in\SSS_{\Lambda_\ell}^j}e^{-2\alpha\mu_Kd^j(X,\SSS^j_{\Lambda_\ell,K})}\:,
	\end{equation}
	which shows \eqref{eq:trace2}. 
\end{proof}
\subsection{Estimates using geometric series}
\begin{lemma}\label{lemma:geom1d}
Let $\mathcal{X}^N= \set{X=(x_1,x_2,\ldots,x_N)\in \Z^N:x_1<x_2<\cdots<x_N}$ be the set of spin--1/2
configurations on the infinite chain. Moreover, let $C=(c_1,c_2,\ldots,c_N)= (c, c+1, \ldots, c+(N-1))$ be an arbitrary configuration of $N$ consecutive particles ($c\in \Z$). Then, for any $\gamma>0$, we get
\begin{equation}
    \sum_{X\in\mathcal{X}^N}e^{-\gamma d^N(X,C)}\leq \left(\frac{1}{1-e^{-\gamma}}\right)\left(\prod_{k=1}^\infty\frac{1}{1-e^{-k\gamma}}\right)^2 := \mathcal{L}_\gamma<\infty\:.
\end{equation}
\begin{proof}

Define 
$$\mathcal{X}^{N,0} = \set{X\in\mathcal{X}^N: x_1\geq c_1},\;\;\;\;\mathcal{X}^{N,N}=\set{X\in \mathcal{X}^N: x_N< c_N}$$
and for any $j\in \set{1,\ldots, N-1}$,
\[\mathcal{X}^{N,j}=\set{X\in \mathcal{X}^N: x_j<c_j,\; x_{j+1}\geq c_{j+1}}\:.  \]
Then, $\displaystyle\mathcal{X}^N= \uplus_{j=0}^N\mathcal{X}^{N,j}$, where ``$\uplus$" denotes a disjoint union. For any $j\in \set{1,\ldots, N-1}$, we get by an argument similar to \cite[Lemma A.3]{FSch} that
\begin{equation}\begin{split}
    \sum_{X\in \mathcal{X}^{N,j}}e^{-\gamma d^N(X,C)}&=\sum_{x_1<x_2<\cdots<x_j<c_j}e^{-\gamma(|x_1-c_1|+\cdots+|x_j-c_j|)}  \sum_{c_{j+1}\leq x_{j+1}<x_{j+2}<\cdots<x_N}e^{-\gamma(|x_{j+1}-c_{j+1}|+\cdots+|x_N-c_N|)}\\
    &\leq e^{-\gamma j}\left(\prod_{k=1}^j\frac{1}{1-e^{-k\gamma}}\right)\left(\prod_{k=1}^{N-j}\frac{1}{1-e^{-k\gamma}}\right)
    \leq e^{-\gamma j}\left(\prod_{k=1}^\infty\frac{1}{1-e^{-k\gamma}}\right)^2\:.\\
\end{split}
\end{equation}
Analogously, we find
\begin{equation}
    \sum_{X\in\mathcal{X}^{N,0}}e^{-\gamma d^N(X,C)}\leq \prod_{k=1}^N\frac{1}{1-e^{-k\gamma}}\leq \left(\prod_{k=1}^\infty \frac{1}{1-e^{-k\gamma}}\right)^2
\end{equation}
\begin{equation}
    \sum_{X\in\mathcal{X}^{N,N}}e^{-\gamma d^N(X,C)}\leq e^{-\gamma N}\prod_{k=1}^N\frac{1}{1-e^{-k\gamma}}\leq e^{-\gamma N}\left(\prod_{k=1}^\infty \frac{1}{1-e^{-k\gamma}}\right)^2\:.
\end{equation}
Hence,
\begin{align}
    \sum_{X\in\mathcal{X}^N}e^{-\gamma d^N(X,C)}=&\sum_{j=0}^{N}\sum_{X\in \mathcal{X}^{N,j}}e^{-\gamma d^N(X,C)}\\\leq&\sum_{j=0}^Ne^{-\gamma j}\left(\prod_{k=1}^\infty\frac{1}{1-e^{-k\gamma}}\right)^2\leq\left(\frac{1}{1-e^{-\gamma}}\right)\left(\prod_{k=1}^\infty\frac{1}{1-e^{-k\gamma}}\right)^2\:,
\end{align}
which is the desired result. The infinite product's convergence follows from elementary facts.
\end{proof}
\end{lemma}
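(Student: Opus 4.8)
The plan is to exploit the additivity of the distance formula $d^N(X,C)=\sum_{i=1}^N|x_i-c_i|$ from the earlier lemma, which almost lets the sum factor as a product over coordinates, were it not for the ordering constraint $x_1<x_2<\cdots<x_N$. The key idea is to partition $\mathcal{X}^N$ according to where the configuration $X$ ``crosses'' the reference configuration $C=(c,c+1,\ldots,c+(N-1))$. Concretely, since $C$ consists of consecutive integers, for any $X\in\mathcal{X}^N$ there is a unique index $j\in\{0,1,\ldots,N\}$ such that $x_i<c_i$ for all $i\le j$ and $x_i\ge c_i$ for all $i>j$ — this is because once $x_i\ge c_i$ we must have $x_{i+1}\ge x_i+1\ge c_i+1=c_{i+1}$, so the property of ``being at least $c_i$'' is inherited by all larger indices. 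This gives the disjoint decomposition $\mathcal{X}^N=\uplus_{j=0}^N\mathcal{X}^{N,j}$ as in the statement.

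Next, I would estimate each piece $\sum_{X\in\mathcal{X}^{N,j}}e^{-\gamma d^N(X,C)}$ separately. On $\mathcal{X}^{N,j}$ the sum splits as a product of a ``left part'' over $x_1<\cdots<x_j<c_j$ with $|x_i-c_i|=c_i-x_i$, and a ``right part'' over $c_{j+1}\le x_{j+1}<\cdots<x_N$ with $|x_i-c_i|=x_i-c_i$. For the right part, I substitute $y_i:=x_i-c_i\ge 0$; the strict-increase condition $x_{i+1}>x_i$ combined with $c_{i+1}=c_i+1$ becomes $y_{i+1}\ge y_i$, so one is summing $e^{-\gamma(y_{j+1}+\cdots+y_N)}$ over $0\le y_{j+1}\le y_{j+2}\le\cdots\le y_N$, which is bounded by $\prod_{k=1}^{N-j}(1-e^{-k\gamma})^{-1}$ (this is the standard partition-function bound: ordered nonnegative integer tuples of length $m$ are parametrized by their gaps, giving $\prod_{k=1}^m(1-e^{-k\gamma})^{-1}$). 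Similarly the left part, after setting $z_i:=c_i-x_i$ with $z_i\ge 1$ and decreasing in a suitable sense, contributes at most $e^{-\gamma j}\prod_{k=1}^j(1-e^{-k\gamma})^{-1}$, the $e^{-\gamma j}$ coming from the $j$ forced units $z_i\ge 1$. Bounding both products by the full infinite product $\prod_{k=1}^\infty(1-e^{-k\gamma})^{-1}$ yields $\sum_{X\in\mathcal{X}^{N,j}}e^{-\gamma d^N(X,C)}\le e^{-\gamma j}\big(\prod_{k=1}^\infty(1-e^{-k\gamma})^{-1}\big)^2$, and the endpoint cases $j=0$ and $j=N$ are handled by the same computation with one of the two factors absent.

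Finally, I sum over $j$: $\sum_{X\in\mathcal{X}^N}e^{-\gamma d^N(X,C)}\le\big(\prod_{k=1}^\infty(1-e^{-k\gamma})^{-1}\big)^2\sum_{j=0}^N e^{-\gamma j}\le\big(\prod_{k=1}^\infty(1-e^{-k\gamma})^{-1}\big)^2\cdot(1-e^{-\gamma})^{-1}$, which is exactly $\mathcal{L}_\gamma$. Convergence of the infinite product follows since $\sum_k e^{-k\gamma}<\infty$ for $\gamma>0$, so $\prod_k(1-e^{-k\gamma})^{-1}<\infty$ by the standard criterion. I expect the main obstacle — really the only delicate point — to be setting up the change of variables carefully so that the combinatorial bound $\prod_{k=1}^m(1-e^{-k\gamma})^{-1}$ is applied correctly on each side (keeping track of the strict versus non-strict inequalities produced by the consecutive structure of $C$, and the $j$ forced ``$\ge 1$'' gaps on the left that produce the crucial $e^{-\gamma j}$ factor which makes the outer sum over $j$ a convergent geometric series). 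Everything else is bookkeeping, and the reference to \cite[Lemma A.3]{FSch} handles the prototype one-sided estimate.
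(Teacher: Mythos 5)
Your proposal is correct and takes essentially the same route as the paper: the same partition of $\mathcal{X}^N$ by the crossing index $j$, the same left/right factorization on each piece, and the same partition-generating-function bound $\prod_{k=1}^m(1-e^{-k\gamma})^{-1}$ with the $e^{-\gamma j}$ factor coming from the forced unit gaps on the left. You simply spell out the gap substitution that the paper delegates to the cited Lemma A.3 of \cite{FSch}, and you make explicit the (correct) monotonicity argument — that $x_i\ge c_i$ propagates to all larger indices because the $c_i$ are consecutive — that justifies the disjointness of the decomposition.
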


\begin{defn} \label{def:buildingblocks}
	For any ${\bf m}\in{\bf M}_L^N$, let $\supp({\bf m}):=\{j\in\{1,2,\dots,L\}:{\bf m}(j)\neq 0\}$ denote the support of ${\bf m}$. Then, for any $L$, we define the set ${\bf K}_L$ of ``building bocks" in $\Lambda_L$ as follows
	\begin{align} \label{eq:buildingblocks}
		{\bf K}_L&:=\{{\bf m}\in{\bf M}_L: |\mbox{supp}({\bf m})|=1\}\\&\cup \{{\bf m}\in{\bf M}_L: \exists \alpha,\beta\in\Lambda_L: {\bf m}(i)=2J \mbox{ if } \alpha\leq i\leq \beta\mbox{ and } {\bf m}(i)=0\mbox{ else} \}\:.
	\end{align}
	In other words, building blocks are configurations of either a single column of up to $2J$ particles or of discrete intervals, where each site is occupied by exactly $2J$ particles (rectangular blocks).

\end{defn}

\begin{lemma} \label{lemma:geom2J}
Let $\gamma>0$. Then for any $C\in {\bf K}_L$, we have
\begin{equation}
    \sum_{X\in\SSS^N_L}e^{-\gamma d^N(X,C)}\leq \mathcal{L}^{2J}_\gamma
\end{equation}
\begin{proof}
Let $C$ be a rectangular block, i.e.\ $\;C=(\stackrel{2J\;\;\text{times}}{\overbrace{c,\ldots,c}},\stackrel{2J \;\;\text{times}}{\overbrace{c+1,\ldots,c+1}},\ldots,\stackrel{2J \;\;\text{times}}{\overbrace{c+m,\ldots,c+m}})$ where $N=2J(m+1)$. For any $N$-particle configuration $X=(x_1,\ldots,x_N)\in \SSS^N_L$, let $Z_0(X)$ denote the constraint 
\begin{equation}
    Z_0(X) : x_1\leq x_2\leq\cdots\leq x_N\:.
\end{equation}
Moreover, define the additional constraints
\begin{equation}
Z_i(X): x_i<x_{i+2J}<x_{i+4J}<\cdots<x_{i+2JR_i}
\end{equation} 
where $i\in \set{1,2,\ldots, 2J}$ and $R_i=\max\set{R\in\{1,2,\dots,N\}: i+2JR\leq N}$.

Let 
\begin{equation}\M^N_L:=\set{X\in\Lambda^N_L:\;Z_i(X) \;\text{ holds }\;\forall\; i=0,\ldots,2J }
\end{equation}
and 
\begin{equation}\Gamma^N:=\set{X\in \Z^N:\; Z_k(X)\;\text{ holds }\; \forall\;k=1,\ldots, 2J}\:.
\end{equation}
Obviously, $\M^N_L\subseteq \Gamma^N$. To see that $\M^N_L=\SSS^N_L$ (defined in (\ref{eq:SSS})), note that it is obvious that $\M^N_L\subseteq \SSS^N_L $ since for any $X\in \M^N_L$, the constraints $Z_i(X)$ imply that $\min\set{(x_{k+2J}-x_k): k=1,\ldots, N-2J}\geq 1$.

Conversely, suppose there exists $X\in \SSS^N_L\setminus \M^N_L$, i.e.\ $X$ violates at least one constraint, $Z_t(X)$ say, where $t\neq 0$. Then there exists $r\in \N$ such that $x_{t+2Jr}\geq x_{t+2J(r+1)}$. This is a contradiction since $X\in \SSS^N_L$. Hence, $\SSS^N_L=\M^N_L$.

Therefore, by Lemma (\ref{lemma:geom1d}),
\begin{equation}
    \begin{split}
        \sum_{X\in \SSS^N_L}e^{-\gamma d^N(X,C)}&\leq \sum_{X\in \Gamma^N} e^{-\gamma d^N(X,C)}\\
        &=\sum_{Z_1(X)}e^{\displaystyle-\gamma(|x_1-c_1|+\cdots+|x_{1+2JR_1}-c_{1+2JR_1}|)}\times\cdots\\
        &\;\;\;\;\;\;\;\;\;\;\;\;\;\;\;\;\;\;\;\;\;\;\;\;\;\;\;\;\;\cdots\times\sum_{Z_{2J}(X)}e^{\displaystyle-\gamma(|x_{2J}-c_{2J}|+\cdots+ | x_{2J+2JR_{_{2J}}}- c_{2J+2JR_{_{2J}}}|)}\\
        &\leq \mathcal{L}^{2J}_\gamma\:.
    \end{split}
\end{equation}
If $C=(c_1,\ldots,c_N)$ is a column block, i.e.\ $c_1=\cdots=c_N=c$, where $c\in \Lambda_L$ and $N\leq 2J$, then
\begin{align}
	\sum_{X\in \SSS^N_L}e^{-\gamma d^N(X,C)}\leq &\sum_{-\infty<x_1\leq x_2\leq\cdots\leq x_N<\infty}e^{-\gamma [|x_1-c|+\cdots+|x_N-c|]}\\ \leq& \sum_{x_1, x_2,\cdots, x_N\in\Z}e^{-\gamma [|x_1-c|+\cdots+|x_N-c|]} =\left(\frac{1+e^{-\gamma}}{1-e^{-\gamma}}\right)^N\leq\mathcal{L}^N_{\gamma}\leq\mathcal{L}^{2J}_\gamma\:,
\end{align}
which finishes the proof.
\end{proof}
\end{lemma}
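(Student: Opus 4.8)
The plan is to treat the two kinds of building block from Definition~\ref{def:buildingblocks} separately, in each case reducing the sum to the one-dimensional estimate of Lemma~\ref{lemma:geom1d}. First suppose $C=(c,\dots,c)$ is a column block; then necessarily $N\le 2J$ and the Pauli constraint in \eqref{eq:SSS} is vacuous, so $\SSS_L^N$ is simply the set of weakly increasing $N$-tuples with entries in $\Lambda_L$. Forgetting the ordering and the confinement to $\Lambda_L$ and summing over all of $\Z^N$, the weight $e^{-\gamma d^N(X,C)}=\prod_{i=1}^{N}e^{-\gamma|x_i-c|}$ factorizes, so
\[
\sum_{X\in\SSS_L^N}e^{-\gamma d^N(X,C)}\ \le\ \Big(\sum_{x\in\Z}e^{-\gamma|x-c|}\Big)^{N}=\Big(\frac{1+e^{-\gamma}}{1-e^{-\gamma}}\Big)^{N}\ \le\ \mathcal{L}_\gamma^{\,N}\ \le\ \mathcal{L}_\gamma^{\,2J},
\]
using $\tfrac{1+e^{-\gamma}}{1-e^{-\gamma}}\le\mathcal{L}_\gamma$, $\mathcal{L}_\gamma\ge 1$, and $N\le 2J$.

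The substantive case is when $C$ is a rectangular block occupying the consecutive sites $c,c+1,\dots,c+m$ with exactly $2J$ particles at each, so $N=2J(m+1)$. The key structural step is a reformulation of the Pauli principle: a weakly increasing tuple $X=(x_1,\dots,x_N)$ satisfies $\min_{k}(x_{k+2J}-x_k)\ge1$ precisely when, for each residue $i\in\{1,\dots,2J\}$, the subsequence $(x_i,x_{i+2J},x_{i+4J},\dots)$ is \emph{strictly} increasing. This is elementary: writing every index uniquely as $k=i+2Jr$ with $i\in\{1,\dots,2J\}$, one has $x_{k+2J}-x_k=x_{i+2J(r+1)}-x_{i+2Jr}$, so requiring $x_{k+2J}-x_k\ge1$ for all $k$ is exactly strict monotonicity along each of the $2J$ residue classes. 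Hence $\SSS_L^N$ is the set of $X\in\Lambda_L^N$ that are weakly increasing and strictly increasing along every residue class.

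I would then discard two of these constraints — the global weak monotonicity and the confinement to $\Lambda_L$ — enlarging $\SSS_L^N$ to $\Gamma^N:=\{X\in\Z^N:(x_i,x_{i+2J},x_{i+4J},\dots)\text{ is strictly increasing for each }i=1,\dots,2J\}$. On $\Gamma^N$ the coordinates lying in different residue classes are completely decoupled, so $\sum_{X\in\Gamma^N}e^{-\gamma d^N(X,C)}$ factorizes into a product of $2J$ sums, one per residue class. The crucial bookkeeping point is that, $C$ being a rectangular block, the target values $c_i,c_{i+2J},c_{i+4J},\dots$ attached to a fixed residue class are precisely the consecutive integers $c,c+1,\dots,c+m$; consequently each of the $2J$ factors is exactly a sum of the form estimated in Lemma~\ref{lemma:geom1d}, hence at most $\mathcal{L}_\gamma$. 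Multiplying the $2J$ factors gives
\[
\sum_{X\in\SSS_L^N}e^{-\gamma d^N(X,C)}\ \le\ \sum_{X\in\Gamma^N}e^{-\gamma d^N(X,C)}\ \le\ \mathcal{L}_\gamma^{\,2J}.
\]

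The only genuinely delicate point is the first step of the rectangular case: recognizing that the single Pauli inequality is equivalent to $2J$ interleaved strict-monotonicity conditions, each of which matches the hypothesis of Lemma~\ref{lemma:geom1d} once one checks that the restricted target configuration $c_i,c_{i+2J},\dots$ really is a run of consecutive integers. Everything after that — the factorization over residue classes, the column-block case, and the convergence of the infinite products packaged in $\mathcal{L}_\gamma$ — is routine.
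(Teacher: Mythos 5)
Your proposal is correct and follows essentially the same strategy as the paper: reformulate the Pauli constraint as strict monotonicity along the $2J$ residue classes, enlarge $\SSS_L^N$ to the set of $\Z^N$-tuples satisfying only those residue-class constraints so the sum factorizes, recognize that the rectangular block restricts to consecutive integers on each residue class, and invoke Lemma~\ref{lemma:geom1d} factor by factor, with the column block handled by a direct $\Z^N$ bound. The paper packages the same idea through the auxiliary sets $\M_L^N$ and $\Gamma^N$, but the substance — and in particular the key observation that the single Pauli inequality is equivalent to $2J$ interleaved strict chains — is identical.
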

\begin{remark}
Note that for this result, we are making use of the generalized Pauli principle, which requires that no site be occupied by more than $2J$ particles. 
\end{remark}
\subsection{Building blocks and potential} 
Given a value $K\in\N$ and a particle number $N\geq 4J$, it seems rather cumbersome to give a full description of all configurations in ${\bf M}_{L,K}^N$. However, it can be shown that any configuration in ${\bf M}_{L,K}^N$ can be composed out of no more than $K$ of the ``building blocks" as described in Definition \ref{def:buildingblocks} before.
\begin{defn} Given a configuration ${\bf m}\in{\bf M}_L$, we define the quantity $B({\bf m})$ as follows:
	\begin{equation} \label{eq:defB}
	B({\bf m}):=\left|\{i\in\{1,2,\dots,L-1\}:{\bf m}(i)+{\bf m}(i+1)\notin\{0,4J\}\}\right|+(2-\delta_{{\bf m}(1),0}-\delta_{{\bf m}(L),0})\:.
	\end{equation}
	Moreover, for any $R\in\N$, we define
	\begin{equation}
	{\bf B}_{L,R}:=\{{\bf m}\in{\bf M}_L: B({\bf m})\leq R\}\quad\mbox{as well as} \quad {\bf B}_{L,R}^N:={\bf B}_{L,R}\cap {\bf M}_L^N
	\end{equation}
	for any $N\in\{0,1,\dots,2JL\}$. Additionally, for multisets, we introduce $\mathbb{B}_{L,R}=\{X\in\SSS_L : {\bf m}_X\in {\bf B}_{L,R}\}$ and $\mathbb{B}_{L,R}^N=\mathbb{B}_{L,R}\cap\SSS_L^N$.
\end{defn}	
\begin{remark}
The purpose of $B({\bf m})$ is to count the number of neighboring sites $\{i,i+1\}$, $i=1,2,\dots,L-1$, which are not both occupied by either $0$ or $2J$ particles. Each of the two additional terms in $(2-\delta_{{\bf m}(1),0}-\delta_{{\bf m}(L),0})=(1-\delta_{{\bf m}(1),0})+(1-\delta_{{\bf m}(L),0})$ increases $B({\bf m})$ by a value of one if the sites $1$ or $L$ are occupied by any particles (which formally corresponds to including the edges $\{0,1\}$ and $\{L,L+1\}$ in the count). 
\end{remark}
\begin{remark}
Observe that $B({\bf m})\leq J^{-1}V({\bf m})$ for every ${\bf m}\in{\bf M}_L$. This follows from the fact that $v({\bf m}(i), {\bf m}(i+1))=0$ if and only if ${\bf m}(i)={\bf m}(i+1)=0$ or ${\bf m}(i)={\bf m}(i+1)=2J$ and $v({\bf m}(i),{\bf m}(i+1))\geq J$ else, and comparing the extra term $(2-\delta_{{\bf m}(1),0}-\delta_{{\bf m}(L),0})$ in \eqref{eq:defB}  with the boundary field term $J({\bf m}(1)+{\bf m}(L))$ in \eqref{eq:potential}. For $K\geq 4J^2$, define $\widetilde{K}:=\lfloor K/J\rfloor$, and observe that this implies
\begin{equation} \label{eq:configincl}
{\SSS}_{L,K}\subset {\mathbb{B}}_{L,\widetilde{K}}\quad\mbox{and thus in particular}\quad {\SSS}^N_{L,{K}}\subset {\mathbb{B}}^N_{L,\widetilde{K}}\:.
\end{equation}
\end{remark}
It is now crucial to observe that any configuration in ${\bf B}_{L,\widetilde{K}}$ can always by obtained by composing it out of at most $(\widetilde{K}-1)$ ``building blocks". 

\begin{remark} \label{rem:buildingblocks}
Observe that for any ${\bf m}\in{\bf M}_L$, there exist $\{{\bf k}^{(i)}\}_{i=1}^\tau\subset{\bf K}_L$, ($\tau\leq L$), with pairwise disjoint supports, such that ${\bf m}=\sum_{i=1}^\tau{\bf k}_i$. If in addition, we have ${\bf m}\in{\bf B}_{L,\widetilde{K}}$, it follows from \eqref{eq:defB} that $\tau\leq (\widetilde{K}-1)$, i.e.\ any configuration in ${\bf B}_{L,\widetilde{K}}$ can be composed out of no more than $(\widetilde{K}-1)$ building blocks. See Figure \ref{fig:generalform} for a pictorial representation of building blocks.
\end{remark}

\begin{figure}[ht]
\centering
\resizebox{8cm}{6cm}
{\begin{tikzpicture}[line cap=round,line join=round,>=triangle 45,x=1cm,y=1cm]
		\begin{axis}[
			x=1cm,y=1cm,
			axis x line=bottom,
			axis y line=none,
			%ymajorgrids=true,
			xmajorgrids=true,
			xmin=0.2,
			xmax=7.5,
			ymin=-0.5,
			ymax=6.5,
			]
			\clip(0,0) rectangle (11.32,8.7);
			\draw [line width=2pt] (0,1)-- (8,1);
			\begin{scriptsize}
				\draw [fill=pink] (1,1) circle (2.5pt);
				\draw [fill=pink] (1,1.4) circle (2.5pt);
				\draw [fill=pink] (1,1.8) circle (2.5pt);
				\draw [fill=pink] (1,2.2) circle (2.5pt);
				\draw [fill=green] (3,1) circle (2.5pt);
				\draw [fill=green] (3,1.4) circle (2.5pt);
				\draw [fill=qqqqff] (4,1) circle (2.5pt);
				\draw [fill=qqqqff] (4,1.4) circle (2.5pt);
				\draw [fill=qqqqff] (4,1.8) circle (2.5pt);
				\draw [fill=qqqqff] (4,2.2) circle (2.5pt);
				\draw [fill=qqqqff] (4,2.6) circle (2.5pt);
				\draw [fill=qqqqff] (4,3) circle (2.5pt);
				\draw [fill=qqqqff] (4,3.4) circle (2.5pt);
				\draw [fill=qqqqff] (4,3.8) circle (2.5pt);
				\draw [fill=qqqqff] (4,4.2) circle (2.5pt);
				\draw [fill=qqqqff] (5,1) circle (2.5pt);
				\draw [fill=qqqqff] (5,1.4) circle (2.5pt);
				\draw [fill=qqqqff] (5,1.8) circle (2.5pt);
				\draw [fill=qqqqff] (5,2.2) circle (2.5pt);
				\draw [fill=qqqqff] (5,2.6) circle (2.5pt);
				\draw [fill=qqqqff] (5,3) circle (2.5pt);
				\draw [fill=qqqqff] (5,3.4) circle (2.5pt);
				\draw [fill=qqqqff] (5,3.8) circle (2.5pt);
				\draw [fill=qqqqff] (5,4.2) circle (2.5pt);
				\draw [fill=qqqqff] (6,1) circle (2.5pt);
				\draw [fill=qqqqff] (6,1.4) circle (2.5pt);
				\draw [fill=qqqqff] (6,1.8) circle (2.5pt);
				\draw [fill=qqqqff] (6,2.2) circle (2.5pt);
				\draw [fill=qqqqff] (6,2.6) circle (2.5pt);
				\draw [fill=qqqqff] (6,3) circle (2.5pt);
				\draw [fill=qqqqff] (6,3.4) circle (2.5pt);
				\draw [fill=qqqqff] (6,3.8) circle (2.5pt);
				\draw [fill=qqqqff] (6,4.2) circle (2.5pt);
				\draw [fill=red] (7,1) circle (2.5pt);
				\draw [fill=red] (7,1.4) circle (2.5pt);
				\draw [fill=red] (7,1.8) circle (2.5pt);
				\draw [fill=red] (7,2.2) circle (2.5pt);
				
			\end{scriptsize}
		\end{axis}
	\end{tikzpicture}
}
\caption{A configuration ${\bf m}\in{\bf M}_{L}^N$ of $N=37$ particles (here: $J=9/2$). Moreover, $B({\bf m})=6\leq J^{-1}V({\bf m})=26$ and indeed, ${\bf m}$ can be written as a composition of $4\leq (\lfloor J^{-1}V({\bf m})\rfloor-1)$ building blocks (represented here by four different colors).}
\label{fig:generalform}
\end{figure}
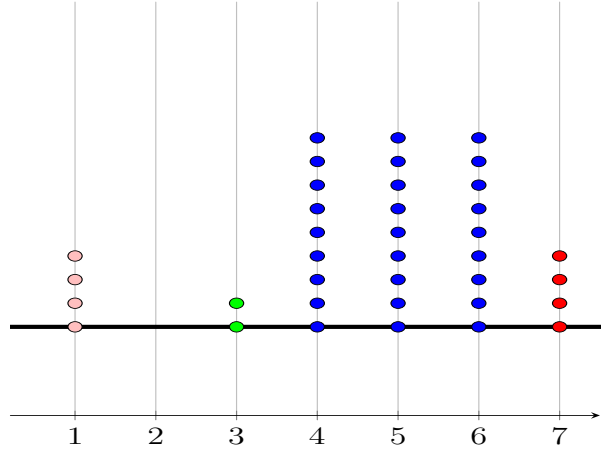
\begin{lemma} Let $j\in\{4J, 4J+1,\dots,2J\ell\}$, $K\in\{4J^2,4J^2+1,\dots, 2J\ell\}$ and $\widetilde{K}:=\lfloor K/J\rfloor$. Then, for any $\gamma>0$, we get the following estimate
	\begin{equation}
	\sum_{X\in\SSS_{\Lambda_\ell}^j}e^{-\gamma d^j(X,\SSS_{\Lambda_\ell,K}^j)}\leq (4Je)^{\widetilde{K}-2}\mathcal{L}_{2\alpha\mu_K}^{2J(\widetilde{K}-1)}\ell^{2\widetilde{K}-3}\:.
	\end{equation}
	\label{lemma:series}
\end{lemma}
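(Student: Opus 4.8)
The plan is to reduce the sum over all configurations in $\SSS_{\Lambda_\ell}^j$ to a product of single--building--block sums by decomposing each nearest low--potential target configuration into building blocks and ``splitting'' the distance accordingly. First I would fix $X\in\SSS_{\Lambda_\ell}^j$ and choose a configuration $C_X\in\SSS_{\Lambda_\ell,K}^j$ realizing the distance $d^j(X,\SSS_{\Lambda_\ell,K}^j)$. By \eqref{eq:configincl} we have $\SSS_{\Lambda_\ell,K}^j\subset\mathbb{B}^j_{\Lambda_\ell,\widetilde{K}}$, so $C_X$ can be written (Remark \ref{rem:buildingblocks}) as a union of at most $(\widetilde{K}-1)$ building blocks $C^{(1)},\dots,C^{(\tau)}\in{\bf K}_{\Lambda_\ell}$ with pairwise disjoint supports and total particle number $j$. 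Writing $X=(x_1,\dots,x_j)$ and $C_X=(c_1,\dots,c_j)$ in increasing order and recalling the distance formula $d^j(X,C_X)=\sum_i|x_i-c_i|$, I would partition the index set $\{1,\dots,j\}$ into consecutive blocks $I_1,\dots,I_\tau$ according to which building block $C^{(s)}$ the $i$-th coordinate of $C_X$ belongs to, so that $d^j(X,C_X)=\sum_{s=1}^\tau \sum_{i\in I_s}|x_i-c_i|$.

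The key combinatorial point is then that summing $e^{-\gamma d^j(X,\SSS_{\Lambda_\ell,K}^j)}$ over all $X$ is dominated by summing, over all choices of target building blocks and all ways of distributing the $j$ particles among them, a product of single--building--block geometric sums. Concretely, I would bound
\begin{equation*}
\sum_{X\in\SSS_{\Lambda_\ell}^j}e^{-\gamma d^j(X,\SSS_{\Lambda_\ell,K}^j)}\leq \sum_{\tau=1}^{\widetilde{K}-1}\ \sum_{\substack{C^{(1)},\dots,C^{(\tau)}}}\ \sum_{\substack{X \text{ with }\\ |X\cap\Lambda_s|=|C^{(s)}|}}\prod_{s=1}^{\tau}e^{-\gamma\sum_{i\in I_s}|x_i-c_i|}\leq \sum_{\tau=1}^{\widetilde{K}-1}\ \sum_{C^{(1)},\dots,C^{(\tau)}}\ \prod_{s=1}^{\tau}\Big(\sum_{X_s\in\SSS^{|C^{(s)}|}_{\Lambda_\ell}}e^{-\gamma d(X_s,C^{(s)})}\Big),
\end{equation*}
where the product over $s$ arises because the $X$-sum factorizes once the distribution of particles among the (disjoint--support) target blocks is fixed, after relaxing the ordering constraint between different blocks — this relaxation only enlarges the sum. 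By Lemma \ref{lemma:geom2J}, each inner factor is at most $\mathcal{L}_\gamma^{2J}$, and since $\tau\leq\widetilde{K}-1$ there are at most $\widetilde{K}-1$ factors, giving $\mathcal{L}_\gamma^{2J(\widetilde{K}-1)}$ (using $\mathcal{L}_\gamma\geq 1$ to pad out shorter products). Applying this with $\gamma=2\alpha\mu_K$ matches the $\mathcal{L}_{2\alpha\mu_K}^{2J(\widetilde{K}-1)}$ factor in the claim.

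It then remains to count the combinatorial prefactor: the number of choices of an ordered tuple of $\tau\leq\widetilde{K}-1$ building blocks in $\Lambda_\ell$ together with a particle distribution. A building block in $\Lambda_\ell$ is determined by at most two endpoints in $\Lambda_\ell$ (a column is one site with an occupation number in $\{1,\dots,2J-1\}$, a rectangular block is an interval, i.e.\ two endpoints), so there are $O(J\ell^2)$ of them; with at most $\widetilde K-1$ slots and the particle--number bookkeeping, one gets a bound of the shape $(4J)^{\widetilde K-2}\ell^{2(\widetilde K-1)-1}=(4J)^{\widetilde K-2}\ell^{2\widetilde K-3}$, and the factor $e^{\widetilde K-2}$ absorbs the number of ordered partitions of the $\tau$ slots / the geometric series in $\tau$ (here $\sum_{\tau}1\le \widetilde K-1\le e^{\widetilde K-2}$ for the relevant range, or more crudely Stirling--type estimates on the number of compositions). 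Collecting the three factors $(4Je)^{\widetilde K-2}$, $\mathcal{L}_{2\alpha\mu_K}^{2J(\widetilde K-1)}$ and $\ell^{2\widetilde K-3}$ yields the stated bound. The main obstacle I anticipate is making the factorization step fully rigorous: one must argue carefully that dropping the global ordering of $X$ in favor of a per--block ordering, and overcounting target decompositions, genuinely only increases the sum, and that the exponent powers of $\ell$ and the constants $4J$, $e$ are as claimed rather than off by small amounts — essentially a careful bookkeeping of how many parameters (endpoints, occupation numbers, particle counts) are needed to specify a decomposition into $\widetilde K-1$ building blocks.
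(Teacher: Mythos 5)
Your overall strategy coincides with the paper's: use the inclusion $\SSS_{\Lambda_\ell,K}^j\subset\mathbb{B}_{\Lambda_\ell,\widetilde K}^j$, decompose the target configuration into building blocks, factorize the exponential of the distance across blocks, apply Lemma \ref{lemma:geom2J} to each factor, and finish with a combinatorial count of the possible building--block configurations. Two remarks.

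First, your re--indexing step (fix a nearest $C_X$ for each $X$, then sum over decompositions) does work, but the paper's version is cleaner and avoids the bookkeeping you flag as a concern: simply note that for each $X$ one has $e^{-\gamma d^j(X,\SSS_{\Lambda_\ell,K}^j)}\leq \sum_{Y\in\SSS_{\Lambda_\ell,K}^j}e^{-\gamma d^j(X,Y)}\leq \sum_{Y\in\mathbb{B}_{\Lambda_\ell,\widetilde K}^j}e^{-\gamma d^j(X,Y)}$, swap the $X$ and $Y$ sums, and then bound the inner sum $\sum_{X}e^{-\gamma d^j(X,Y)}\leq \mathcal{L}_\gamma^{2J(\widetilde K-1)}$ for each fixed $Y$ by exactly the factorization argument you describe (drop the inter--block ordering of $X$, which only enlarges the sum). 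This sidesteps any need to worry about how targets are chosen or how overcounting interacts with the ordering relaxation.

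Second, and this is the genuine gap, your combinatorial count is not actually carried out. Parametrizing a building block by ``at most two endpoints'' gives $O(\ell^2)$ choices per block and hence $\ell^{2\widetilde K-2}$ over $\widetilde K-1$ slots, one power of $\ell$ too many; you then assert that ``particle--number bookkeeping'' removes the extra factor but do not show it. The clean way, as in the paper, is to parametrize each decomposition into building blocks by (i) the composition $(n_1,\dots,n_{\widetilde K-1})$ of $j$ into at most $\widetilde K-1$ nonnegative parts, of which there are $\binom{j+\widetilde K-2}{\widetilde K-2}$, and (ii) the left endpoints of the blocks, of which there are at most $\ell^{\widetilde K-1}$; this suffices because a building block is uniquely determined by its particle count and its left endpoint. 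Then $\binom{j+\widetilde K-2}{\widetilde K-2}\leq\big(\tfrac{(2J\ell+\widetilde K-2)e}{\widetilde K-2}\big)^{\widetilde K-2}\leq(4Je)^{\widetilde K-2}\ell^{\widetilde K-2}$, giving $|\mathbb{B}_{\Lambda_\ell,\widetilde K}^j|\leq(4Je)^{\widetilde K-2}\ell^{2\widetilde K-3}$ as claimed. Your write--up gestures at this reduction but does not pin down either the parametrization or the binomial estimate, so as it stands the power of $\ell$ is not established.
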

\begin{proof}
Analogously to before, we define $\mathbb{B}_{\Lambda_\ell,\widetilde{K}}^j:=\mathbb{B}_{L,\widetilde{K}}^j\cap\SSS_{\Lambda_\ell}$ and observe that due to \eqref{eq:configincl}, we have $\SSS_{\Lambda_\ell,K}^j\subset \mathbb{B}_{\Lambda_\ell,\widetilde{K}}^j$ and thus, we get
\begin{equation} \label{eq:donau}
\sum_{X\in\SSS_{\Lambda_\ell}^j}e^{-\gamma d^j(X,\SSS_{\Lambda_\ell,K}^j)}\leq \sum_{X\in\SSS_{\Lambda_\ell}^j}\sum_{Y\in\SSS_{\Lambda_\ell,K}^j}e^{-\gamma d^j(X,Y)}=\sum_{Y\in\SSS_{\Lambda_\ell,K}^j}\sum_{X\in\SSS_{\Lambda_\ell}^j}e^{-\gamma d^j(X,Y)}\leq \sum_{Y\in\mathbb{B}_{\Lambda_\ell,\widetilde{K}}^j}\sum_{X\in\SSS_{\Lambda_\ell}^j}e^{-\gamma d^j(X,Y)}\:.
\end{equation}
We now claim that for every $Y\in\mathbb{B}_{\Lambda_\ell,\widetilde{K}}^j$, we get
\begin{equation} \label{eq:estimate2JK}
\sum_{X\in\SSS_{\Lambda_\ell}^j}e^{-\gamma d^j(X,Y)}\leq \mathcal{L}_\gamma^{2J(\widetilde{K}-1)}\:.
\end{equation}
Since $Y\in\mathbb{B}_{\Lambda_\ell,\widetilde{K}}^j$, observe that by Remark \ref{rem:buildingblocks}, there exist ${\bf k}^{(1)}, {\bf k}^{(2)}, \dots, {\bf k}^{(\tau)}\in{\bf K}_L$ with $\tau\leq (\widetilde{K}-1)$ and pairwise disjoint support such that 
\begin{equation}
{\bf k}^{(1)}+{\bf k}^{(2)}+\cdots+{\bf k}^{(\tau)}={\bf m}_Y\:.
\end{equation}
Without loss of generality, we may assume that the building blocks are ordered such that $i<j$ implies $\max\supp({\bf k}^{(i)})<\min\supp({\bf k}^{(j)})$.

For any $i\in\{1,2\dots,\tau\}$, let $K^{(i)}:=X_{{\bf k}^{(i)}}$ denote the multiset associated with the building block ${\bf k}^{(i)}$ and thus $Y=K^{(1)}\cup K^{(2)}\cup\dots\cup K^{(\tau)}$. Now, decompose any $X\in\SSS_{\Lambda_\ell}^j$ analogously into $X=X^{(1)}\cup X^{(2)}\cup\dots\cup X^{(\tau)}$ such that $|X^{(i)}|=|K^{(i)}|=:k_i$ for every $i\in\{1,2,\dots,\tau\}$ and $i<j$ implies that $\max (X^{(i)})<\min(X^{(j)})$.

We then get
\begin{align}
\sum_{X\in\SSS_{\Lambda_\ell}^j}e^{-\gamma d^j(X,Y)}&=\sum_{X^{(1)}\cup\dots\cup X^{(\tau)}\in\SSS_{\Lambda_\ell}^j}e^{-\gamma\left(d^{k_1}(X^{(1)},K^{(1)})+\cdots+d^{k_\tau}(X^{(\tau)},K^{(\tau)})\right)}\\&\leq \prod_{i=1}^\tau\left(\sum_{X^{(i)}\in\SSS_{\Lambda_\ell}^{k_i}}e^{-\gamma d^{k_i}(X^{(i)},K^{(i)})}\right)\leq \mathcal{L}_{\gamma}^{2J\tau}\leq \mathcal{L}_\gamma^{2J(\widetilde{K}-1)}\:, \label{eq:estimsum}
\end{align}
where we have used Lemma \ref{lemma:geom2J} for estimating the sum in \eqref{eq:estimsum}. This shows \eqref{eq:estimate2JK} which together with \eqref{eq:donau} proves that
\begin{equation}
\sum_{X\in\SSS_{\Lambda_\ell}^j}e^{-\gamma d^j(X,\SSS_{\Lambda_\ell,K}^j)}\leq \sum_{Y\in{\mathbb{B}_{\Lambda_\ell,\widetilde{K}}^j}} \mathcal{L}_\gamma^{2J(\widetilde{K}-1)}=|\mathbb{B}_{{\Lambda_\ell,\widetilde{K}}}^j| \cdot\mathcal{L}_\gamma^{2J(\widetilde{K}-1)}\:,
\end{equation}
which means that we need to further estimate the number of configurations in $\Lambda_\ell$ of $(\widetilde{K}-1)$ or less building blocks of $j$ particles. We do this by a rather rough combinatorial argument: firstly, note that distributing $j$ particles into up to $(\widetilde{K}-1)$ building blocks can be estimated by $\binom{j+\widetilde{K}-2}{\widetilde{K}-2}$. This is clearly an overestimate, since it disregards the constraints in \eqref{eq:buildingblocks} which building blocks have to satisfy. Next, we have to account for all the possible ways, those $(\widetilde{K}-1)$ or less building blocks can be placed in $\Lambda_\ell$. A trivial upper bound for this is given by $\ell^{\widetilde{K}-1}$, since there are up to $\ell$ sites one could place each individual building block (disregarding the fact that the supports of the building blocks have to be disjoint and can be larger than one and thus again overestimating).
If $\widetilde{K}>2$, we therefore conclude
\begin{equation} \label{eq:setB}
|\mathbb{B}_{{\Lambda_\ell,\widetilde{K}}}^j| \leq \binom{j+\widetilde{K}-2}{\widetilde{K}-2}\cdot\ell^{\widetilde{K}-1}\leq \left(\frac{(2J\ell+\widetilde{K}-2)e}{\widetilde{K}-2}\right)^{\widetilde{K}-2}\ell^{\widetilde{K}-1}\leq(4Je)^{\widetilde{K}-2}\ell^{2\widetilde{K}-3}\:,
\end{equation}
where we have used the estimate $\binom{\alpha}{\beta}\leq \left(\frac{\alpha e}{\beta}\right)^\beta$ for the binomial coefficient as well as the fact that $j\leq 2J\ell$ and $4J\leq \widetilde{K}\leq 2\ell$. For the special case $\widetilde{K}=2$, which can only occur of $J=1/2$, Equation \eqref{eq:setB} still follows since $|\mathbb{B}_{{\Lambda_\ell,\widetilde{K}}}^j|\leq\ell$. This shows the lemma.

\end{proof}
\subsection{Proof of the main result}
We are now prepared to show our main result:
\begin{thm} \label{thm:main} For any $K\in\{4J^2,4J^2+1,\dots\}$ and any $\delta>0$, we get the following estimate
	\begin{equation} \label{eq:entbound}
	\limsup_{\ell\rightarrow\infty}\:\limsup_{L\rightarrow\infty}\:\frac{\sup_W\:\sup_{\psi}\mathcal{E}(\rho_\psi)}{\log\ell}\leq {2\lfloor K/J\rfloor-2}\:,
	\end{equation}
	where the suprema are taken over all non-negative functions $W:\SSS_L\rightarrow\R_0^+$ and over all normalized elements $\psi\in\mbox{\emph{ran}}(Q_{K,\delta}(L,W))$, respectively.
\end{thm}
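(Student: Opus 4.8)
The plan is to feed the polynomial-in-$\ell$ bound of Lemma \ref{lemma:series} into the trace estimate of Corollary \ref{cor:traceestimate2}, convert the resulting control of $\mbox{Tr}[(\rho_1)^\alpha]$ into a bound on the $\alpha$-R\'enyi entropy $\mathcal{S}_\alpha(\rho_1)$, and then optimize over $\alpha\in(0,1)$.

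First I would fix $K\geq 4J^2$, $\delta>0$ and $\alpha\in(0,1)$, restrict to $\ell$ large enough that $\ell\geq 4J$ and $2J\ell\geq K$, and apply Lemma \ref{lemma:series} with $\gamma=2\alpha\mu_K>0$ to each inner sum over $X\in\SSS_{\Lambda_\ell}^j$ appearing in \eqref{eq:trace2}. Since the resulting bound $(4Je)^{\widetilde{K}-2}\mathcal{L}_{2\alpha\mu_K}^{2J(\widetilde{K}-1)}\ell^{2\widetilde{K}-3}$ (with $\widetilde{K}=\lfloor K/J\rfloor$) does not depend on $j$, the sum over $j\in\{4J,\dots,2J\ell\}$ costs only a factor of at most $2J\ell$, so Corollary \ref{cor:traceestimate2} gives
\begin{equation}
\mbox{Tr}[(\rho_1)^\alpha]\leq 2+2\ell^{4J(1-\alpha)}+D'(\alpha)\,\ell^{2\widetilde{K}-2}\:,
\end{equation}
with $D'(\alpha)=4J\,C_K'^{2\alpha}(4Je)^{\widetilde{K}-2}\mathcal{L}_{2\alpha\mu_K}^{2J(\widetilde{K}-1)}$, and this holds uniformly in $L>\ell$, in the background potential $W$, and in the normalized state $\psi\in\mbox{ran}(Q_{K,\delta}(L,W))$. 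Because $K\geq 4J^2$ forces $\widetilde{K}\geq 4J$ and $J\in\N/2$ gives $2\widetilde{K}-2\geq 8J-2\geq 4J>4J(1-\alpha)$, the last term dominates as $\ell\to\infty$, so I would absorb the lower-order terms and constants into a single constant $D(\alpha)$, independent of $\ell,L,W,\psi$, obtaining $\mbox{Tr}[(\rho_1)^\alpha]\leq D(\alpha)\,\ell^{2\widetilde{K}-2}$ for all large $\ell$.

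Using $\mathcal{E}(\rho_\psi)=\mathcal{S}(\rho_1)\leq\mathcal{S}_\alpha(\rho_1)=\tfrac{1}{1-\alpha}\log\mbox{Tr}[(\rho_1)^\alpha]$, this yields $\sup_W\sup_\psi\mathcal{E}(\rho_\psi)\leq\tfrac{1}{1-\alpha}\big((2\widetilde{K}-2)\log\ell+\log D(\alpha)\big)$ for all large $\ell$ and all $L>\ell$. Dividing by $\log\ell$, taking $\limsup_{L\to\infty}$ (the right-hand side is independent of $L$) and then $\limsup_{\ell\to\infty}$ (which removes the $\log D(\alpha)/\log\ell$ contribution) gives
\begin{equation}
\limsup_{\ell\to\infty}\limsup_{L\to\infty}\frac{\sup_W\sup_\psi\mathcal{E}(\rho_\psi)}{\log\ell}\leq\frac{2\widetilde{K}-2}{1-\alpha}\:.
\end{equation}
Since the left-hand side does not involve $\alpha$, the last step is to take the infimum over $\alpha\in(0,1)$, i.e.\ let $\alpha\to 0^+$, which produces the claimed bound $2\lfloor K/J\rfloor-2$.

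The one genuinely delicate point is the \emph{order} of the limits. The constant $D(\alpha)$ diverges as $\alpha\to 0^+$, because $\mathcal{L}_{2\alpha\mu_K}\to\infty$ as $2\alpha\mu_K\to 0^+$; hence one cannot take $\alpha$ small from the outset. Instead one must first send $\ell\to\infty$ for fixed $\alpha$, so that the divergent constant disappears inside $\log D(\alpha)/\log\ell\to 0$, and only afterwards optimize over $\alpha$. Everything else — the uniformity of the estimates in $W$, $\psi$ and $L$, and the elementary inequality $2\widetilde{K}-2>4J(1-\alpha)$ used to single out the dominant power of $\ell$ — follows directly from the standing hypotheses $K\geq 4J^2$ and $J\in\N/2$.
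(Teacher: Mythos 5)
Your proof is correct and follows essentially the same route as the paper: combine Corollary \ref{cor:traceestimate2} with Lemma \ref{lemma:series} at $\gamma=2\alpha\mu_K$, bound the von Neumann entropy by the $\alpha$-R\'enyi entropy, divide by $\log\ell$ and take limits, then let $\alpha\to 0^+$. Your added remark that the limits must be taken before optimizing over $\alpha$ (since $\mathcal{L}_{2\alpha\mu_K}\to\infty$ as $\alpha\to 0^+$) is a correct and worthwhile observation that the paper leaves implicit.
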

\begin{proof} Let $W:\SSS_L\rightarrow\R^+_0$ be an arbitrary non-negative background potential and let $\psi$ be an arbitrary normalized element of $\mbox{ran}(Q_{K,\delta}(L,W))$. Combining Corollary \ref{cor:traceestimate2} and Lemma \ref{lemma:series} (with the choice $\gamma=2\alpha\mu_K$) and using that for any $\alpha\in(0,1)$, the $\alpha$-R\'enyi entropy is an upper bound for the von Neumann entropy, we obtain
	\begin{align}
	\mathcal{E}(\rho_\psi)\leq\frac{1}{1-\alpha}\log\Tr[(\rho_1)^\alpha]\leq \frac{1}{1-\alpha}\log\left(2+2\ell^{4J(1-\alpha)}+(4Je)^{\widetilde{K}-1}C_K'^{2\alpha}\mathcal{L}_{2\alpha\mu_K}^{2J(\widetilde{K}-1)}\ell^{2\widetilde{K}-2}\right)\:,
	\end{align}
which does not depend on $L$. We therefore get
\begin{equation}
\limsup_{\ell\rightarrow\infty}\:\limsup_{L\rightarrow\infty}\:\frac{\sup_W\:\sup_{\psi}\mathcal{E}(\rho_\psi)}{\log\ell}\leq \frac{2\widetilde{K}-2}{1-\alpha}=\frac{2\lfloor {K}/J\rfloor-2}{1-\alpha}\:.
\end{equation}
Since this is true for all $\alpha\in(0,1)$, one can take $\alpha\rightarrow 0$, which yields the desired result.
\end{proof}
\begin{remark}
Note that for $J=1/2$, the constant in \eqref{eq:entbound}, is given by $(4K-2)$. In \cite{ARFS}, where only the case $J=1/2$ was treated, the better bound $(2K-1)$ was established. The main reason for this discrepancy is that in this special case, one can actually show that any configuration ${\bf m}$ with $V({\bf m})\leq K$ can actually be composed out of no more than $K$ building blocks rather than out of no more than ${\widetilde{K}}-1=2K-1$ building blocks.
\end{remark}
\appendix
\section{Proof of the distance formula} \label{app:a}
\begin{lemma}\label{lemma:graphdist} The graph distance $d^N(X,Y)$ from $X$ to $Y$ is given by \eqref{eq:distformula}.
\end{lemma}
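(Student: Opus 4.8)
The plan is to prove the two bounds $d^N(X,Y)\le\Phi(X,Y)$ and $d^N(X,Y)\ge\Phi(X,Y)$ separately, writing for brevity $\Phi(X,Y):=\sum_{i=1}^N\abs{x_i-y_i}$ for the right-hand side of \eqref{eq:distformula}. The single structural fact powering both directions is the translation of one adjacency step into the sorted (multiset) picture: if $X'$ is obtained from $X$ by transporting one particle from a site $s$ to $s+1$ (respectively to $s-1$), then the sorted $N$-tuples of $X$ and $X'$ agree in every coordinate but one, at which they differ by exactly $1$; concretely, a rightward move turns the \emph{largest} coordinate equal to $s$ into $s+1$, and a leftward move turns the \emph{smallest} coordinate equal to $s$ into $s-1$. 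I would verify this first, straight from the definition \eqref{eq:adjacent} of adjacency and the bijection ${\bf m}\leftrightarrow X_{\bf m}$.

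For the lower bound, this fact says that $\Phi(\cdot,Y)$ is an integer-valued function on the configuration graph $\SSS_L^N$ whose value changes by exactly $1$ across every edge: if $X\sim X'$ differ in their $m$-th sorted coordinate then $\abs{\Phi(X,Y)-\Phi(X',Y)}=\bigl|\,|x_m-y_m|-|x'_m-y_m|\,\bigr|=1$, since the $\ell^1$-distances to $y_m$ of two consecutive integers differ by $1$. Consequently, along any path from $X$ to $Y$ the quantity $\Phi(\cdot,Y)$ must descend from $\Phi(X,Y)$ to $\Phi(Y,Y)=0$ in unit steps, so the path has at least $\Phi(X,Y)$ edges, giving $d^N(X,Y)\ge\Phi(X,Y)$.

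For the upper bound I would induct on $\Phi(X,Y)$; the base case $X=Y$ is trivial, and the inductive step amounts to the claim that whenever $X\ne Y$ there is an allowed single-particle move carrying $X$ to some $X'\in\SSS_L^N$ with $\Phi(X',Y)=\Phi(X,Y)-1$. Using the left-right symmetry of the chain, I may assume there is an index $k$ with $x_k<y_k$ (otherwise $x_i\ge y_i$ for all $i$ and the mirror argument applies); fix such a $k$. Let $t\ge 0$ be the least integer for which the site $x_k+t+1$ carries fewer than $2J$ particles of $X$, and set $s:=x_k+t$; then ${\bf m}_X(s)\ge 1$ (it contains particle $k$ when $t=0$, and is full when $t\ge 1$) while ${\bf m}_X(s+1)<2J$, so moving one particle from $s$ to $s+1$ produces a legitimate $X'\in\SSS_L^N$. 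In the sorted picture this increments the largest coordinate $m$ with $x_m=s$, and a count of $X$-particles at sites $\le s$ (at least $k$ of them at sites $\le x_k$, plus $2Jt$ on the full sites $x_k+1,\dots,x_k+t$) shows $m\ge k+2Jt$. Applying the generalized Pauli condition $y_{j+2J}>y_j$ for $Y$ exactly $t$ times, together with $m-2Jt\ge k$ and $y_k\ge x_k+1$, yields $y_m\ge x_k+1+t=s+1>s=x_m$, so $\Phi$ decreases by exactly $1$ at coordinate $m$; this proves the claim, and the induction gives $d^N(X,Y)\le\Phi(X,Y)$.

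I expect the real obstacle to be hidden in this last step, namely showing that the integer $t$ above is well defined -- i.e.\ that the sites $x_k+1,x_k+2,\dots$ are not all occupied by $2J$ particles of $X$ all the way to the right endpoint $L$. This is exactly the scenario in which the naive move ``push particle $k$ one step to the right'' is blocked, and the point of introducing $t$ is to push the obstructing full block further right instead. The needed contradiction is combinatorial: if $x_k+1,\dots,L$ were all full, they would contain $2J(L-x_k)$ particles of $X$, each of index exceeding $k$, forcing $N\ge k+2J(L-x_k)$; but the $N-k+1$ particles $y_k\le y_{k+1}\le\cdots\le y_N$ all lie in the $L-x_k$ sites $\{x_k+1,\dots,L\}$, so the Pauli condition on $Y$ forces $N-k+1\le 2J(L-x_k)$, and these two bounds are incompatible. (As a by-product the inductive step shows $\SSS_L^N$ is connected, so $d^N$ is indeed finite.) Combining the two inequalities then yields $d^N(X,Y)=\Phi(X,Y)=\sum_{i=1}^N\abs{x_i-y_i}$.
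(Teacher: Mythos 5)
Your proposal is correct, and it parallels the paper for the lower bound but takes a genuinely different and in one respect more careful route for the upper bound. The paper's Appendix A also begins by establishing that $X\sim Y$ if and only if $\sum_i|x_i-y_i|=1$, and then invokes the triangle inequality to conclude that $\sum_i|x_i-y_i|\le d^N(X,Y)$ along any path; your observation that $\Phi(\cdot,Y)$ changes by exactly one across every edge is the same mechanism in slightly different dress. For the upper bound, however, the paper constructs an explicit path by fixing the sorted coordinates in order of index (find $i_0=\min\{k:x_k\ne y_k\}$, slide $y_{i_0}$ toward $x_{i_0}$ one unit at a time, then move on to $i_1$, etc.), while you induct on $\Phi$ and exhibit a single legal move that strictly decreases $\Phi$. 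The two approaches buy different things. The paper's is shorter but leaves unverified the point you correctly flag as the crux: that each intermediate configuration along the constructed path still lies in $\SSS_L^N$, i.e.\ that the monotone slide never violates the generalized Pauli constraint $x_{k+2J}>x_k$. (This does hold — when decrementing the $i_0$-th coordinate the relevant comparison is with $y_{i_0-2J}=x_{i_0-2J}\le x_{i_0}-1$, and on the other side $y_{i_0+2J}>y_{i_0}$ — but the paper doesn't say so.) Your argument confronts exactly this blocking issue head-on with the integer $t$: you push the first non-full site past the obstructing block, prove $t$ is well defined by a clean counting contradiction, and then use the Pauli inequality $y_{j+2J}>y_j$ iterated $t$ times to show the move really does decrease $\Phi$. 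That is more work but fully self-contained, and it delivers connectedness of the configuration graph as a by-product. Both arguments are sound; yours fills a gap in exposition that the paper leaves to the reader.
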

\begin{proof} Let $\delta^N(X,Y):=\sum_{i=1}^N|x_i-y_i|$. To prove that $\delta^N(X,Y)$ is a lower bound of the graph distance $d^N(X,Y)$, we first show that ${\bf m}_X\sim{\bf m}_Y$ if and only if $\delta^N(X,Y)=1\;\;\forall\;X,Y\in\SSS_L^N$.
	
	For $X,Y\in\SSS^N_L $, where $X=(x_1,x_2,\dots,x_N)$, $Y=(y_1,y_2,\dots,y_N)$, suppose $\delta^N(X,Y)=1$. By \eqref{eq:distformula}, there exists $k_0\in\{1,2,\dots,N\}$ such that $|x_{k_0}-y_{k_0}|=1$ and $x_k=y_k\;\;\forall\;k\in\{1,2,\dots,N\}\setminus \{k_0\}$ . Define $j_0:=x_{k_0}$. Therefore $y_{k_0}=j_0\pm 1$. Now, let ${\bf m}_X$ and ${\bf m}_Y $ be the occupation number functions corresponding to $X$ and $Y$ respectively. Without loss of generality, let $y_{k_0}= j_0+1$ and suppose that ${\bf m}_X(j_0) = r$, \;\;$r\in \{1,\ldots,2J\}$. Then ${\bf m}_Y(j_0)=r-1$, since $y_{k_0}\neq j_0 $. Hence, ${\bf m}_X(j_0)-{\bf m}_Y(j_0)= r-(r-1)=1$. Also, suppose ${\bf m}_Y(j_0+1)=s,\;\;s\in\set{1,\ldots, 2J}$, then ${\bf m}_X(j_0+1)= s-1$, since $y_{k_0}= j_0+1$. Therefore, ${\bf m}_X(j_0+1)-{\bf m}_Y(j_0+1)=s-1-s=-1$ and ${\bf m}_X(j)={\bf m}_Y(j), \;\;\forall\; j\in \{1,\ldots L\}\setminus \{j_0, j_0+1\}$. Hence, ${\bf m}_X\sim{\bf m}_Y$.
	
	Conversely, suppose ${\bf m}_X\sim{\bf m}_Y$. By \eqref{eq:adjacent}, there exists a unique $j_0\in\{1,\ldots,L-1\}$ such that ${\bf m}_X(j_0)-{\bf m}_Y(j_0)= \pm 1,\;\;\;{\bf m}_X(j_0+1)-{\bf m}_Y(j_0+1)=\mp 1$ and ${\bf m}_X(j)={\bf m}_Y(j)\;\;\forall\;j\in \{1,\ldots, L\}\setminus\{j_0, j_0+1\}$. Without loss of generality, suppose ${\bf m}_X(j_0)-{\bf m}_Y(j_0)=1$ and ${\bf m}_X(j_0+1)-{\bf m}_Y(j_0+1)=-1$. Using the same definitions as before that $j_0:=x_{k_0}$ and ${\bf m}_X(j_0)=r$, let ${\bf m}_X(j_0+1)=s'$, i.e.\ $s'=s-1$ where ${\bf m}_Y(j_0+1)=s$. Let $i_1=\min\set{k : x_k=j_0}$, then $x_k =j_0$ for $k= i_1,\ldots, i_1+(r-1)$ and $x_k= j_0+1$ for $k=i_1+r,\ldots, i_1+r+(s'-1)$. Therefore,
	\begin{equation}\begin{split}
			\delta^N(X,Y)&=\sum_{j=1}^N|x_j-y_j|=0+\cdots+0+|x_{i_1+(r-1)}-y_{i_1+(r-1)}|+0+\cdots+0\\
			&=|j_0-(j_0+1)|=1
	\end{split}\end{equation}
	$\Longrightarrow\;\;$ $\delta^N(X,Y)=1$.\\
	It then follows from the triangle inequality that $\delta^N(X,Y)$ is a lower bound of the graph distance.
	
	To prove equality, it suffices to show that there exists a path from $X$ to $Y$ such that the length of the path is $\delta^N(X,Y)$.
	Let $X,Y\in \SSS_L^N$ and $i_0=\min\{k\in\set{1,\ldots N}: x_k\neq y_k\}$. Without loss of generality, suppose $x_{i_0}<y_{i_0}$. Consider the path
	\begin{equation}\begin{split} Y&=(\ldots,y_{i_0-1},y_{i_0},y_{i_0+1},\ldots)\longrightarrow Y_1=(\ldots, y_{i_0-1},y_{i_0}-1, y_{i_0+1},\ldots)\\
			&\longrightarrow	Y_2=(\ldots, y_{i_0-1},y_{i_0}-2, y_{i_0+1},\ldots)\cdots \longrightarrow Y_{y_{i_0}-x_{i_0}}=(\ldots, y_{i_0-1},x_{i_0}, y_{i_0+1},\ldots)\:.
	\end{split}\end{equation}
	
	The case $x_{i_0}>y_{i_0}$ is similar by switching the roles of $X$ and $Y $. Notice that ${\bf m}_Y\sim{\bf m}_{Y_1}\sim {\bf m}_{Y_2}\sim \cdots\sim {\bf m}_{Y_{y_{i_0}-x_{i_0}}}$ and the length of this path is $|x_{i_0}-y_{i_0}|$.\\
	Define $i_1:=\min\{k\in\set{i_0+1,\ldots,N}: x_k\neq y_k$. Repeating the above process for $x_{i_1}<y_{i_1}$ (with a similar case for $x_{i_1}>y_{i_1}$), we get another path of length $|x_{i_1}-y_{i_1}|$. Since $N<\infty$, the process ends at some $i_s\leq n$ such that $i_s:=\min\{k\in\set{i_0+s,\ldots,N}: x_k\neq y_k\}$ and repeating the process for $x_{i_s}$ yields a path of length $|x_{i_s}-y_{i_s}|$. Therefore, we have a path from $X$ to $Y$ of length 
	\[|x_{i_0}-y_{i_0}|+ |x_{i_1}-y_{i_1}|+\cdots+|x_{i_s}-y_{i_s}|\]
	\begin{equation}\begin{split}
			=\sum_{j:1\leq j<i_0}|x_j-y_j|+|x_{i_0}-y_{i_0}|&+ \sum_{j:i_0< j<i_1}|x_j-y_j|+|x_{i_1}-y_{i_1}|+\cdots\\
			&+ \sum_{j:i_{s-1}< j<i_s}|x_j-y_j|+|x_{i_s}-y_{i_s}|+\sum_{j:i_s< j\leq N}|x_j-y_j|\\
			=\sum^N_{j=1}|x_j-y_j|=\delta^N(X,Y)\:,&
	\end{split}\end{equation}
	which finishes the proof.
\end{proof}
\section{Auxiliary results concerning the interaction potential} \label{app:b}
\subsection{Proof of Proposition \ref{prop:minimizer}}  \label{app:b1}
\begin{proof} Firstly, note that for any ${\bf m}\in{\bf M}_L^N$ one gets
	\begin{align} \label{eq:potentialsimplified}
		V({\bf m})&=\sum_{j=1}^{L-1}\left[J({\bf m}(j)+{\bf m}(j+1))-{\bf m}(j){\bf m}(j+1)\right]+J({\bf m}(1)+{\bf m}(L))\\
		&=2J\sum_{j=1}^L{\bf m}(j)-\sum_{j=1}^{L-1}{\bf m}(j){\bf m}(j+1)=2JN-\sum_{j=1}^{L-1}{\bf m}(j){\bf m}(j+1)\:,
	\end{align}
	which shows that finding minimizers of the potential is equivalent to finding maximizers of $Q_N({\bf m}):=\sum_{j=1}^{L-1}{\bf m}(j){\bf m}(j+1)$. From the explicit form of $Q_N$, it obvious that if $\supp({\bf m})$ is not a discrete interval, i.e. if there exists a $j_0\notin\supp({\bf m})$ with $\min\supp({\bf m})<j_0<\max\supp({\bf m})$, then ${\bf m}$ cannot be a maximizer of $Q_N$. Hence, from now on, we will only consider ${\bf m}\in{\bf M}_L^N$ such that $\supp({\bf m})$ is a discrete interval and w.l.o.g. let us assume $\supp({\bf m})=\{1,2,\dots,k\}$ for some $k\in\N$. For a proof by contradiction, let us now assume that there is an ${\bf m}\in{\bf M}_L^N$ which maximizes $Q_N$ but is not of the form \eqref{eq:form1} or \eqref{eq:form2}. Note that since $N\geq 4J$ but ${\bf m}(j)\leq 2J$ for every $j\in\{1,2,\dots,k\}$, this implies that $k\geq3$. Note that for $k=3$, we get
\begin{equation}
Q_N({\bf m})={\bf m}(1){\bf m}(2)+{\bf m}(2){\bf m}(3)={\bf m}(2)(N-{\bf m}(2))\:,
\end{equation}
where we have used that ${\bf m}(3)=N-{\bf m}(1)-{\bf m}(2)$ for the last equation. Now, observe that since $N\geq 4J$, the maximal possible value for $Q_N$ is only attained if ${\bf m}(2)=2J$, which shows that for $k=3$, any maximizer of $Q_N$ has to be of the form \eqref{eq:form1} or \eqref{eq:form2}. Hence, from now on, we will only consider the case $k\geq 4$.
Let us now distinguish a few different cases and show that for each of these cases, we can construct another $\widehat{\bf m}\in{\bf M}_L^N$ such that $Q_N(\widehat{\bf m})>Q_N({\bf m})$ -- a contradiction to ${\bf m}$ maximizing $Q_N$:
\begin{itemize}
\item[First case:] ${\bf m}(1)>{\bf m}(2)$. In this case, we set
$\widehat{\bf m}(1):={\bf m}(2), \widehat{\bf m}(2):={\bf m}(1)$ and for every other $j$, we define $\widehat{\bf m}(j):={\bf m}(j)$. This yields
\begin{equation}
Q_N(\widehat{\bf m})-Q_N({\bf m})=({\bf m}(1)-{\bf m}(2))\cdot{\bf m}(3)>0\:,
\end{equation}
which shows that ${\bf m}$ cannot maximize $Q_N$ in this case.
\item[Second case:] ${\bf m}(1)\leq {\bf m}(2)$, and it is not true that ${\bf m}(2)={\bf m}(3)= 2J$. We split this second case into two subcases:
\begin{itemize}
\item[First subcase:] ${\bf m}(2)<{\bf m}(3)$. In this case, we define the configuration $\widehat{\bf m}$ by setting $\widehat{\bf m}(1):={\bf m}(1)-1, \widehat{\bf m}(2):={\bf m}(2)+1$ and for every other $j$, we define $\widehat{\bf m}(j):={\bf m}(j)$. This yields
\begin{equation}
Q_N(\widehat{\bf m})-Q_N({\bf m})=({\bf m}(1)-1)+{\bf m}(3)-{\bf m}(2)\geq {\bf m}(3)-{\bf m}(2)>0\:,
\end{equation}
which shows that ${\bf m}$ cannot maximize $Q_N$ in this case.
\item[Second subcase:] ${\bf m}(2)\geq {\bf m}(3)$. Note that since we are in the Second case, it is not possible that ${\bf m}(3)=2J$. We then define the configuration $\widehat{\bf m}$ by setting $\widehat{\bf m}(1):={\bf m}(1)-1, \widehat{\bf m}(3):={\bf m}(3)+1$ and for every other $j$, we define $\widehat{m}(j):={\bf m}(j)$. We then get
\begin{equation}
Q_N(\widehat{\bf m})-Q_N({\bf m})={\bf m}(4)>0\:,
\end{equation}
which shows that ${\bf m}$ is not a maximizer of $Q_N$.
\end{itemize}
\item[Third case:] ${\bf m}(2)={\bf m}(3)=2J$. Note that in this case, we must have $k\geq 5$, since otherwise ${\bf m}$ would be of the form \eqref{eq:form1} or \eqref{eq:form2}. Let $k_0\in\{4,\dots,k-1\}$ be the smallest number for which ${\bf m}(k_0)\neq 2J$. (Again, note that such a $k_0$ must exist, since otherwise ${\bf m}$ would be of the form \eqref{eq:form1} or \eqref{eq:form2}.) In particular, this means that ${\bf m}(k_0-1)=2J$ and ${\bf{m}}(k_0+1)>0$. Now, define $\widehat{\bf m}(1):={\bf m}(1)-1, \widehat{\bf m}(k_0):={\bf m}(k_0)+1$ and for every other $j$, we set $\widehat{\bf m}(j):={\bf m}(j)$. We then find
\begin{equation}
Q_N(\widehat{\bf m})-Q_N({\bf m})={\bf m}(k_0+1)>0\:,
\end{equation}
which shows that ${\bf m}$ cannot be a maximizer of $Q_N$.
\end{itemize}
Since these three cases exhaust all possibilities, this finishes the proof.
\end{proof}
\subsection{Proof of Lemma \ref{lemma:localminimizer}} \label{app:b2}
\begin{proof}
	Let $Z=(z_1,z_2,\dots,z_{j+k})\in\SSS_{L,K}^{j+k}$ be such that $d^{j+k}(X\cup Y,\SSS^{j+k}_{L,K})=d^{j+k}(X\cup Y,Z)$. Now, introduce the configurations ${Z}'=(z_1,z_2,\dots,z_j)\in\SSS_L^{j}$ and ${Z}''=(z_{j+1},\dots,z_{j+k})\in\SSS_L^{k}$. Moreover, let ${\bf m}_{Z'}$ and ${\bf m}_{Z''}$ denote the corresponding occupation number functions. Note that ${\bf m}_{Z'}(i)=0$ for $i>z_j$ as well as ${\bf m}_{Z''}(i)=0$ for $i<z_j$. Moreover, note that ${\bf m}_{Z'}(z_j)+{\bf m}_{Z''}(z_j)\leq 2J$. It follows that $V({\bf m}_{Z'})\leq V({\bf m}_{Z'}+{\bf m}_{Z''})$, i.e. $V({Z}')\leq V(Z)$ and thus $Z'\in\SSS_{L,K}^j$. To see this, observe that
	\begin{align}
		V({\bf m}_{Z'}+{\bf m}_{Z''})&=V({\bf m}_{Z'})+(2J-{\bf m}_{Z'}(z_j-1)){\bf m}_{Z''}(z_j)+(2J-({\bf m}_{Z'}(z_j)+{\bf m}_{Z''}(z_j)){\bf m}_{Z''}(z_j+1)\\&+\sum_{i=z_j+1}^{L-1}(2J-{\bf m}_{Z''}(i)){\bf m}_{Z''}(i+1)\geq V({\bf m}_{Z'})\:,
		\label{eq:cutoffpotential}
	\end{align}
	where we used that since ${\bf m}_{Z'},{\bf m}_{Z''}\in{\bf M}_L$, one has ${\bf m}_{Z'}(i),{\bf m}_{Z''}(i)\leq 2J$ for any $i\in\{1,2,\dots,L\}$ and moreover that ${\bf m}_{Z'}(z_j)+{\bf m}_{Z''}(z_j)\leq 2J$.
	Now, if in addition $z_j\leq \ell$, this would imply $Z'\in\SSS_{\Lambda_\ell,K}^j$ and since -- trivially -- $d^j(X,\SSS_{\Lambda_\ell,K}^j)\leq d^j(X,Z')\leq d^{j+k}(X\cup Y,Z'\cup Z'')=d^{j+k}(X\cup Y,\SSS_{L,K}^{j+k})$ this shows \eqref{eq:distest1} for the case $z_j\leq \ell$. If $z_j>\ell$, we need to modify our argument. If $z_1>\ell$, we construct the occupation number function ${\bf g}\in{\bf M}^j_{\Lambda_\ell}$ as follows:
	\begin{equation}
	{\bf g}(i):= \begin{cases}
	2J&\mbox{if}\qquad\ell-\lfloor{\frac{j}{2J}}\rfloor+1\leq i\leq\ell\\
	j(\bmod\;2J)&\mbox{if}\qquad \ell-\lfloor{\frac{j}{2J}}\rfloor
	\end{cases}\:.
	\end{equation}
	Let $X_{\bf g}\in \SSS^j_{\Lambda_\ell}$ denote the multiset associated with $\bf g$. Note that $X_{\bf g}$ is of the form \eqref{eq:form1} or \eqref{eq:form2} and thus, by Proposition \ref{prop:minimizer} we have $X_{\bf g}\in \SSS^j_{\Lambda_\ell, K}$. Therefore, $d^j(X, \SSS^j_{\Lambda_\ell,K})\leq d^j(X,X_{\bf g})\leq d^j(X,Z')\leq d^{j+k}(X\cup Y,\SSS^{j+k}_{L,K})$ (where the second inequality is due to the fact that $x_i\leq g_i\leq \ell < z_i, \;\:i=1,\ldots, j$). So, from now on, we may assume that $z_1\leq \ell$ and thus let $j_0:=\max\{i\in\{1,2,\dots,j\}: z_i\leq \ell\}$ be the index of the last particle of $Z'$ that still lies in $\Lambda_\ell$. We now split $Z'=Z'_{\Lambda_\ell}\cup Z'_{\Lambda_\ell^c}$, where $Z'_{\Lambda_\ell}=(z_1,\dots,z_{j_0})$ and $Z'_{\Lambda_\ell^c}=(z_{j_0+1},\dots,z_j)$. Let ${\bf m}_{Z_{\Lambda_\ell}'}\in{\bf M}_{\Lambda_\ell}^{j_0}$ be the occupation number function associated to $Z_{\Lambda_\ell}'$. Lastly, let $j_c:=j-j_0$ be the number of particles in $Z'_{\Lambda_\ell^c}$. We then distinguish the following two cases:
\begin{itemize}
\item[Case 1:] $j_c+{\bf m}_{Z_{\Lambda_\ell}'}(\ell)+{\bf m}_{Z_{\Lambda_\ell}'}(\ell-1)\geq 4J$.\\
Define $r_0 := \max\{k\in\{1,\ldots, \ell-2\}:\sum_{s=k}^\ell(2J-{\bf m}_{Z_{\Lambda_\ell}'}(s))\geq j_c\}$ and let $\mu=j_c -\sum^\ell_{s=r_0+1}(2J-{\bf m}_{Z_{\Lambda_\ell}'}(s))$. We construct the occupation number function ${\bf c}\in{\bf M}_{\Lambda_\ell}^j$ as follows:
\begin{equation}
	    {\bf c}(i):=\begin{cases} {\bf m}_{Z_{\Lambda_\ell}'}(i)&\mbox{if}\qquad 1\leq i\leq r_0-1\\
	    {\bf m}_{Z_{\Lambda_\ell}'}(i)+\mu&\mbox{if}\qquad i=r_0\\
	    2J&\mbox{if}\qquad r_0+1\leq i \leq \ell
	    \end{cases}
	\end{equation}
In other words, the particles in $Z'_{\Lambda_\ell^c}$ are used to ``fill up" each site of configuration ${\bf m}_{Z_{\Lambda_\ell}'}$ to its maximal occupation number $2J$ -- starting from $\ell, \ell-1,\mbox{etc.}$ until the $j_c$ particles in $Z'_{\Lambda_\ell^c}$ have been exhausted (cf.\ Figure \ref{moveparticle}). Note that the choice ${\bf c}(\ell)={\bf c}(\ell-1)=2J$ is following the same principle and is due to the assumption $j_c+{\bf m}_{Z_{\Lambda_\ell}'}(\ell)+{\bf m}_{Z_{\Lambda_\ell}'}(\ell-1)\geq 4J$. 
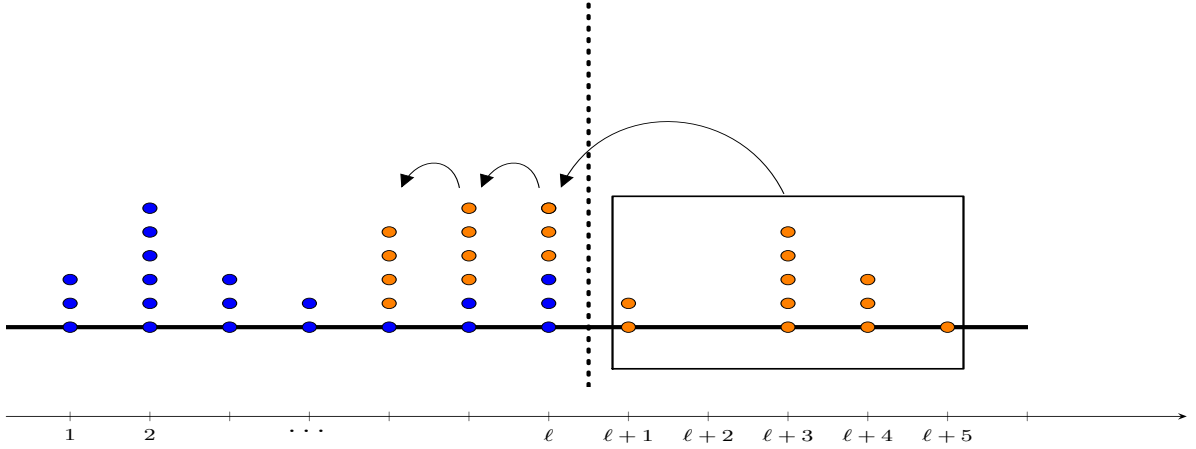
\begin{figure}[ht]
	\centering
	\resizebox{16cm}{6cm}
	{

\begin{tikzpicture}[line cap=round,line join=round,>=triangle 45,x=1cm,y=1cm]
\begin{axis}[
x=1cm,y=1cm,
axis x line=bottom,
axis y line=none,
%ymajorgrids=true,
%xmajorgrids=true,
xmin=0.2,
xmax=15,
ymin=-0.5,
ymax=6.5,
xtick={1,...,13},
xticklabels={\tiny{1},\tiny{2},,\ldots,,, \tiny{$\ell$},\tiny{$\ell+1$},\tiny{$\ell+2$},\tiny{$\ell+3$},\tiny{$\ell+4$},\tiny{$\ell+5$}}
]
\clip(0,0) rectangle (13,8.7);
\draw [line width=2pt] (0,1)-- (15,1);
\draw [loosely dotted,line width=1.5pt] (7.5,0)-- (7.5,6.5);
\begin{scriptsize}
\draw [fill=qqqqff] (1,1) circle (2.5pt);
\draw [fill=qqqqff] (1,1.4) circle (2.5pt);
\draw [fill=qqqqff] (1,1.8) circle (2.5pt);
\draw [fill=qqqqff] (2,1) circle (2.5pt);
\draw [fill=qqqqff] (2,1.4) circle (2.5pt);
\draw [fill=qqqqff] (2,1.8) circle (2.5pt);
\draw [fill=qqqqff] (2,2.2) circle (2.5pt);
\draw [fill=qqqqff] (2,2.6) circle (2.5pt);
\draw [fill=qqqqff] (2,3) circle (2.5pt);
\draw [fill=qqqqff] (3,1) circle (2.5pt);
\draw [fill=qqqqff] (3,1.4) circle (2.5pt);
\draw [fill=qqqqff] (3,1.8) circle (2.5pt);
\draw [fill=qqqqff] (4,1) circle (2.5pt);
\draw [fill=qqqqff] (4,1.4) circle (2.5pt);
\draw [fill=qqqqff] (5,1) circle (2.5pt);
\draw [fill=orange] (5,1.4) circle (2.5pt);
\draw [fill=orange] (5,1.8) circle (2.5pt);
\draw [fill=orange] (5,2.2) circle (2.5pt);
\draw [fill=orange] (5,2.6) circle (2.5pt);
\draw [fill=qqqqff] (6,1) circle (2.5pt);
\draw [fill=qqqqff] (6,1.4) circle (2.5pt);
\draw [fill=orange] (6,1.8) circle (2.5pt);
\draw [fill=orange] (6,2.2) circle (2.5pt);
\draw [fill=orange] (6,2.6) circle (2.5pt);
\draw [fill=orange] (6,3.0) circle (2.5pt);
\draw [fill=orange] (7,3.0) circle (2.5pt);
\draw [fill=qqqqff] (7,1) circle (2.5pt);
\draw [fill=qqqqff] (7,1.4) circle (2.5pt);
\draw [fill=qqqqff] (7,1.8) circle (2.5pt);
\draw [fill=orange] (7,2.2) circle (2.5pt);
\draw [fill=orange] (7,2.6) circle (2.5pt);
\draw [fill=orange] (7,3.0) circle (2.5pt);
\draw [fill=orange] (8,1) circle (2.5pt);
\draw [fill=orange] (8,1.4) circle (2.5pt);
\draw [fill=orange] (10,1) circle (2.5pt);
\draw [fill=orange] (10,1.4) circle (2.5pt);
\draw [fill=orange] (10,1.8) circle (2.5pt);
\draw [fill=orange] (10,2.2) circle (2.5pt);
\draw [fill=orange] (10,2.6) circle (2.5pt);
\draw [fill=orange] (11,1) circle (2.5pt);
\draw [fill=orange] (11,1.4) circle (2.5pt);
\draw [fill=orange] (11,1.8) circle (2.5pt);
\draw [fill=orange] (12,1) circle (2.5pt);
\node (f) at (5.1, 3.2) {};
\node (e) at (5.9,3.2) {};
\draw[->] (e)  to [out=98,in=70, looseness=2.0] (f);
\node (d) at (6.1, 3.2) {};
\node (c) at (6.9,3.2) {};
\draw[->] (c)  to [out=98,in=70, looseness=2.0] (d);
\node (b) at (7.1, 3.2) {};
\node (a) at (10, 3.1) {};
\draw[->] (a)  to [out=110,in=70, looseness=1.5] (b);
%\node (c) at (1.95,2) {};
%\node (d) at (3.02, 1.34) {};
%\draw[->] (d)  to [out=100,in=75, looseness=1.5] (c);
\draw[black,thick] (7.8,0.3) -- (12.2,0.3) -- (12.2, 3.2) -- (7.8,3.2) -- (7.8,0.3);
\end{scriptsize}
\end{axis}
\end{tikzpicture}

}
\caption{Picture showing the construction of configuration ${\bf c}$, by moving the $j_c=11$ particles in $\Lambda_\ell^c$ such that they fill up each site to its maximum occupation number $2J=6$ starting at $\ell$ to the left (here: from $\ell$ to $\ell-2$).}
\label{moveparticle}
\end{figure}

\noindent
We now make the following observations: Firstly, let $X_{\bf c}\in\SSS_{\Lambda_\ell}^j$ denote the multiset corresponding to ${\bf c}$. Then, by construction, it is clear that $d^j(X,X_{\bf c})\leq d^j(X,Z')$. Secondly, let $s_0=\min\{s\in\{1,2,\dots,\ell-2\}:\forall t\in\{s+1,\dots,\ell\}: {\bf c}(t)=2J\}$. If $s_0=1$, then $X_{\bf c}$ is a minimizer configuration with ${\bf c}(i)=2J\quad \forall\;i= 2,\ldots,\ell$ and hence, $X_{\bf c}\in \SSS^j_{\Lambda_\ell,K}$. Since $x_i\leq c_i\leq z_i,\;\; i=1,\ldots,j$, we get
\begin{equation}
d^j(X,X_{\bf c})= \sum^j_{i=1}(c_i-x_i)\leq \sum^j_{i=1}(z_i-x_i)\leq d^j(X,Z') 
\end{equation}
and therefore,
$d^j(X,\SSS^j_{\Lambda_\ell,K})\leq d^j(X,X_{\bf c})\leq d^j(X,Z')\leq d^{j+k}(X\cup Y,\SSS^{j+k}_{L,K})$ for the case $s_0=1$.   

Now for $s_0\neq 1$, define ${\bf c}'':={\bf c}\cdot1_{\{s_0,\dots,\ell\}}$, ${\bf c}':={\bf c}-{\bf c}''$, ${\bf d}'':={ {\bf m}_{Z'}}\cdot1_{\{s_0,\dots,L\}}$ and ${\bf d}':={\bf m}_{Z'}-{\bf d}''$.
		Then, observe that ${\bf c}(s_0)\geq {\bf m}_{\Z'_{\Lambda_\ell}}(s_0)$ and moreover that ${\bf c}(s)= {\bf m}_{\Z'_{\Lambda_\ell}}(s)$ for every $s<s_0$. Consequently, we get ${\bf c}'={\bf d}'$. Observe that ${\bf c}''$ is of the form \eqref{eq:form1} or \eqref{eq:form2} and thus $V({\bf c}'')=4J^2$. Moreover, by Proposition \ref{prop:minimizer}, we have $V({\bf c''})\leq V({\bf d''})$. 
		Consequently, we get
		\begin{equation}
		V({\bf c})=V({\bf c'})-{\bf c}(s_0-1){\bf c}(s_0)+V({\bf c}'')\leq V({\bf d}')-{\bf m}_{Z'_{\Lambda_\ell}}(s_0-1){\bf m}_{Z'_{\Lambda_\ell}}(s_0)+V({\bf d}'')=V({\bf m}_{Z'})\leq K\:.
		\end{equation}
		This shows that $X_{\bf c}\in\SSS_{\Lambda_\ell,K}^j$. Since $d^j(X,\SSS_{\Lambda_\ell,K}^j)\leq d^j(X,X_{\bf c})\leq d^j(X,Z')\leq d^{j+k}(X\cup Y,\SSS_{L,K}^{j+k})$, this shows the assertion for Case 1.\\
		
		\item[Case 2:] $j_c+{\bf m}_{Z_{\Lambda_\ell}'}(\ell)+{\bf m}_{Z_{\Lambda_\ell}'}(\ell-1)< 4J$. 
		Firstly, let us define the following two quantities:
		\begin{align}
			\xi'&:={\bf m}_{Z'_{\Lambda_\ell}}(\ell-1)+\sum_{k\in\N: \ell+2k-1\leq L}{\bf m}_{Z'_{\Lambda^c_\ell}}(\ell+2k-1)\\
			\xi''&:={\bf m}_{Z'_{\Lambda_\ell}}(\ell)\qquad+\sum_{k\in\N: \ell+2k\leq L}{\bf m}_{Z'_{\Lambda^c_\ell}}(\ell+2k)\quad\mbox{and observe that}\\
			&\sum_{i=\ell-1}^{L-1}{\bf m}_{Z'}(i){\bf m}_{Z'}(i+1)\leq \xi'\xi''\:, \label{eq:xixi}
		\end{align}
		since the left hand side of \eqref{eq:xixi} is a sum over a subset of the (non--negative) cross--terms one obtains when expanding the product $\xi'\xi''$.
		Moreover, note that the assumption for being in Case 2 is equivalent to $\xi'+\xi''<4J$.
		Next, we define the occupation number function ${\bf e}\in\SSS_{\Lambda_\ell}^j$ as follows:
		\begin{equation}
		{\bf e}(i):=\begin{cases}\min\{2J,\xi''\}+\max\{\xi'-2J,0\}\qquad&\mbox{if}\qquad i=\ell\\
		\min\{2J,\xi'\}+\max\{\xi''-2J,0\}\qquad&\mbox{if}\qquad i=\ell-1\\
		{\bf m}_{Z'_{\Lambda_\ell}}(i)\qquad&\mbox{if}\qquad i\in\{1,2,\dots,\ell-2\}\:.
		\end{cases}
		\end{equation}
		Let us now show that $V({\bf e})\leq V({\bf m}_{Z'})$. To this end, observe that
		\begin{align} 
			V({\bf m}_{Z'})- V({\bf e})=&{\bf m}_{Z'}(\ell-2)({\bf e}(\ell-1)-{\bf m}_{Z'}(\ell-1))+{\bf e}(\ell-1){\bf e}(\ell)-\sum_{i=\ell-1}^{L-1} {\bf m}_{Z'}(i){\bf m}_{Z'}(i+1)\\ \geq& {\bf e}(\ell-1){\bf e}(\ell)-\xi'\xi''\:,
			\label{eq:potentialest}	\end{align}
		where we used ${\bf e}(\ell-1)\geq {\bf m}_{Z_{\Lambda_\ell}}(\ell-1)$ and \eqref{eq:xixi}.
		Since $\xi'+\xi''<4J$, there are only three possible cases:
		\begin{itemize}
			\item[(i)] $\xi''>2J$ and $\xi'\leq 2J$
			\item[(ii)]$\xi''\leq 2J$ and $\xi'>2J$
			\item[(iii)] $\xi'\leq 2J$ and $\xi''\leq 2J$.
		\end{itemize}
		We will only discuss Cases (i) and (iii); Case (ii) follows from an argument similar to Case (i). If we are in Case (i), this means ${\bf e}(\ell)=2J$ and ${\bf e}(\ell-1)= \xi'+\xi''-2J$. Hence, by \eqref{eq:potentialest}, we get
		\begin{equation}
		V({\bf m}_{Z'})-V({\bf e})\geq 2J(\xi'+\xi''-2J)-\xi'\xi''=(2J-\xi')(\xi''-2J)\geq 0\:.
		\end{equation}
		For Case (iii), we have ${\bf e}(\ell)=\xi''$ and ${\bf e}(\ell-1)=\xi'$ and thus get -- again by \eqref{eq:potentialest} -- the estimate
		\begin{equation}
		V({\bf m}_{Z'})-V({\bf e})\geq \xi'\xi''-\xi'\xi''=0\:.
		\end{equation}
		Hence, we have shown $V({\bf e})\leq V({\bf m}_{Z'}) $. 
		Now, let $X_{\bf e}\in\SSS_{\Lambda_\ell}^j$ denote the multiset associated to ${\bf e}$. If we can show that $d^j(X,X_{\bf e})\leq d^j(X,Z')$, this will prove \eqref{eq:distest1} for the Case 2. Now, to see this, we firstly define the configuration ${\bf f}\in{\bf M}_{\Lambda_\ell}^j$ as follows
		\begin{equation}
		{\bf f}(i):=\begin{cases} \min\{2J,{\bf m}_{Z'}(\ell)+j_c\}\quad&\mbox{if}\quad i=\ell\\
		{\bf m}_{Z'}(\ell-1)+\max\{{\bf m}_{Z'}(\ell)+j_c-2J,0\}\quad&\mbox{if}\quad i=\ell-1\\
		{\bf m}_{Z'}(i)\quad&\mbox{if}\quad i\in\{1,2,\dots,\ell-2\}\:,
		\end{cases}
		\end{equation}
		which means that configuration ${\bf f}$ is obtained by adding the particles in $Z'_{\Lambda_c}$ to the configuration ${\bf m}_{Z'}$ -- starting at site $\ell$ and any possibly remaining particles to site $\ell-1$. Let $X_{\bf f}=(f_1,f_2,\dots,f_j)\in\SSS_{\Lambda_\ell}^j$ be the multiset associated to ${\bf f}$ and let $p=\min\{i:z_i>\ell\}$ and $\xi=\max\set{i: z_i\leq \ell-2}$ (note that $\xi$ exists since if $z_1>\ell-2$ and $j_c< 4J-{\bf m}_{Z'}(\ell-1)-{\bf m}_{Z'}(\ell)$, then we would have $j=j_0+j_c< 4J$, a contradiction). Again, we distinguish two cases:
\begin{itemize}
	\item [(a)] If $j_c\leq 2J-{\bf m}_{Z'}(\ell)$, then $z_i=f_i$ for $1\leq i\leq p-1$ and $f_i=\ell $ for $p\leq  i\leq j$. Then,
	\begin{equation}
	\begin{split}
	d^j(X,Z')&=\sum_{i=1}^{p-1}|x_i-z_i|+\sum_{i=p}^j (z_i-x_i)\\
	&=\sum_{i=1}^{p-1}|x_i-z_i|+\sum_{i=p}^j (z_i-f_i)+\sum_{i=p}^j (f_i-x_i)\\
	&=\sum_{i=1}^{p-1}|x_i-f_i|+\sum_{i=p}^j |z_i-f_i|+\sum_{i=p}^j |f_i-x_i|\\
	&= d^j(X,X_{\bf f})+\sum_{i=j-j_c+1}^j|z_i-f_i|\geq d^j(X,X_{\bf f})+ j_c
	\end{split}
	\end{equation}
	\item[(b)] If $j_c> 2J-{{\bf m}_{Z'}}(\ell)$, let $\eta= j_c-(2J-\alpha_1)$, where $\alpha_1={{\bf m}_{Z'}}(\ell)$ and $\alpha_2={{\bf m}_{Z'}}(\ell-1)$. Then, observe that
\end{itemize}
\begin{equation}
	\left\{\begin{split}
	f_i &=z_i,\;\;\;\;\;1\leq i\leq \xi+\alpha_2\\
	f_i& = \ell-1,\;\;\;\;\;\xi+\alpha_2+1\leq i\leq \xi+\alpha_2+n\\
	f_i&=\ell,\;\;\;\;\; i>\xi+\alpha_2+\eta\\
	z_i&=\ell,\;\;\;\;\; \xi+\alpha_2+1\leq i \leq p-1
	\end{split}\right.	
\end{equation}
Hence,
\begin{equation}\begin{split}
d^j(X,Z')&=\sum_{i=1}^{\xi+\alpha_2}|x_i-z_i|+\sum_{i=\xi+\alpha_2+1}^{p-1} |x_i-z_i|+\sum_{i=p}^j(z_i-x_i)\\\\
&=\sum_{i=1}^{\xi+\alpha_2}|x_i-z_i|+\sum_{i=\xi+\alpha_2+1}^{\xi+\alpha_2+\eta} (z_i-x_i)+\sum_{i=\xi+\alpha_2+\eta+1}^{p-1} (z_i-x_i)+\sum_{i=p}^j(z_i-f_i)+\sum_{i=p}^j(f_i-x_i)\\\\
&=\sum_{i=1}^{\xi+\alpha_2}|x_i-f_i|+\sum_{i=\xi+\alpha_2+1}^{\xi+\alpha_2+\eta}(z_i-f_i)+\sum_{i=\xi+\alpha_2+1}^{\xi+\alpha_2+\eta}(f_i-x_i)+\sum_{i=\xi+\alpha_2+\eta+1}^{p-1}(z_i-f_i)\\
&\;\;\;\;\;\;\;\;\;\;\;\;\;\;\;\;\;+\sum_{i=\xi+\alpha_2+\eta+1}^{p-1} (f_i-x_i)+\sum_{i=p}^j(z_i-f_i)+\sum_{i=p}^j(f_i-x_i)
\end{split}
\end{equation}
\begin{equation*}
\begin{split}
&=\sum_{i=1}^{\xi+\alpha_2}|x_i-f_i|+\sum_{i=\xi+\alpha_2+1}^{\xi+\alpha_2+\eta}|f_i-x_i|+\sum_{i=\xi+\alpha_2+\eta+1}^{p-1}|f_i-x_i|+\sum_{i=p}^{j} |f_i-x_i|\\
&\;\;\;\;\;\;\;\;\;\;\;\;\;\;\;\;\;+\sum_{i=\xi+\alpha_2+1}^{\xi+\alpha_2+\eta} |z_i-f_i|+\sum_{i=\xi+\alpha_2+\eta+1}^{p-1}|z_i-f_i|+\sum_{i=p}^j|z_i-f_i|\\\\
&=d^j(X,X_{\bf f})+\sum_{i=\xi+\alpha_2+1}^{\xi+\alpha_2+\eta} |z_i-f_i|+\sum_{i=\xi+\alpha_2+\eta+1}^{p-1}|z_i-f_i|+\sum_{i=p}^j|z_i-f_i|\\\\
&\geq d^j(X,X_{\bf f}) + \sum_{i=p}^j|z_i-f_i|\\
&\geq d^j(X,X_{\bf f})+ (j-p+1)= d^j(X,X_{\bf f}) +j_c
\end{split}
\end{equation*}
Moreover, since ${\bf e}(i)\neq{\bf f}(i)$ only if $i=\ell-1,\ell$, this implies
\begin{equation}
d^j(X_{\bf e},X_{\bf f})=|{\bf f}(\ell)-{\bf e}(\ell)|\leq j_c\:.
\end{equation}
We therefore get
\begin{equation}
d^j(X,X_{\bf e})\leq d^j(X,X_{\bf f})+d^j(X_{\bf f},X_{\bf e})\leq d^j(X,X_{\bf f})+j_c\leq d^j(X,Z')\;,
\end{equation}
and thus, we have shown \eqref{eq:distest1} for Case 2.
\end{itemize}
This finishes the proof.
\end{proof}


\begin{thebibliography}{A}
\bibitem{ARFS} H.\ Abdul-Rahman, C.\ Fischbacher and G.\ Stolz: Entanglement bounds in the XXZ quantum spin chain, Ann.\ Henri Poincar\'e {\bf{21}}, 2327-2366, (2020).
\bibitem{ART} L.\ Adame, L.\ M.\ Rivera and A.\ Trujillo-Negrete: Hamiltonicity of the Double Vertex Graph and the Complete Double Vertex Graph of some Join Graphs (2020). \emph{arXiv:2007.00115}.
\bibitem{BW17} V.\ Beaud and S.\ Warzel: Low-energy Fock space localization for attractive hard-core particles in disorder, Ann.\ Henri Poincar\' e {\bf{18}}, 3143-3166 (2017).
\bibitem{BW18} V.\ Beaud and S.\ Warzel: Bounds on the entanglement entropy of droplet states in the XXZ spin chain, J.\ Math. Phys.\ {\bf{59}}, 012109 (2018).
\bibitem{Carballosa} W.\ Carballosa, R.\ Fabila-Monroy, J.\ Lea\~{n}os and L.\ M.\ Rivera, {\it Regularity and planarity of token graphs}, Discuss.~Math.~Graph~Theory~{\bf 37} (2017), 573--586
\bibitem{EKS} A.~Elgart, A.~Klein and G.~Stolz, \emph{Many-body localization in the droplet spectrum of the random XXZ quantum spin chain}, J. Funct. Anal. \textbf{275} (2018), 211--258

\bibitem{EKS2017b} A.~Elgart, A.~Klein and G.~Stolz, \emph{Manifestations of dynamical localization in the disordered XXZ spin chain}, Comm. Math. Phys. \textbf{361} (2018), 1083--1113



\bibitem{F} C.\ Fischbacher: A Schr\"odinger Operator Approach to Higher Spin XXZ Systems on General Graphs, { Analytic trends in Mathematical physics, Contemp. Math.} {\bf 741}, 83--103 (2020).
\bibitem{FSch} C.\ Fischbacher and R.\ Schulte: Lower Bound to the Entanglement Entropy of the XXZ Spin Ring, \emph{arXiv:2007.00735}
\bibitem{FS18} C.\ Fischbacher and G.\ Stolz: Droplet states in quantum XXZ spin systems on general graphs, J.\ Math.\ Phys.\  {\bf{59}}, 051901 (2018).

\bibitem{Abel} A.~Klein, \emph{private communication}.
\bibitem{LeschkeSobolevSpitzer14}
H.~Leschke, A.~V. Sobolev and W.~Spitzer, Scaling of {R}\'enyi entanglement
entropies of the free {F}ermi-gas ground state: A rigorous proof, {\em Phys.
	Rev. Lett.} {\bf 112}, 160403--1--5 (2014).

\bibitem{LeschkeSobolevSpitzer17}
H.~Leschke, A.~V. Sobolev and W.~Spitzer, Trace formulas for {W}iener-{H}opf
operators with applications to entropies of free fermionic equilibrium
states, {\em J. Funct. Anal.} {\bf 273}, 1049--1094 (2017).
\bibitem{MNSS} J.~Mulherkar, B.~Nachtergaele, R.~Sims and S.~Starr, \emph{Isolated Eigenvalues of the Ferromagnetic Spin-J XXZ Chain with Kink Boundary Conditions},  J. Stat. Mech. (2008) P01016
\bibitem{Mulleretal}
P.~M\"uller, L.~Pastur and R.~Schulte, How much delocalisation is needed for an
enhanced area law of the entanglement entropy?, {\em Commun. Math. Phys.}
{\bf 376}, 649--679 (2020).

\bibitem{mller2020stability}
P.~M\"uller and R.~Schulte, Stability of the enhanced area law of the
entanglement entropy (2020), \emph{arXiv:2004.02700}.
%\bibitem{NS} B.~Nachtergaele, S.~Starr: Droplet States in the XXZ Heisenberg chain. Commun.\ Math.\ Phys.\ {\bf 218} (2001), 569--607 

\bibitem{Ouyang} Y. Ouyang, Computing spectral bounds of the Heisenberg ferromagnet from geometric consideratios. J.\ Math.\ Phys.\ {\bf 60}(7), 071901 (2019).

\bibitem{PfirschSobolev18}
B.~Pfirsch and A.~V. Sobolev, Formulas of {Szeg\H{o}} type for the periodic
{S}chr\"{o}dinger operator, {\em Commun. Math. Phys.} {\bf 358}, 675--704
(2018).
\bibitem{Rivera} L.\ M.\ Rivera and A.\ L.\ Trujillo-Negrete, {\it Hamiltonicity of token graphs of fan graphs}, Art~Discr.~Appl.~Math., {\bf 1}, \#P07 (2018).

\bibitem{Starr} S.\ Starr, Some properties for the low-lying spectrum of the ferromagnetic, quantum XXZ spin system, PhD Thesis, UC Davis, 2001.

\bibitem{StolzProc} G.~Stolz, Aspects of the Mathematical Theory of Disordered Quantum Spin Chains, { Analytic trends in Mathematical physics, Contemp. Math.} {\bf 741}, 163--197 (2020).

\bibitem{Weidmann} J.\ Weidmann: Lineare Operatoren in Hilbertr\"aumen, Teil I Grundlagen, Verlag B.G. Teubner, Stuttgart/Leipzig/Wiesbaden, 2000 (German).

\bibitem{Wolf:2006ek}
M.~M. Wolf, Violation of the entropic area law for fermions, {\em Phys. Rev.
	Lett.} {\bf 96}, 010404--1--4 (2006).


\end{thebibliography}
\end{document}